\newcommand{\QCB}{\mathsf{QCB}}
\newcommand{\QCBZ}{\mathsf{QCB_0}}
\newcommand{\QN}{\mathsf{QN}}
\newcommand{\Top}{\mathsf{Top}}
\newcommand{\Seq}{\mathsf{Seq}}
\newcommand{\Lim}{\mathsf{Lim}}
\newcommand{\oLim}{\omega\mathsf{Lim}}
\newcommand{\oWLim}{\omega\mathsf{WLim}}
\newcommand{\oPFil}{\omega\mathsf{PFil}}
\newcommand{\Pomega}{{P\hspace*{-1pt}\omega}}
\newcommand{\parto}{\dashrightarrow}  
\newcommand{\Nomega}{\IN^\IN}
\newcommand{\Somega}{\Sigma^\IN}
\newcommand{\sdrep}{\varrho_{\mathrm{sd}}}
\newcommand{\sdrepIZ}{\widetilde{\rho_{\mathrm{sd}}}}
\newcommand{\decrep}{\varrho_{\mathrm{dec}}}
\newcommand{\RepN}{\mathsf{Rep}}
\newcommand{\RepEff}{\mathsf{Rep}_\mathrm{eff}}
\newcommand{\MRep}{\mathsf{MRep}}
\newcommand{\EffQCB}{\mathsf{EffQCB}}
\newcommand{\EffLim}{\mathsf{EffLim}}
\newcommand{\kTop}{{\mathsf{kTop}}}
\newcommand{\Ccg}{{\mathsf{CCG}}}
\newcommand{\Equ}{{\mathsf{Equ}}}
\newcommand{\oEqu}{{\omega\mathsf{Equ}}}
\newcommand{\Mod}{\mathsf{Mod}}
\newcommand{\w}[1]{{\texttt{#1}}}
\newcommand{\To}[1]{\mathrel{\rightarrow_{#1}}}
\newcommand{\seq}{{\mathit{seq}}}
\newcommand{\En}{\mathit{En}}
\newcommand{\leqcp}{\mathrel{\leq_{\mathrm{cp}}}}
\newcommand{\leqt}{\mathrel{\leq_{\mathrm{t}}}}
\newcommand{\equivcp}{\mathrel{\equiv_{\mathrm{cp}}}}
\newcommand{\equivt}{\mathrel{\equiv_{\mathrm{t}}}}
\newcommand{\dom}{\mathit{dom}}
\newcommand{\Cls}{\mathit{Cls}}
\newcommand{\Graph}{\mathit{Graph}}
\newcommand{\Tbigcup}{\mathop{\textstyle\bigcup}}
\newcommand{\Dsum}{\mathop{\displaystyle\sum}}
\newcommand{\mybigboxtimes}{\mathop{\boxtimes}}
\newcommand{\mybigboxplus}{\mathop{\boxplus}}
\newcommand{\mybigwedge}{\mathop{\wedge}}
\newcommand{\etaFww}{{\bar{\eta}}}
\newcommand{\fst}{\mathit{fst}}
\newcommand{\snd}{\mathit{snd}}
\newcommand{\calAm}{{\mathcal{A}_-}}
\newcommand{\calAp}{{\mathcal{A}_+}}
\newcommand{\calB}{\mathcal{B}}
\newcommand{\calO}{\mathcal{O}}
\newcommand{\calT}{\mathcal{T}}
\newcommand{\calV}{\mathcal{V}}
\newcommand{\setA}{\mathit{A}}
\newcommand{\setC}{\mathit{C}}
\newcommand{\setO}{\mathit{O}}
\newcommand{\IL}{\mathbb{L}}
\newcommand{\INu}{{\mathbb{N}_\infty}}
\newcommand{\IN}{\mathbb{N}}
\newcommand{\IQ}{\mathbb{Q}}
\newcommand{\IS}{\mathbb{S}} 
\newcommand{\IR}{\mathbb{R}}
\newcommand{\IZ}{\mathbb{Z}}
\newcommand{\cf}{\mathit{cf}}
\newcommand{\Kup}{{\mathcal{K}}}   
\newcommand{\KViet}{{\mathcal{K}_\mathrm{Viet}}}
\newcommand{\KQ}{\mathsf{KQ}}
\newcommand{\WA}{{\mathsf{A}}} 
\newcommand{\WK}{{\mathsf{K}}}
\newcommand{\WW}{{\mathsf{W}}}
\newcommand{\WX}{{\mathsf{X}}}
\newcommand{\WY}{{\mathsf{Y}}}
\newcommand{\WZ}{{\mathsf{Z}}}
\newcommand{\PP}{{\mathbf{P}}}
\newcommand{\XX}{{\mathbf{X}}}
\newcommand{\YY}{{\mathbf{Y}}}
\newcommand{\ZZ}{{\mathbf{Z}}}
\newtheorem{theorem}{Theorem}[section]
\newtheorem{proposition}[theorem]{Proposition}
\newtheorem{definition}[theorem]{Definition}
\newtheorem{example}[theorem]{Example}
\newenvironment{proof}{\pagebreak[3]\noindent\emph{Proof.}}
{\par\hspace*{\fill}$\square$\pagebreak[3]\medskip}
\newenvironment{proofof}[1]{\pagebreak[3]\noindent\emph{Proof of #1.}}
{\pagebreak[3]\medskip}
\begin{document}

\title{Admissibly Represented Spaces and Qcb-Spaces\footnote{This survey is accepted as a chapter in the \emph{Handbook of Computability and Complexity in \newline Analysis} \cite{BH:handbook},
to be published by the Springer-Verlag (www.springer.com).}}
\author{Matthias Schr{\"o}der}
\date{\quad}

\maketitle

\abstract{%
A basic concept of Type Two Theory of Effectivity (TTE) is the notion of an admissibly represented space.
Admissibly represented spaces are closely related to qcb-spaces.
The latter form a well-behaved subclass of topological spaces.
We give a survey of basic facts about Type Two Theory of Effectivity, 
admissibly represented spaces, qcb-spaces and effective qcb-spaces.
Moreover, we discuss the relationship of qcb-spaces to other categories relevant to Computable Analysis.
}



\section{Introduction}
\label{sec:intro}
Computable Analysis investigates computability on real numbers and related spaces.
Type Two Theory of Effectivity (TTE) constitutes a popular approach to Computable Analysis,
providing a rigorous computational framework for non-discrete spaces with cardinality of the continuum (cf.\ \cite{Wei87,Wei00}). 
The basic tool of this framework are representations. 
A representation equips the objects of a given space with names,
giving rise to the concept of a represented space.
Computable functions between represented spaces are those which are realized by a computable function on the names.
The ensuing category of represented spaces and computable functions enjoys excellent closure properties.

Any represented space is equipped with a natural topology, turning it into a qcb-space.
Qcb-spaces form a subclass of topological spaces with a remarkably rich structure.
For example it is cartesian closed, hence products and function spaces can be formed.

Admissibility is a notion of topological well-behavedness for representations.
The category of admissibly represented spaces and continuously realizable functions
is equivalent to the category $\QCBZ$ of qcb-spaces with the $T_0$-property.
In particular every qcb$_0$-space can be equipped with an admissible representation. 
Qcb$_0$-spaces form exactly the class of topological spaces which can be handled appropriately by the framework of Type Two Theory of Effectivity.

An important subcategory of admissibly represented spaces is the class of effective qcb-spaces. 
These are qcb-spaces endowed with representations that are admissible in a computable sense.
Computable metric spaces endowed with their Cauchy representations yield important examples.
The category $\EffQCB$ of effective qcb-spaces and computable functions has nice closure properties. 
Therefore Type Two Theory of Effectivity is applicable to a large class of important spaces occurring in Analysis.

This paper is organised as follows.
In Section~\ref{sec:Repspaces} we recall basic notions of Type Two Theory of Effectivity
with emphasis on represented spaces and their closure properties.
Section~\ref{sec:admissibility} is devoted to the notion of an admissible representation for general topological spaces.
In Section~\ref{sec:QCB} we present and discuss qcb-spaces.
Furthermore we summarise basic effectivity properties of effective qcb-spaces.
Finally we compare in Section~\ref{sec:relationship:QCB} the category $\QCB$ of qcb-spaces with other categories relevant to Computable Analysis.


\pagebreak[3]
\section{Represented Spaces}
\label{sec:Repspaces}

We present basic concepts of 
Type Two Theory of Effectivity\index{Type Two Theory of Effectivity}(TTE),
as developed by K.~Weihrauch (\cite{Wei87,Wei00}).
TTE offers a model for rigorous computation for uncountable spaces occurring in Analysis like the real numbers, separable Banach spaces and Silva spaces.
The key idea is the notion of represented spaces and of computable functions between them.
We collect important properties of the ensuing categories $\RepN$ and $\RepEff$ of represented spaces.
Finally we mention the notion of multirepresentations generalising ordinary representations.


\subsection{Representations}
\label{sub:Representations}

In classical recursion theory, computability on a countable space is defined by lifting the well-established notion of a computable natural number function via an appropriate numbering of $X$.
A \emph{numbering} $\nu$ of a set $X$ is a partial surjection $\nu\colon \IN \parto X$.
Partiality means that not every number is a representative of an element in $X$.
For uncountable sets $X$ this approach fails, because the large cardinality excludes the existence of a numbering.
The basic idea of the Type Two Theory of Effectivity (TTE) is to encode the objects of $X$ by infinite words over a finite or countably infinite alphabet $\Sigma$,
i.e.\ by elements of the set $\Somega=\{ p\,|\, p\colon \IN \to \Sigma\}$, and to compute on these names.
The corresponding partial surjection $\delta \colon \Somega \parto X$ mapping every name $p \in \dom(\delta)$ to the encoded element $\delta(p)$ is called a \emph{representation}\index{representation} of $X$.
If $\delta(p)=x$ then $p$ is called a \emph{name} or \emph{representative} of the element $x \in X$.
 
\pagebreak[3]
\begin{example}[The decimal representation] \rm
 \label{ex:decrep}
  A well-known representation of the set $\IR$ of real numbers is the \emph{decimal representation} $\decrep$.
  As its alphabet $\Sigma_{\mathrm{dec}}$ it uses $\{\w{0},\dotsc,\w{9},\w{-},\w{.}\}$.
  It is defined by
  \begin{alignat*}{2}
   &\decrep(a_{-k} \dotsc a_0 \w{.} a_1 a_2 \dotsc)
   &&:= +\Dsum\limits_{i=-k}^\infty \dfrac{ a_i}{10^{i}} \,,
 \\ 
   &\decrep(\w{-}
       a_{-k} \ldots a_0 \texttt{.} a_1 a_2 \dotsc)
   &&:= -\Dsum\limits_{i=-k}^\infty \dfrac{a_i}{10^{i}}
  \end{alignat*}
  for all $k \in \IN$ and all digits
  $a_{-k},a_{-k+1},\dotsc \in \{ \w{0},\dotsc,\w{9} \}$.  
\end{example}
\par
For establishing a simple framework of computability for general spaces
it is reasonable to restrict oneself to a single, countably infinite alphabet.
We choose the set $\IN$ of natural numbers as our alphabet.
So we use the \emph{Baire space} $\Nomega$ as our space of representatives. 
One might also use other spaces of representatives.
For example, representations of real numbers by elements of $\IZ^\IN$
were used by J.~Hauck in \cite{Hau73} and implicitly by A.~Grzegorczyk in \cite{Grz57}.
Moreover, J.~Hauck employed representations based on $\IZ^\IN$ to introduce computability
for functions on topological spaces with a recursive basis in \cite{Hau81}.
M.~Escard{\'o} used domains for his approach to higher-order exact real number computation \cite{Esc96}.
%
%
\subsection{The Baire Space}
\label{sub:Bairespace}

The Baire space\index{Baire space} $\Nomega$ 
is a topological space that has all sequences of natural numbers as its carrier set.
Its topology is called \emph{Baire topology} and is generated by the basic opens
\[
 u\Nomega:=\{ p \in \Nomega \,|\, \text{$u$ is a prefix of $p$}\}\,,
\]
where $u$ ranges over the set $\IN^*$ of all finite strings over $\IN$.
This base is countable, 
hence $\Nomega$ is a countably-based (= second-countable) topological space.
Any open set has the form
\[
  W\Nomega:=\{ p \in \Nomega \,|\, \text{$p$ has a prefix in $W$}\}\,,
\]
where $W$ is an arbitrary subset of $\IN^\ast$. 
The Baire space is separable: the elements $u\w{0}^\omega$, 
where $u\w{0}^\omega$ denotes the sequence with prefix $u \in \IN^*$ followed by infinitely many $\w{0}$'s, form a countable dense subset.

The basic open sets $u\Nomega$ are not only open, but also closed.
Such sets are usually called \emph{clopen} (abbreviating closed-and-open).
Topological spaces having a basis consisting of clopen sets are called \emph{zero-dimensional}.

The Baire topology is induced by the metric $d_{\Nomega}\colon \Nomega \times \Nomega \to \IR_{\geq 0}$ defined by
\[
  d_{\Nomega}(p,p)=0 
  \;\;\text{and}\;\;
  d_{\Nomega}(p,q):=2^{-\textstyle\min\{i \in \IN \,|\, p(i) \neq q(i)\}}
\]
for all $p \neq q$ in $\Nomega$.
Note that topological continuity coincides with $\epsilon$-$\delta$-continuity for functions between metrisable spaces.
The metric $d_{\Nomega}$ is complete, meaning that every Cauchy sequence $(p_n)_n$
has a limit $p:=\lim\nolimits_{n \to \infty} p_n$.
Therefore $\Nomega$ is a separable completely metrisable topological space.
Such spaces are referred to as \index{Polish space}\emph{Polish spaces}. 

The Baire space $\Nomega$ is homeomorphic to its own binary product $\Nomega \times \Nomega$ 
and to the countable product $\prod_{i \in \IN} \Nomega$ of itself. 
A pairing function
$\langle \cdot,\cdot \rangle\colon \Nomega \times \Nomega \to \Nomega$ 
for $\Nomega$ is given by
\[
  \langle p,q\rangle(2n)=p(n) \;\;\text{and}\;\; \langle p,q\rangle(2n+1):=q(n)
  \,.
\]
We denote the projections of its inverse by $\pi_1,\pi_2\colon \Nomega \to \Nomega$.
Moreover, a countable tupling function
$\langle \cdot \rangle\colon \prod_{i\in\IN}\Nomega \to \Nomega$ is defined by
\[
 \langle p_0,p_1,p_2,\dotsc \rangle(k):=p_{\fst(k)}(\snd(k)),
\]
where $\fst,\snd\colon \IN \to \IN$ are the computable projections of a canonical computable pairing function on $\IN$.
We denote by $\pi_{\infty,i} \colon \Nomega \to \Nomega$ the $i$-th projection of the inverse.
The pairing function $\langle \cdot,\cdot \rangle$ on $\Nomega$ as well as the projection functions $\pi_1,\pi_2, \pi_{\infty,0}, \pi_{\infty,1},\dotsc $ are computable in the sense of Subsection~\ref{sub:ComputabilityBaireSpace}. 
%
%
\subsection{Computability on the Baire Space}
\label{sub:ComputabilityBaireSpace}

Computability for functions on the Baire space $\Nomega$ can be defined either by computable monotone string functions 
or by Type-2 machines. 
We present the first option.

A string function $h\colon \IN^* \to \IN^*$ is called \emph{monotone},
if it is monotone with respect to the prefix order $\sqsubseteq$
on the set $\IN^\infty:=\IN^* \cup \Nomega$ of finite and countably infinite strings over $\IN$.
A monotone string function $h\colon \IN^* \to \IN^*$ generates
a partial function $h^\omega$ on the Baire space defined by
\begin{align*}
  &\dom(h^\omega):=\big\{ p \in \Nomega \,\big|\,
     \{ h(u) \,|\, \text{$u$ is a prefix of $p$}\} \text{ is infinite} \big\}
  \\   
  &h^\omega(p):= \sup\nolimits_\sqsubseteq
   \big\{ h(u) \,\big|\, \text{$u$ is a prefix of $p$} \big\}
\end{align*}
for all $p \in \dom(h^\omega)$.
For example, the identity function on $\IN^\ast$ generates the identity function on $\Nomega$.

A partial function $g\colon \Nomega \parto \Nomega$ is called \emph{computable},
if there is a computable monotone function $h\colon \IN^* \to \IN^*$ generating $g$, i.e. $g=h^\omega$. 
Computability of multivariate functions on $\Nomega$ is defined analogously.
A \emph{computable element} of $\Nomega$ is just some $p \in \Nomega$ which is computable as a natural number function in sense of discrete computability theory.

\pagebreak[3]
\begin{example} \rm
 The pairing function $\langle \cdot,\cdot \rangle$ and the projection functions $\pi_1,\pi_2,\pi_{\infty,i}$ 
 from Subsection~\ref{sub:Bairespace} are computable.
\end{example}

Computable functions on $\Nomega$ are closed under composition.

\pagebreak[3]
\begin{proposition}[Composition preserves computability] \quad
\label{p:composition:computable}
\begin{enumerate}\vspace*{-0.5ex}
 \item 
   Let $f \colon (\Nomega)^k \parto \Nomega$ and $g_1,\dotsc,g_k \colon (\Nomega)^\ell \parto \Nomega$
   be computable functions.
   Then the composition $f \circ (g_1 \times\dotsc\times g_k)$ is computable.
 \item
   Let $g$ be a partial computable function on $\Nomega$ and let $p \in \dom(g)$ be a computable element of $\Nomega$.
   Then $g(p)$ is a computable element of $\Nomega$.
\end{enumerate}   
\end{proposition}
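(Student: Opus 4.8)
The plan is to realize the composite by plugging realizers into one another, with a small preprocessing step that is needed to get the domain exactly right. Fix a computable monotone string function $h_f\colon(\IN^*)^k\to\IN^*$ with $h_f^\omega=f$ and, for each $i$, a computable monotone $h_{g_i}\colon(\IN^*)^\ell\to\IN^*$ with $h_{g_i}^\omega=g_i$. First I would replace $h_f$ by the computable monotone function $\tilde h_f$ given by $\tilde h_f(v_1,\dots,v_k):=h_f(v_1',\dots,v_k')$, where $v_i'$ is the prefix of $v_i$ of length $\min_{1\le j\le k}|v_j|$; a routine check (the common minimal length tends to infinity along the prefixes of any tuple in $(\Nomega)^k$) shows $\tilde h_f^\omega=f$. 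The point of this truncation is that $\tilde h_f$ emits only boundedly much output whenever one of its arguments stays bounded. Now put $h(\vec u):=\tilde h_f\bigl(h_{g_1}(\vec u),\dots,h_{g_k}(\vec u)\bigr)$; this $h$ is again computable and monotone, being built by composition from computable monotone string functions.

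The heart of the argument is to check $h^\omega=f\circ(g_1\times\dots\times g_k)$, which I would do by distinguishing three cases for a tuple $\vec p\in(\Nomega)^\ell$. (i) If $\vec p$ lies in the domain of the composite, then for each $i$ the strings $h_{g_i}(\vec u)$ with $\vec u\sqsubseteq\vec p$ are prefixes of $g_i(\vec p)$ of unbounded length, so $\tilde h_f$ is fed prefixes of the tuple $(g_1(\vec p),\dots,g_k(\vec p))\in\dom(f)$ of lengths tending to infinity on every coordinate (so the truncation is harmless), whence $h^\omega(\vec p)=\tilde h_f^\omega(g_1(\vec p),\dots,g_k(\vec p))=f(g_1(\vec p),\dots,g_k(\vec p))$. (ii) If some $g_i(\vec p)$ is undefined, then $\{h_{g_i}(\vec u)\mid\vec u\sqsubseteq\vec p\}$ is finite, so the $i$-th argument handed to $\tilde h_f$ stays bounded; by the truncation all arguments are then cut to bounded length, leaving only finitely many possibilities, so $\{h(\vec u)\mid\vec u\sqsubseteq\vec p\}$ is finite and $h^\omega(\vec p)$ is undefined. (iii) If all $g_i(\vec p)$ are defined but $(g_1(\vec p),\dots,g_k(\vec p))\notin\dom(f)$, then $h_f$, hence $\tilde h_f$, produces only finitely many distinct outputs on these prefixes, so again $h^\omega(\vec p)$ is undefined. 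In each case $h^\omega$ agrees with the composite, which proves (1).

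For (2), which is really the degenerate case $\ell=0$ of (1), I would argue directly. Write $g=h^\omega$ with $h\colon\IN^*\to\IN^*$ computable and monotone, and let $p\in\dom(g)$ be computable as a natural number function. To compute $g(p)(n)$: enumerate the prefixes $u$ of $p$ (possible since $p$ is computable), evaluate $h(u)$, and halt as soon as $|h(u)|>n$, outputting $h(u)(n)$. The search terminates because $p\in\dom(h^\omega)$ forces $\{h(u)\mid u\sqsubseteq p\}$ to be infinite and hence of unbounded length, and the output is correct since $g(p)=\sup_{\sqsubseteq}\{h(u)\mid u\sqsubseteq p\}$.

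I expect the only genuinely delicate point to be the domain bookkeeping in case (ii): with the naive composite $h_f(h_{g_1}(\vec u),\dots,h_{g_k}(\vec u))$ the resulting partial function can properly extend $f\circ(g_1\times\dots\times g_k)$ — for instance when $f$ is a projection and some $g_i$ has a strictly smaller domain — so it would fail to equal the composite. The truncation step above (equivalently, routing the $k$-ary realizers through the tupling $\langle\cdot,\dots,\cdot\rangle$ of Subsection~\ref{sub:Bairespace}, which synchronises the coordinates) is exactly what repairs this. Everything else — monotonicity and computability of $h$, and the verification that $\tilde h_f^\omega=f$ — is routine and inherited from the constituent string functions.
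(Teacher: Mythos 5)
Your proof is correct. The paper itself gives no argument here but only cites \cite[Theorems 2.1.12 and 2.1.13]{Wei00}, and what you have written is essentially the standard textbook proof in the monotone-string-function formulation: compose the generating string functions and verify that the generated function has exactly the right domain. You correctly isolate the one genuinely delicate point --- that the naive composite $h_f(h_{g_1}(\vec u),\dotsc,h_{g_k}(\vec u))$ can be defined on tuples $\vec p$ with some $g_i(\vec p)$ undefined (e.g.\ when $f$ is a projection) --- and the synchronising truncation repairs it exactly as needed; the remaining verifications (monotonicity, cofinality of the synchronised prefixes, and the termination argument in part~(2)) are routine and correctly carried out.
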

The following key observation links computability theory with topology:
any computable function on the Baire space is continuous.

\pagebreak[3]
\begin{proposition}[Computability implies continuity] 
\label{p:ComputableImpliesContinuous}
 \quad\\ \noindent
 Any computable function $g \colon \Nomega \times\dotsc\times \Nomega \parto \Nomega$ is topologically continuous.
\end{proposition}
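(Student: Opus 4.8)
The plan is to reduce everything to partial functions of one variable and then to observe that continuity is already forced by the \emph{monotonicity} of the generating string function, computability playing no further role. Recall that a partial map $g\colon \Nomega \parto \Nomega$ is continuous iff $g^{-1}(V)$ is open in the subspace topology on $\dom(g)$ for every open $V \subseteq \Nomega$; and since the sets $u\Nomega$ with $u \in \IN^*$ form a base of the Baire topology, it suffices to treat the case $V = u\Nomega$.

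So let $g\colon \Nomega \parto \Nomega$ be computable and pick a monotone string function $h\colon \IN^* \to \IN^*$ with $g = h^\omega$ (such an $h$ exists by the very definition of computability; it is even computable, but this is not needed). Fix $u \in \IN^*$ and a point $p \in \dom(h^\omega)$ with $h^\omega(p) \in u\Nomega$, i.e.\ $u \sqsubseteq h^\omega(p) = \sup\nolimits_\sqsubseteq\{ h(w) \mid w \sqsubseteq p \}$. Since $p \in \dom(h^\omega)$, this supremum is an infinite string, so the $\sqsubseteq$-chain $\{ h(w) \mid w \sqsubseteq p \}$ has a member $h(w_0)$ of length $\geq |u|$, and then $u \sqsubseteq h(w_0)$ because both are prefixes of $h^\omega(p)$. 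Now take any $q \in \dom(h^\omega)$ with $w_0 \sqsubseteq q$: then $h(w_0)$ is one of the strings $h(w)$ with $w \sqsubseteq q$, hence $h(w_0) \sqsubseteq h^\omega(q)$, so $u \sqsubseteq h^\omega(q)$, i.e.\ $h^\omega(q) \in u\Nomega$. Thus $w_0\Nomega \cap \dom(h^\omega)$ is a relatively open neighbourhood of $p$ contained in $(h^\omega)^{-1}(u\Nomega)$. As $p$ was arbitrary, $(h^\omega)^{-1}(u\Nomega)$ is open in $\dom(h^\omega)$, so $g = h^\omega$ is continuous.

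For $k$ arguments one runs the identical argument: computability provides a monotone string function on $k$-tuples of finite strings generating $g$ analogously, and one checks openness of the preimage of $u\Nomega$ at a point $(p_1,\dots,p_k)$ by producing finite prefixes $w_i \sqsubseteq p_i$ with $u \sqsubseteq h(w_1,\dots,w_k)$, so that $w_1\Nomega \times\dots\times w_k\Nomega$ is a suitable neighbourhood. Alternatively, one precomposes with the computable tupling homeomorphism $\Nomega \times\dots\times\Nomega \to \Nomega$ (whose inverse is computable as well, by Proposition~\ref{p:composition:computable}) and invokes that homeomorphisms preserve continuity. The one point demanding care is the handling of partiality — continuity refers to the subspace topology on $\dom(g)$ — while computability of $h$ is never used beyond guaranteeing its existence; there is no genuine obstacle.
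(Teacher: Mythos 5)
Your proof is correct and follows essentially the same route as the paper's: both extract a finite prefix $w_0 \sqsubseteq p$ with $u \sqsubseteq h(w_0)$ and observe that monotonicity forces $h^\omega[w_0\Nomega \cap \dom(h^\omega)] \subseteq u\Nomega$. You are somewhat more explicit about why such a prefix exists and about the multivariate case (which the paper simply reduces to the unary one), but the underlying argument is identical.
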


\begin{proof}
 We restrict ourselves to the unary case.
 So there is a computable monotone function $h\colon \IN^* \to \IN^*$ such that $h^\omega=g$.
 Let $w\Nomega$ be a basic open (see Subsection~\ref{sub:Bairespace})
 and let $p \in \dom(g)$ with $p \in w\Nomega$.
 Then there is some $u \in \IN^*$ such that $u \sqsubseteq p$ and $w  \sqsubseteq h(u) \sqsubseteq g(p)$.
 This implies that for all $q \in \dom(g)$ with $u \sqsubseteq q$ we have $g(q) \in w\Nomega$.
 Therefore $p \in u\Nomega \cap \dom(g) \subseteq g^{-1}[w\Nomega]$.
 We conclude that $g$ is topologically continuous.
\end{proof}

\pagebreak[3]
\begin{example}[An incomputable function on $\Nomega$]\rm \quad\\ \noindent 
The test function $\mathit{is\_zero}\colon\Nomega \to \Nomega$ defined by
\[
 \mathit{is\_zero}(p):= \left\{
 \begin{array}{ll}
   \w{0}^\omega & \text{if $p=\w{0}^\omega$}
   \\
   \w{1}^\omega & \text{otherwise}
 \end{array}\right.
\]
is incomputable by being discontinuous.
\end{example}

The set $\mathrm{F}^{\omega\omega}$ of all partial continuous functions on $\IN^\IN$ with a $G_\delta$-domain
can be endowed with an ``effective'' representation $\etaFww$  
which satisfies the computable utm-Theorem, the computable smn-Theorem, and the continuous smn-Theorem.

\pagebreak[3]
\begin{proposition}
\label{p:etaFww}
 There is a representation $\etaFww\colon \Nomega \to \mathrm{F}^{\omega\omega}$ such that:
\begin{enumerate}
  \item 
   The evaluation function $(p,q) \mapsto \etaFww(p)(q)$ is computable.
  \item 
   For every computable function
   $g: \Nomega \times \Nomega \parto \Nomega$
   there is a computable function $s: \Nomega \to \Nomega$ satisfying
   $\etaFww(s(p))(q)= g(p,q)$ for all $p,q \in \Nomega$.
  \item  
   For every partial continuous function
   $g\colon \Nomega \times \Nomega \parto \Nomega$ 
   there is a continuous function $s\colon \Nomega \to \Nomega$
   satisfying $\etaFww(s(p))(q)=g(p,q)$ for all $(p,q) \in \dom(g)$.
\end{enumerate} 
\end{proposition}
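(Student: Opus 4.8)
The plan is to realise $\etaFww$ via codes of monotone string functions and to reduce all three claims to elementary properties of the operator $h\mapsto h^\omega$ from Subsection~\ref{sub:ComputabilityBaireSpace}, plus one representation lemma. I would first fix a standard encoding under which every $p\in\Nomega$ yields a \emph{monotone} string function $h_p\colon\IN^*\to\IN^*$, such that every monotone string function arises as some $h_p$ and such that finite prefixes of $p$ give monotonically growing finite information about the values $h_p(u)$; totality of the decoding is obtained by reading $p$ as a stream of ``commitments'' $u\mapsto v$ and only honouring those that are consistent with the ones already honoured. Then set $\etaFww(p):=(h_p)^\omega$. Since $h_p$ is monotone, $(h_p)^\omega$ is partial, it is continuous by (the proof of) Proposition~\ref{p:ComputableImpliesContinuous}, and its domain
\[
  \dom\big((h_p)^\omega\big)=\bigcap_{n\in\IN}\{\,q\in\Nomega \mid \exists u\sqsubseteq q,\ |h_p(u)|\ge n\,\}
\]
is a countable intersection of open sets, hence $G_\delta$; so $\etaFww(p)\in\mathrm{F}^{\omega\omega}$.

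The one substantial point is surjectivity of $\etaFww$ onto $\mathrm{F}^{\omega\omega}$: the \emph{representation lemma}, that every partial continuous $g\colon\Nomega\parto\Nomega$ with $G_\delta$-domain equals $h^\omega$ for a monotone $h$. I would prove this by the standard construction. Writing $\dom(g)=\bigcap_n W_n\Nomega$ with the $W_n\Nomega$ decreasing and $\varepsilon\in W_0$, put $m(u):=\max\{\,n\le|u|\mid u\text{ has a prefix in }W_n\,\}$; for $u$ with $u\Nomega\cap\dom(g)\ne\emptyset$ let $c(u)$ be the longest common prefix of $g[u\Nomega\cap\dom(g)]$; let $\hat u$ be the longest prefix of $u$ with $\hat u\Nomega\cap\dom(g)\ne\emptyset$; and set $h(u):=c(\hat u)$ truncated to length $m(\hat u)$ (with $h\equiv\varepsilon$ if $\dom(g)=\emptyset$). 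Then $h$ is monotone; for $q\in\dom(g)$ one has $m(u)\to\infty$ and $c(u)\to g(q)$ along the prefixes $u$ of $q$, so $h^\omega(q)=g(q)$; and for $q\notin\dom(g)$ some level is never certified, so $|h(u)|$ stays bounded along prefixes of $q$ and $q\notin\dom(h^\omega)$. The delicate part — and the main obstacle of the whole proposition — is the bookkeeping (the interplay of $m(u)$, the truncation, and the passage to $\hat u$) that makes $h$ monotone \emph{and} makes $\dom(h^\omega)$ come out \emph{exactly} $\dom(g)$ rather than some larger $G_\delta$ set; everything else is routine.

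For the computable utm-property~(1) I would exhibit a computable monotone $H\colon\IN^*\to\IN^*$ generating $(p,q)\mapsto\etaFww(p)(q)$: on a finite prefix of $\langle p,q\rangle$ it reads the commitments of $p$ seen so far, forms the current finite approximation to $h_p$, applies it to the longest prefix $q|_j$ of $q$ it can then evaluate, and outputs the resulting string; these outputs increase with supremum $\sup_j h_p(q|_j)=(h_p)^\omega(q)$ exactly when $q\in\dom(\etaFww(p))$, which is what is required.

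For the smn-properties~(2) and~(3), let $g\colon\Nomega\times\Nomega\parto\Nomega$ be given. Using the definition of computability (resp.\ the ``extension half'' of the representation lemma above, which involves no domain subtleties) pick a monotone string function $\hat G\colon\IN^*\to\IN^*$ with $\hat G^\omega(\langle p,q\rangle)=g(p,q)$ for all $p,q$ when $g$ is computable, and with $\hat G^\omega$ merely \emph{extending} $(p,q)\mapsto g(p,q)$ in general; $\hat G$ is computable whenever $g$ is. Let $s(p)$ be a code of the string function $u\mapsto\hat G(\langle p|_{|u|},u\rangle)$, where $\langle\cdot,\cdot\rangle$ denotes the evident finite analogue of the pairing of Subsection~\ref{sub:Bairespace}. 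This string function is already monotone, because $\hat G$ is monotone and $u\sqsubseteq u'$ forces $\langle p|_{|u|},u\rangle\sqsubseteq\langle p|_{|u'|},u'\rangle$; hence $h_{s(p)}$ equals it, and since $\langle p|_n,q|_n\rangle$ is the length-$2n$ prefix of $\langle p,q\rangle$ we get $\etaFww(s(p))(q)=\hat G^\omega(\langle p,q\rangle)$. This equals $g(p,q)$ for all $p,q$ in case~(2), and agrees with $g(p,q)$ on $\dom(g)$ in case~(3). Finally $s=S^\omega$ for a monotone string function $S$ — to emit finitely much of $s(p)$ only finitely many values $\hat G(\langle p|_{|u|},u\rangle)$ are needed, each depending on a finite prefix of $p$ — so $s$ is continuous by Proposition~\ref{p:ComputableImpliesContinuous}, and computable when $\hat G$ is.
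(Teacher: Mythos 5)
The paper offers no proof of this proposition, only a pointer to explicit constructions in the cited literature, and your construction — coding monotone string functions by elements of $\Nomega$, setting $\etaFww(p):=(h_p)^\omega$, and establishing surjectivity via the representation lemma that every partial continuous function with $G_\delta$ domain is generated by a monotone string function — is precisely the standard construction found there. The argument is correct, including the two points that are genuinely delicate and that you rightly single out: the domain bookkeeping that makes $\dom(h^\omega)$ come out exactly $\dom(g)$, and the check that the universal machine's domain is exactly $\{(p,q)\mid q\in\dom(\etaFww(p))\}$.
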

Explicit constructions of $\etaFww$ can be found e.g.\
in \cite{Bau01,LN15,Sch:phd,Wei87}. 
The application $p \odot q:= \etaFww(p)(q)$ induces on $\Nomega$ the structure of a partial combinatory algebra (pca), cf.\ \cite{Bau00,LN15}.
The pair $(\Nomega, \odot)$ is often referred to as Kleene's second 
algebra $K_2$.

%
%

\subsection{Represented Spaces}
\label{sub:RepresentedSpaces}

A \emph{represented space}\index{represented space} is a pair $\XX=(X,\delta)$
such that $X$ is a set and $\delta$ is a Baire space representation of $X$,
i.e., a partial surjection from $\Nomega$ onto $X$.
The name `represented space' was coined by V.~Brattka in \cite{Bra96}.

%
%

\subsection{Computable Elements of Represented Spaces}
\label{sub:ComputableElements}

An element $x$ of a represented space $\XX=(X,\delta)$ is called \emph{computable}\index{computable element} 
(or \emph{$\delta$-com\-putable}),
if it has a computable name $p \in \Nomega$.
Similarly, a \emph{computable sequence}\index{computable sequence} $(x_i)_i$ in $\XX$ is a sequence of points such that there is exists a computable $p \in \Nomega$ such that $\delta(\pi^\infty_i(p))=x_i$ for all $i \in \IN$.
The reader should be warned that a sequence of computable elements need not be a computable sequence of elements.

\pagebreak[3]
\begin{example}[Computable real numbers] \rm \quad\\ \noindent
 The real numbers which are $\decrep$-computable are exactly those numbers $x \in \IR$ for which there is a computable sequence $(z_k)_k$ of integers  satisfying $|\tfrac{z_k}{k+1} - x| \leq \tfrac{1}{k+1}$ for all $k \in \IN$.
 The latter is the common definition of a computable real number as introduced by A.~Grzegorczyk (see \cite{Grz57}).
 Rational numbers and $\sqrt{2},\pi,\exp(1)$ are computable real numbers, whereas
 for any non-decidable subset $M$ of natural numbers, $\sum_{i \in M} 4^{-i}$ is an incomputable real number.
\end{example} 

\pagebreak[3]
\begin{example}[Computable sequences of real numbers] \rm \quad\\ \noindent
 A sequence $(x_i)_i$ of real numbers is called \emph{computable}\index{computable sequence} if, and only if, there is a computable double sequence $(q_{i,k})_{i,k}$ of rational numbers such that $|q_{i,k} - x_i| \leq 2^{-k}$ for all $i,k \in \IN$.
 This is equivalent to saying that $(x_i)_i$ is a computable sequence of the represented space $(\IR,\decrep)$. 
\end{example}

\subsection{Computable Realizability}
\label{sub:ComputableRealizability}

A function $f$ between two represented spaces $\XX=(X,\delta)$ and $\YY=(Y,\gamma)$ is called \emph{computable}\index{computable function},
if there is a partial computable function $g\colon \Nomega \parto \Nomega$
\emph{realizing $f$}, meaning that $f\delta(p)=\gamma g(p)$ holds for all $p \in \dom(f\delta)$.
This notion of computability is also called \emph{$(\delta,\gamma)$-computability} 
or \emph{computable realizability}\index{computable realizability} w.r.t.\ $\delta$ and $\gamma$.

That fact that a \emph{realizer}\index{realizer} $g$ realizes a function $f$ is usually visualised by a commuting diagram like the following:
\begin{equation*} 
  \xymatrix{
    X
    \ar[rr]^{\textstyle f}
    \ar@{}[drr]|{\textstyle\pmb{\circlearrowleft}}
    & &
    Y
   \\
    \Nomega
    \ar@{->>}[u]^{\textstyle\delta}
    \ar[rr]_{\textstyle g}
    & &
     \Nomega
     \ar@{->>}[u]_{\textstyle\gamma} 
    } 
\end{equation*}
We remark that this diagram is slightly imprecise, since it does not capture the situation for $p \notin \dom(f\delta)$: 
in particular $g$ is not required to diverge on names $p \in \dom(\delta) \setminus \dom(f\delta)$.
So a more precise commuting diagram looks like:
\begin{equation*} 
  \xymatrix{
    X
    \ar[rr]^{\textstyle f}
    \ar@{}[drr]|{\textstyle\pmb{\circlearrowleft}}
    & &
    Y
   \\
    \dom(f\delta)
    \ar@{->>}[u]^{\textstyle\delta}
    \ar[rr]_{\textstyle g}
    & &
     \Nomega 
     \ar@{->>}[u]_{\textstyle\gamma} 
    } 
\end{equation*}

\noindent
Computability of partial multivariate functions $f\colon \XX_1 \times\dotsc\times \XX_k \parto \YY$ between represented spaces is defined via the existence of a computable realizer $g\colon (\Nomega)^k \parto \Nomega$.

By Proposition~\ref{p:ComputableImpliesContinuous}, 
any computable function between represented spaces maps computable elements in the domain to computable elements in the image;
moreover composition of functions between represented spaces preserves computability.
Clearly the identity function on a represented space is computable.
So the class of represented spaces and the total computable functions between them
forms a category which we denote by $\RepEff$.

\pagebreak[3]
\begin{example}[Computability with respect to the decimal representation]\label{ex:MultiByThree} \rm 
\noindent
 Real multiplication by $2$ is computable with respect to the decimal representation $\decrep$.
 By contrast, multiplication by $3$ is \emph{not} computable w.r.t.\ $\decrep$.
 This was already observed by A.~Turing in \cite{Tur38}.
 We present the simple well-known proof.
 To fit into our framework, we identify the symbols $\w{-}$ and $\w{.}$ with the numbers ten and eleven, respectively, so that we can view $\Sigma_{\mathrm{dec}}$ as a subset of $\IN$
 and $\decrep$ as a Baire space representation.
 
 Suppose for contradiction that $h$ is a monotone computable function such that $h^\omega$ realizes
 multiplication by $3$ with respect to $\decrep$:
 \begin{equation*} 
  \xymatrix{
    \IR
    \ar[rr]^{\textstyle x \mapsto 3x}
    \ar@{}[drr]|{\textstyle\pmb{\circlearrowleft}}
    & &
    \IR
   \\
    \{\w{0},\dotsc,\w{9},\w{-},\w{.}\}^\IN
    \ar@{->>}[u]^{\decrep}
    \ar[rr]_{\textstyle h^\omega}
    & &
     \{\w{0},\dotsc,\w{9},\w{-},\w{.}\}^\IN
     \ar@{->>}[u]_{\decrep} 
    } 
\end{equation*}
The sequence $p:=\w{0.}\w{3333}\dotsc$ is a name of $\tfrac{1}{3}$.
Hence $h^\omega(p)$ must be equal to $\w{0.999}\dotsc$ or to $\w{1.000}\dotsc$;
let us assume the first case.
By the continuity of~$h^\omega$, there is a finite prefix $u=\w{0.}\w{33}\dotsc\w{3}$ of $p$ such that
$\w{0.}\w{33}\dotsc\w{3} \sqsubseteq q$ implies $\w{0.9} \sqsubseteq h^\omega(q)$.
But then $h^\omega$ works incorrectly on the name $p':=u\w{4}\w{000}\dotsc=\w{0.}\w{33}\dotsc\w{3} \w{4}\w{000}\dotsc$ 
which represents a real number strictly larger than $\tfrac{1}{3}$.
In the other case we obtain a similar contradiction.
We conclude that multiplication by $3$ cannot be computable on $(\IR,\decrep)$.
\end{example}

We obtain that the basic arithmetic operations $+,-,\ast,\div$ on the reals are incomputable with respect to the decimal representation.

%
%

\subsection{The Cauchy Representation of the Real Numbers}
\label{sub:CauchyRepReal}

We present a representation for the real numbers which induces a reasonable notion of computability.
It is known as the \emph{Cauchy representation}\index{Cauchy representation} for the reals.
The Cauchy representation $\varrho_\IR$ is defined by 
\[
  \varrho_\IR(p)=x \;:\Longleftrightarrow\; \forall k\in \IN. |\nu_\IQ(p(k)) -x | \leq 2^{-k},
\]
where $\nu_\IQ$ is 
a canonical numbering\footnote{A canonical numbering of $\IQ$ can defined by  $\nu_\IQ(n):=\tfrac{\nu_\IZ(\fst(n))}{1+\snd(n)}$, where $\fst,\snd$ are the computable projections of a computable pairing function for $\IN$ and $\nu_\IZ$ is a canonical numbering of the integers like the one in Subsection~\ref{sub:SignedDigitRep}.}
of the rational numbers $\IQ$.
So a name of a real number is essentially a rapidly converging sequence of rationals.

We present an incomplete list of functions which are computable with respect to the Cauchy representation.

\pagebreak[3]
\begin{example}[Computable real functions]
\label{ex:comp:realfunctions}\rm \quad\\ \noindent
The following (total or partial) real functions are computable with respect to the Cauchy representation on their natural domain.
\begin{enumerate}
  \item
    $+$, $-$, $\ast$, $\div$
  \item
   $\sqrt{x}$,
   $\sqrt[y]{x}$,
   $\;\exp$, $\log$
  \item 
   $\;\sin$, $\cos$, $\tan$,
   $\;\arcsin$, $\arccos$, $\arctan$
  \item
   $x \mapsto \int_0^x f dt$  for any computable function $f\colon (\IR,\varrho_\IR) \to (\IR,\varrho_\IR)$.
\end{enumerate}
\end{example}

These facts give rise to choose the following as the notion of computability on the real numbers.

\pagebreak[3]
\begin{definition} \rm
 A total or partial real function $f\colon \IR^k \parto \IR$ is called \emph{computable}\index{computable function},
 if it is computable with respect to the Cauchy representation $\varrho_\IR$.
 A real number is \emph{computable}\index{computable element}, if it has a computable name $p$ with $\varrho_\IR(p)=x$.
\end{definition}

This notion of computability is essentially the same as the one considered by A.~Grzegorczyk in \cite{Grz57},
at least for functions with well-behaved domain.
We mention some implementations of exact real arithmetic:
N.~M\"uller's $\mathsf{iRRAM}$ \cite{Mue01},
P.~Collins' $\mathsf{Ariadne}$ \cite{Col:Ariadne},
B.~Lambov's $\mathsf{RealLib}$ \cite{Lam:RealLib},
and
M. Kone{\v{c}}n{\'{y}}'s $\mathsf{AERN}$ \cite{Kos:AERN}.
More information about real computability can be found e.g.\ in \cite{Bra96,Bra98b,BH98,BHW08,Wei00}.

%
%

\subsection{The Signed-Digit Representation}
\label{sub:SignedDigitRep}

We present an alternative representation for reals inducing the same notion of computability as the Cauchy representation.
It is called the \emph{signed-digit representation}\index{signed-digit representation}.
 
A straightforward version of the signed-digit representation uses the zero-dimen\-sional Polish space $\IZ \times \{\bar{\w{1}},\w{0},\w{1}\}^\IN$ as the space of representatives instead of the Baire space $\Nomega$,
where the symbol $\bar{1}$ stands for the negative digit $-1$.
It is defined by
\[
    \sdrepIZ(z,p):=z + \sum_{i \in \IN} \dfrac{p(i)}{2^{i+1}}
\]
for all integers $z \in \IZ$ and all $p \in \{\bar{\w{1}},\w{0},\w{1}\}^\IN$.
The trick of using the negative digit $\bar{1}$ guarantees that the real numbers are encoded in a reasonable way.
  
We modify $\sdrepIZ$ to obtain a version of the signed-digit representation 
that fits better into our framework which uses the Baire space as fixed representing space.
We employ the effective bijective numbering of the set $\IZ$ of integers given by
\[
   \nu_\IZ(n):=\left\{
    \begin{array}{ll}
      \tfrac{n+1}{2} & \text{if $n$ is odd}
      \\[0.5ex]   
      -\tfrac{n}{2}    & \text{otherwise.}
    \end{array}\right.
\]
So the sequence $(\nu_\IZ(n))_n$ starts with the initial segment $0,1,-1,2,-2,3,-3$.
Now we can define our version of the signed-digit representation $\sdrep\colon \Nomega \parto \IR$ over the Baire space by
\begin{align*}
   \dom(\sdrep) &:=
      \big\{ p \in \IN^\IN \,\big|\,
         p(i) \in \{0,1,2\} \text{ for all $i \geq 1$} \big\},
   \\	 
   \sdrep(p) &:= \nu_\IZ(p(0)) + \sum_{i \geq 1} \dfrac{\nu_\IZ(p(i))}{2^i}\,.
\end{align*}
\noindent
Recall that $2$ encodes the negative digit $-1$.

While inducing the same computability notion as the Cauchy representation (cf.\ Example~\ref{ex:ComparisonRealReps}),
the signed-digit representation has the advantage over the Cauchy representation from Subsection~\ref{sub:CauchyRepReal} of being \emph{proper}. 
This means that the preimage of any compact subset of $\IR$ under $\sdrep$ is compact 
(cf.\ \cite{Sch04,Wei00,Wei03}).
This property makes the signed-digit representation suitable for Type Two Complexity Theory.

%
%

\subsection{Continuous Realizability}
\label{sub:ContinuousRealizable}

In Subsection~\ref{sub:ComputabilityBaireSpace} we have seen that every computable function on the Baire space is continuous with respect to the Baire topology.
Hence the notion of a continuously realizable function plays an important role in investigating the question which representations for a space lead to a reasonable computability notion on that space.

A partial function $f$ between two represented spaces\index{represented space} $\XX=(X,\delta)$ and $\YY=(Y,\gamma)$ 
is called \emph{continuously realizable}\index{continuous realizability} 
(for short \emph{continuous}), 
if there is a partial continuous function $g$ on the Baire space
realizing $f$, meaning that $\gamma g(p)=f\delta(p)$ holds for all $p \in \dom(f\delta)$:
\begin{equation*} 
  \xymatrix{
    X
    \ar[rr]^{\textstyle f}
    \ar@{}[drr]|{\textstyle\pmb{\circlearrowleft}}
    & &
    Y
   \\
    \Nomega
    \ar@{->>}[u]^{\textstyle\delta}
    \ar[rr]_{\textstyle g}
    & &
     \Nomega
     \ar@{->>}[u]_{\textstyle\gamma} 
    } 
\end{equation*}
In this situation $f$ is also called \emph{$(\delta,\gamma)$-continuous} or \emph{continuously realizable} with respect to $\delta$ and $\gamma$.
Continuous realizability of multivariate functions is defined similarly.
Since every computable function on $\Nomega$ is topologically continuous, we obtain a new instance of the slogan: 
\emph{Computability implies Continuity}\index{computability implies continuity}.

\pagebreak[3]
\begin{proposition}
 Any computable function between represented spaces is continuous.
\end{proposition}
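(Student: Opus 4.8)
The plan is to chain together two facts already established in the excerpt: (1) every computable function on the Baire space is topologically continuous (Proposition~\ref{p:ComputableImpliesContinuous}), and (2) the definition of continuous realizability, which is obtained from the definition of computable realizability simply by replacing ``computable realizer'' with ``continuous realizer''. So the statement is essentially immediate once these two pieces are put side by side.

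Concretely, I would argue as follows. Let $f$ be a computable function between represented spaces $\XX=(X,\delta)$ and $\YY=(Y,\gamma)$. By the definition of computable realizability, there is a partial computable function $g\colon \Nomega \parto \Nomega$ realizing $f$, i.e., $\gamma g(p)=f\delta(p)$ for all $p \in \dom(f\delta)$. By Proposition~\ref{p:ComputableImpliesContinuous}, this $g$ is topologically continuous (in the multivariate case, one applies the proposition componentwise, or simply notes that the multivariate version is stated there as well). Hence $g$ is a partial continuous function on the Baire space realizing $f$ in exactly the sense required by the definition of continuous realizability. Therefore $f$ is continuously realizable.

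I do not expect any genuine obstacle here: the only thing to be slightly careful about is bookkeeping for the multivariate case, where one should note that a computable realizer $g\colon (\Nomega)^k \parto \Nomega$ is continuous by the multivariate form of Proposition~\ref{p:ComputableImpliesContinuous}, and that the Baire space $\Nomega$ is homeomorphic to $(\Nomega)^k$ via the computable (hence continuous) tupling function from Subsection~\ref{sub:Bairespace}, so nothing is lost. Beyond that, the proof is a one-line invocation of the earlier proposition together with an unravelling of definitions, and can be written in two or three sentences.
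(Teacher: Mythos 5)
Your proof is correct and is exactly the argument the paper intends: the computable realizer $g$ is topologically continuous by Proposition~2.4, hence it is a continuous realizer, so $f$ is continuously realizable. The paper treats this as immediate from that proposition (it introduces the statement with ``Since every computable function on $\Nomega$ is topologically continuous\ldots''), so your unravelling of the definitions, including the multivariate bookkeeping, matches the intended proof.
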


\noindent
Hence continuous realizability can be viewed as a topological generalisation of computable realizability.
Computability with respect to an oracle is equivalent to continuous realizability.

%
\subsection{Reducibility and Equivalence of Representations}
\label{sub:reducibility}

Given two representations $\delta_1$ and $\delta_2$ of the same set $X$ we say that 
$\delta_1$ is \emph{computably reducible to} (or \emph{computably translatable into}) $\delta_2$,
if there is a computable function $g$ on the Baire space translating $\delta_1$ into $\delta_2$ in the sense that the diagram
\begin{equation*} 
  \xymatrix{
    & &
    X
   \\
    \Nomega
    \ar@{->>}[urr]^{\textstyle\delta_1}_{\;\;\;\textstyle\pmb{\circlearrowleft}}
    \ar[rr]_{\textstyle g}
    & &
     \Nomega
     \ar@{->>}[u]_{\textstyle\delta_2} 
    } 
\end{equation*}
commutes, meaning $\delta_1(p)=\delta_2(g(p))$ for all $p \in \dom(\delta_1)$.
In this situation we write $\delta_1 \leqcp \delta_2$. 
If $\delta_1 \leqcp \delta_2 \leqcp \delta_1$, then $\delta_1$ and $\delta_2$ are called \emph{computably equivalent} which is written as $\delta_1 \equiv_{\mathrm{cp}} \delta_2$.
Some authors omit the subscript ``$\mathrm{cp}$''.
Topological reducibility (in symbols $\delta_1 \leqt \delta_2$) and
topological equivalence (in symbols $\delta_1 \equiv_{\mathrm{t}} \delta_2$)
is defined analogously by means of \emph{continuous} rather than \emph{computable} translator functions.
Topological reducibility (equivalence) is also known as \emph{continuous reducibility (equivalence)}.
Obviously, $\leqcp,\leqt,\equiv_{\mathrm{cp}},\equiv_{\mathrm{t}}$ are reflexive and transitive.
Computably equivalent representations induce the same notion of computability.

\pagebreak[3]
\begin{proposition}\label{p:translatable}
 Let $\delta_1,\delta_2\colon \Nomega \parto X$ and $\gamma_1,\gamma_2\colon \Nomega \parto Y$ be representations. 
 Let $f\colon X \parto Y$ be a partial function.
\begin{enumerate}
  \item 
   If $\delta_1 \equivcp \delta_2$ and $\gamma_1 \equivcp \gamma_2$, 
   then $f$ is $(\delta_1,\gamma_1)$-computable if, and only if, $f$ is $(\delta_2,\gamma_2)$-computable.
  \item 
   If $\delta_1 \equivt \delta_2$ and $\gamma_1 \equivt \gamma_2$, 
   then $f$ is $(\delta_1,\gamma_1)$-continuous if, and only if, $f$ is $(\delta_2,\gamma_2)$-continuous.
  \item
   If $\delta_1 \equivcp \delta_2$, then $x \in X$ is $\delta_1$-computable if, and only if, $x$ is $\delta_2$-computable.   
  \item
   If $\delta_1 \leqcp \delta_2$, $\gamma_1 \leqcp \gamma_2$ and $f$ is  $(\delta_2,\gamma_1)$-computable, 
   then $f$ is $(\delta_1,\gamma_2)$-computable. 
  \item
   If $\delta_1 \leqt \delta_2$, $\gamma_1 \leqt \gamma_2$ and $f$ is $(\delta_2,\gamma_1)$-continuous, 
   then $f$ is $(\delta_1,\gamma_2)$-continuous. 
  \item
   If $\delta_1 \leqcp \delta_2$ and $x \in X$ is $\delta_1$-computable, then $x$ is $\delta_2$-computable. 
\end{enumerate}
\end{proposition}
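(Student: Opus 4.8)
The plan is to verify each of the six items essentially by diagram-chasing, composing the relevant translator functions (which exist by hypothesis) with the relevant realizer of $f$, and invoking the already-established facts that computable functions on $\Nomega$ are closed under composition (Proposition~\ref{p:composition:computable}) and that $h^\omega$ applied to a computable point yields a computable point. The statements for the topological variants follow the same pattern with ``continuous'' replacing ``computable''; since every computable function on $\Nomega$ is continuous (Proposition~\ref{p:ComputableImpliesContinuous}) and continuous functions on $\Nomega$ are also closed under composition, no separate work is needed there.

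First I would prove item~(4), since items (1) and (2) will follow from it. Assume $\delta_1 \leqcp \delta_2$ via a computable translator $a$, $\gamma_1 \leqcp \gamma_2$ via a computable translator $b$, and let $g$ be a computable realizer of $f$ with respect to $(\delta_2,\gamma_1)$. Then I claim $b \circ g \circ a$ is a computable realizer of $f$ with respect to $(\delta_1,\gamma_2)$: for $p \in \dom(f\delta_1)$ we have $\delta_1(p)=\delta_2(a(p))$, so $a(p) \in \dom(f\delta_2)$, hence $\gamma_1(g(a(p)))=f\delta_2(a(p))=f\delta_1(p)$, and finally $\gamma_2(b(g(a(p))))=\gamma_1(g(a(p)))=f\delta_1(p)$. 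The composition $b \circ g \circ a$ is computable by Proposition~\ref{p:composition:computable}(1). Item~(5) is the verbatim analogue with continuous translators and a continuous realizer. Items~(1) and~(2) then follow by applying (4), resp.\ (5), in both directions using the two halves of each equivalence.

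For item~(6), if $x$ is $\delta_1$-computable via a computable name $p$ with $\delta_1(p)=x$, and $a$ is a computable translator witnessing $\delta_1\leqcp\delta_2$, then $a(p)$ is a computable element of $\Nomega$ by Proposition~\ref{p:composition:computable}(2), and $\delta_2(a(p))=\delta_1(p)=x$, so $x$ is $\delta_2$-computable. Item~(3) follows by applying (6) in both directions. I would also note that for item~(6) one must check $p \in \dom(\delta_1)$ (which holds by assumption) so that the translator condition $\delta_1(p)=\delta_2(a(p))$ applies.

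There is essentially no hard part here: the proposition is a routine bookkeeping exercise packaging together the closure properties already established. The only point requiring a modicum of care is keeping track of domains — in particular, observing in items (4) and (5) that $a$ maps $\dom(f\delta_1)$ into $\dom(f\delta_2)$, so that the commuting square for $g$ is actually applicable at the point $a(p)$ — and ensuring that when chaining the two directions of an equivalence in items (1)--(3) the translators go the right way. I would keep the write-up terse, proving (4) and (6) in full and remarking that (5) is identical mutatis mutandis and that (1)--(3) are immediate consequences.
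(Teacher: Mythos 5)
Your proof is correct and is exactly the standard argument: the paper itself gives no proof but defers to \cite[Theorem 3.1.8]{Wei00}, whose proof is precisely this composition-of-translators-with-realizers diagram chase. The domain bookkeeping you flag (that the translator $a$ maps $\dom(f\delta_1)$ into $\dom(f\delta_2)$) is indeed the only point needing care, and you handle it correctly.
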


\pagebreak[3]
\begin{example}[Comparison of representations for the reals] \label{ex:ComparisonRealReps} \rm \quad\\ \noindent
The decimal representation $\decrep$ is computably translatable into the Cauchy representation $\varrho_\IR$, 
but $\varrho_\IR$ is not even continuously translatable back to $\decrep$.
Nevertheless, computability of a real number with respect to $\decrep$ is equivalent to 
its computability with respect to the Cauchy representation $\varrho_\IR$.
The Cauchy representation is computably equivalent to the signed-digit representation $\sdrep$.
So $\sdrep$ induces the same notion of computability on the Euclidean space as $\varrho_\IR$ by Proposition~\ref{p:translatable}. 
\end{example}

%
\subsection{The Category of Represented Spaces}
\label{sub:RepNomega}
The category of represented spaces\index{represented space} comes in two versions, a continuous one and an effective one.
Both have as objects all represented spaces $(X,\delta)$.
The continuous version, $\RepN$, has as morphisms all total continuously realizable functions between represented spaces.
The effective category $\RepEff$ only uses all total computable functions as morphisms.
The category $\RepN$ is essentially equal to the category $\Mod(\Nomega)$ of modest sets over the Baire space (cf. \cite{Bau01}),
whereas $\RepEff$ corresponds to $\Mod(\Nomega,\Nomega_\sharp)$,
see A.~Bauer's thesis \cite{Bau00} for further information.

%
\subsection{Closure Properties of Represented Spaces}
\label{sub:RepNomega:ClosureProps}

The category $\RepN$ as well as its subcategory $\RepEff$ are known to be 
cartesian closed\index{cartesian closed}
(even locally cartesian closed, cf.\ A.~Bauer's thesis \cite{Bau00}).
We describe how to canonically construct\index{construction of representations} a binary product $\XX \times \YY$ and a function space $\YY^\XX$ in $\RepN$ and identically in $\RepEff$.

Let $\XX=(X,\delta)$ and $\YY=(Y,\gamma)$ be \index{represented space}represented spaces. 
The carrier set of the binary product $\XX \times \YY$ is just the cartesian product $X \times Y$.
The construction of the product representation $\delta \boxtimes \gamma$ employs
the computable bijection $\langle \cdot,\cdot \rangle\colon \Nomega \times \Nomega \to \Nomega$ 
and its computable projections $\pi_1,\pi_2\colon \Nomega \to \Nomega$ from Subsection~\ref{sub:Bairespace}.
One defines \linebreak[3]${\delta \boxtimes \gamma}\colon \Nomega \parto X \times Y$ by
\[
 \delta\boxtimes\gamma\langle p,q \rangle := \big( \delta(p),\gamma(q) \big) \,,
\]
which is a useful short form for
\begin{align*}
 \dom(\delta\boxtimes\gamma) &:=
 \{ r \in \Nomega \,|\, \pi_1(r) \in \dom(\delta),\; \pi_2(r) \in \dom(\gamma) \} \,,
 \\
 \delta\boxtimes\gamma(r) &:= \big( \delta(\pi_1(r)), \gamma(\pi_2(r)) \big)
 \quad\text{for $r \in \dom(\delta\boxtimes\gamma)$.}
\end{align*}
Another familiar notation for $\delta \boxtimes \gamma$ is $[\delta,\gamma]$.

The carrier set of the exponential $\YY^\XX$ in $\RepN$ and in $\RepEff$ consists
of the set $\setC(\XX,\YY)$ of all total continuously realizable functions $f\colon \XX \to \YY$.
Its canonical function space representation $[\delta\to\gamma]$ 
is constructed with the help of the effective representation $\etaFww$ of all partial continuous endofunctions on $\Nomega$ with $G_\delta$-domain from Proposition~\ref{p:etaFww}.
For all $p \in \Nomega$ and total continuous functions $f\colon \XX \to \YY$ one defines
\[
   [\delta\to\gamma](p)=f
    \;:\Longleftrightarrow\;
    \text{the function $\etaFww(p)$ realizes $f$.}
\]
The utm-Theorem for $\etaFww$ ensures that the evaluation function
\[
 \mathit{eval} \colon \YY^\XX \times \XX \to \YY
 \,,\quad
 \mathit{eval}(f,x):=f(x)
\] 
is computable.
The computable smn-Theorem guarantees currying of computable functions:
For any computable $f\colon \ZZ \times \XX \to \YY$ 
the function $\mathit{curry}(f)\colon \ZZ \to \YY^\XX$ defined by
\[ 
  \big(\mathit{curry}(f)(z)\big)(x):=f(z,x)  \;\;\text{for all $z \in \ZZ$ and $x \in X$}  
\]
is computable from $\ZZ$ to $\YY^\XX$;
in particular $\mathit{curry}(f)(z)$ is indeed an element of $\setC(\XX,\YY)$.
Note that $\mathit{curry}(f)(z)$ need not be a computable function from $\XX$ to $\YY$, unless $z$ is a computable element of $\ZZ$.
Likewise, the continuous smn-Theorem ensures that $\mathit{curry}$ maps continuously realizable  functions to continuously realizable functions.
Hence $\YY^\XX$ defined by $(\setC(\XX,\YY),[\delta\to\gamma])$ is an exponential to 
$\XX$ and $\YY$ both in $\RepN$ and in $\RepEff$.
A total function $h$ from $\XX$ to $\YY$ is computable if, and only if,
$h$ is a computable element of~$\YY^\XX$.
We emphasise that the exponential of $\XX$ and $\YY$ in $\RepEff$ coincides with the one in $\RepN$, so that its underlying set may contain incomputable (albeit continuously realizable) functions.

In contrast to its subcategory $\RepEff$, the category $\RepN$ has also countable products and countable coproducts. 
Let $((X_i,\delta_i))_{i \in \IN}$ be a sequence of represented spaces.
One uses the computable projections $\pi_{\infty,i}$ of the countable tupling function for $\Nomega$ from Subsection~\ref{sub:Bairespace} and defines representations
$\mybigboxtimes_{i \in \IN} \delta_i$ for the cartesian product $\prod_{i \in \IN} X_i$ 
and $\mybigboxplus_{i \in \IN} \delta_i$ for the disjoint sum $\bigcup_{i \in \IN} (\{i\} \times X_i)$ by
\begin{align*}
  (\mybigboxtimes\limits_{i \in \IN} \delta_i)(p) &:=
   \big( \delta_0(\pi_{\infty,0}(p)), \delta_1(\pi_{\infty,1}(p)), \delta_2(\pi_{\infty,2}(p)),\dotsc \big)\,,
 \\
  (\mybigboxplus\limits_{i \in \IN} \delta_i) (p)&:= \big( p(0),\delta_{p(0)}(p^{\geq 1})  \big), 
\end{align*}
where $p^{\geq 1}(i):=p(i+1)$.
The ensuing represented spaces
form a countable product and a countable coproduct for $((X_i,\delta_i))_{i \in \IN}$ in $\RepN$, but not necessarily in $\RepEff$.
The represented space $(\{1\} \times X \cup \{2\} \times Y, \delta \boxplus \gamma)$ is a binary coproduct for $\XX$ and $\YY$ in both categories, where the binary coproduct representation $\delta \boxplus \gamma$ is defined analogously to the countable coproduct representation.

We mention the useful notion of the \emph{meet} of represented spaces $((X_i,\delta_i))_i$.
It has the intersection $\bigcap_{i \in \IN} X_i$ as carrier set and the meet representation
$\bigwedge_{i\in\IN} \delta_i$ is defined by
\[
 (\mybigwedge_{i\in\IN} \delta_i)(p)=x \;:\Longleftrightarrow\;
 \forall i \in \IN.\, \delta_i(\pi^\infty_i(p))=x\,.
\]
The representation $\delta \wedge \gamma$ of the binary meet $\XX \wedge \YY$ is defined analogously.

Any subset $M$ of a represented space $\XX=(X,\delta)$ has the corestriction $\delta|^M$ as natural representation which has $\delta^{-1}[M]$ as domain and maps $p \in \delta^{-1}[M]$ to $\delta(p) \in M$.
This entails that both $\RepN$ and $\RepEff$ have equalisers and thus all finite limits.
Moreover, $\RepN$ has all countable limits by being countably cartesian.
Details can be found in A.~Bauer's thesis \cite{Bau00}.


\subsection{Multivalued Functions}
\label{sub:multifunctions}

A \emph{multivalued function}\index{multivalued function}\index{multifunction}\index{problem} 
(or \emph{multifunction} or \emph{operation} or \emph{problem}\index{problem}) is a triple $F=(X,Y,\Graph(F))$, where $\Graph(F)$ is a relation between sets $X$ and $Y$, written as $F\colon X \rightrightarrows Y$.
The set $X$ is called the \emph{source} and $Y$ the \emph{target} of $F$.
The intuition behind this definition is that
the elements of $F[x]:=\{ y \in Y \,|\, (x,y) \in F\}$ are viewed as the legitimate results for an input $x$ under the multivalued function $F$. 
If the domain $\dom(F):=\{x \in X\,|\, F[x] \neq \emptyset\}$ of $F$ is equal to $X$, then $F$ is called \emph{total}.
Ordinary total or partial functions $f\colon X \parto Y$ can be seen as multivalued functions such that the image $f[x]$ is a singleton for every $x$ in the domain $\dom(f)$ of $f$.

A multivalued function $F$ between represented space $\XX=(X,\delta)$ and $\YY=(Y,\gamma)$ is called \emph{computable}\index{computable multifunction}\index{computable problem}\index{computable realizability}
(or \emph{computably realizable} or \emph{$(\delta,\gamma)$-computable}),
if there is a computable function $g$ on $\Nomega$ such that
\[
 \gamma(g(p)) \in F[\delta(p)] \quad \text{for every $p\in \dom(\delta)$ with $\delta(p) \in \dom(F)$.}
\]
So a realizer $g$ for $F$ picks a possible result $y \in F[x]$ for an input $x$ which may heavily depend on the input name $p \in \delta^{-1}[x]$. 
From the point of view of logic, a computable realizer for $F$ can be interpreted as a computable realizer of the $\forall\exists$-statement
\[
  \forall x \in \dom(F). \exists y \in Y.\, (x,y) \in \Graph(F) \,.
\]
Similarly, $F$ is called \emph{continuously realizable}\index{continuous realizability} (for short \emph{continuous}), if it has a continuous realizer $g$ satisfying the above formula.

\pagebreak[3]
\begin{example}[Finite precision tests] \rm \quad\\ \noindent 
A powerful computable multivalued function is the finite precision test\index{finite precision test}
$T\colon \IR^3 \rightrightarrows  \IN$ defined by
\[
 T[x,y,\varepsilon]:= (x<_\varepsilon y):=\left\{
  \begin{array}{cl}
    \{1\}   & \text{if $x \leq y-\epsilon$}
   \\
    \{0,1\} & \text{if $y-\varepsilon < x < y$}
   \\ 
    \{0\}   & \text{if $x \geq y$}
  \end{array} \right.
\]
for all $x,y,\varepsilon \in \IR$.
This $(\varrho_\IR\boxtimes\varrho_\IR\boxtimes\varrho_\IR, \varrho_\IN)$-computable test 
(where $\varrho_\IN(p):= p(0)$)
is a basic operation in P.~Hertling's and V.~Brattka's feasible real random access machine (cf. \cite{BH98}).
By contrast, the ordinary precise test ``$\mathrm{x}<\mathrm{y}$\,?'' is incomputable by being discontinuous.

In his package \textsf{iRRAM}\index{iRRAM} for exact real number computation (cf.\ \cite{Mue01}),
N.~M\"uller has implemented a similar multivalued test $\mathit{positive}\colon \IR \times \IZ  \rightrightarrows \IN$.
It is defined by
\[
 \mathit{positive}[x,z]:=\left\{
  \begin{array}{cl}
    \{1\}   & \text{if $x>  2^{z}$}
   \\
    \{0,1\} & \text{if $-2^{z} \leq x \leq 2^{z}$}
   \\ 
    \{0\}   & \text{if $x < -2^{z}$}
  \end{array} \right.
\]
for all $x \in \IR$, $z \in \IZ$. 
The multivalued function $\mathit{positive}$ is $(\varrho_\IR \boxtimes \varrho_\IZ, \varrho_\IN)$-computable.
The unsharp positivity test can often be employed as a computable substitute for the discontinuous, hence incomputable test ``$\mathrm{x}>0$\,?''.
For a further discussion of unsharp tests we refer to \cite{BH98,Mue01}.
\end{example}

From the computability point of view, the most appropriate definition of composition of two multivalued functions is given by
\begin{align*}
 \dom(G\circ F)&:=\big\{ x \in \dom(F) \,\big|\, F[x] \subseteq \dom(G) \big\}\,,
 \\
 G \circ F[x]  &:= \big \{ z \,\big|\, \exists y \in F[x].\, z \in G[y] \big\} \text{ for $x \in \dom(G \circ F)$.}
\end{align*}
This definition guarantees that the class of computable multivalued functions is closed under composition (see \cite{Bra03}).
The seemingly more natural version of composition that has $\{ x \in \dom(F) \,|\, F[x] \cap \dom(G) \neq \emptyset \}$ as domain does not preserve computability of multivalued functions.

\subsection{Multirepresentations}
\label{sub:multireps}

A \emph{multirepresentation}\index{multirepresentation} $\delta$ of a set $X$ is a partial multivalued function from the Baire space to $X$ 
which is surjective in the sense that for every $x \in X$ there exists some $p \in \Nomega$ such that $x \in \delta[p]$. 
So multirepresentations generalise representations by allowing any $p \in \Nomega$ to be a name of more than one element. 
Such a pair $(X,\delta)$ is called a \emph{multirepresented space}\index{multirepresented space}.

A partial function $f\colon X \parto Y$ between multirepresented spaces $\XX=(X,\delta)$ and $\YY=(Y,\gamma)$
is called \emph{computable}\index{computable function} or \emph{computably realizable}\index{computable realizability} or \emph{$(\delta,\gamma)$-computable},
if there is a computable function $g\colon \Nomega \parto \Nomega$ satisfying
\[
  x \in \delta[p] \;\&\; x \in \dom(f) \implies f(x) \in \gamma[g(p)] 
  \,;
\]
similarly for multivariate functions.
Continuous realizability\index{continuous realizability} w.r.t.\ multirepresentations is defined via \emph{continuous} realizers
rather than computable realizers in the spirit of Subsection~\ref{sub:ContinuousRealizable}.
The category $\MRep$ of multirepresented spaces $(X,\delta)$ and continuously realizable total functions is equivalent to the category $\mathsf{Asm}(\Nomega)$ of assemblies \cite{LN15} over the Baire space.
Thus $\MRep$ is locally cartesian closed, countably complete and countably cocomplete.

There exist constructors for forming canonical multirepresentations for products, coproducts, meets, and function spaces generalising those for representations presented in Subsection~\ref{sub:RepresentedSpaces}.
Details can be found in \cite{Sch:phd}.

\pagebreak[3]
\begin{example}[The space of partial continuous functions]\label{ex:spacePP}
 \rm \quad\\ \noindent
 Let $\XX=(X,\delta)$, $\YY=(Y,\gamma)$ and $\ZZ$ be (multi-)represented spaces.
 A canonical multirepresentation for the set of partial continuously realizable functions from $\XX$ to $\YY$ is defined by
 \[
  f \in [\delta{\parto}\gamma][p] \;:\Longleftrightarrow\;
  \forall (r,x) \in \Graph(\delta). \big(x \in \dom(f) \implies f(x) \in \gamma[\etaFww(p)(r)] \big)\,,  
 \]
 where $\etaFww$ is taken from Proposition~\ref{p:etaFww}.
 We denote by $\PP(\XX,\YY)$ the ensuing multirepresented space.
 The evaluation function $\mathit{eval}\colon \PP(\XX,\YY) \times \XX \parto \YY$ is computable.
 Moreover, $[\delta\parto\gamma]$ admits currying of the following kind:
 for all partial computable functions $h\colon \ZZ \times \XX \parto \YY$  the total function $\mathit{curry}(h)\colon \ZZ \to \PP(\XX,\YY)$ defined by
 \[
  \dom\big(\mathit{curry}(h)(z)\big) :=\{ x \,|\, (z,x) \in \dom(h)\}  ,\;
  \big(\mathit{curry}(h)(z)\big)(x) :=h(z,x) 
 \]
 is computable as well. 
\end{example}

More information can be found in \cite{Sch:phd,WG09}.


\section{Admissible Representations for Topological Spaces}
\label{sec:admissibility}

We introduce and study admissibility as a notion of topological well-behavedness for representations\index{representation} for topological spaces\index{topological space}.
This notion is a generalisation of a similar notion introduced by C.~Kreitz and K.~Weihrauch in \cite{KW85}
which applies to representations for second-countable topological spaces.

%
\subsection{The Topology of a Represented Space} 
\label{sub:QuotientTopology}
Any represented space $\XX=(X,\delta)$ carries a natural topology, 
namely the \emph{final topology} or \emph{quotient topology} induced by the partial surjection $\delta$.
The final topology is defined as the family of sets
\[
 \tau_\delta:=
 \big\{ U \subseteq X \,\big|\, \delta^{-1}[U]=W \cap \dom(\delta) \text{ for some open subset $W \subseteq \Nomega$}\big\}\,. 
\]
It is equal to the finest (= largest) topology on $X$ such that $\delta$ is topologically continuous.
We denote by $\calT(X,\delta)$ the topological space that carries the final topology $\tau_\delta$
and call it the \emph{associated} space. 
The following fact is crucial.

\pagebreak[3]
\begin{proposition}\label{p:contrel2topcontinuous}
 Let $\XX$ and $\YY$ be represented spaces.
 Any total continuously realizable function $f\colon \XX \to \YY$ 
 is topologically continuous with respect to the final topologies 
 of the respective representations.
\end{proposition}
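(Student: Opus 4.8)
The plan is to chase the relevant commuting diagram and exploit the definition of the final topology together with Proposition~\ref{p:ComputableImpliesContinuous}. Let $\XX=(X,\delta)$ and $\YY=(Y,\gamma)$ be the represented spaces in question, with final topologies $\tau_\delta$ and $\tau_\gamma$, and let $f\colon\XX\to\YY$ be total and continuously realizable via a partial continuous $g\colon\Nomega\parto\Nomega$ with $\gamma g(p)=f\delta(p)$ for all $p\in\dom(\delta)$ (here $\dom(f\delta)=\dom(\delta)$ since $f$ is total). To show $f$ is topologically continuous, I would fix an arbitrary $V\in\tau_\gamma$ and prove $f^{-1}[V]\in\tau_\delta$. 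By definition of $\tau_\gamma$ there is an open $W\subseteq\Nomega$ with $\gamma^{-1}[V]=W\cap\dom(\gamma)$.

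The key computation is to identify $\delta^{-1}[f^{-1}[V]]$. Using the realizer equation, for $p\in\dom(\delta)$ we have $\delta(p)\in f^{-1}[V]$ iff $f\delta(p)\in V$ iff $\gamma g(p)\in V$ iff $g(p)\in\gamma^{-1}[V]$, and since $g(p)$ automatically lies in $\dom(\gamma)$ whenever defined, this is equivalent to $g(p)\in W$, i.e.\ $p\in g^{-1}[W]$. Hence $\delta^{-1}[f^{-1}[V]]=g^{-1}[W]\cap\dom(\delta)$. Now $g$ is continuous on $\Nomega$ with (at worst) a $G_\delta$-domain, so there is an open $W'\subseteq\Nomega$ with $g^{-1}[W]=W'\cap\dom(g)$. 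Since $\dom(\delta)\subseteq\dom(g)$ (the realizer equation forces $g$ to be defined on all of $\dom(f\delta)=\dom(\delta)$), we get $\delta^{-1}[f^{-1}[V]]=W'\cap\dom(\delta)$, which by definition of the final topology means $f^{-1}[V]\in\tau_\delta$. This establishes the proposition.

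The mild subtlety — the only place one must be a little careful — is the bookkeeping about domains: one needs $\dom(\delta)\subseteq\dom(g)$ so that $g^{-1}[W]\cap\dom(\delta)$ can be rewritten purely in terms of an open subset of the Baire space, and one needs $g(p)\in\dom(\gamma)$ to absorb the $\dom(\gamma)$ factor coming from $\gamma^{-1}[V]$. Both follow directly from the definition of ``$g$ realizes $f$'' for a total $f$. No genuine obstacle arises; the statement is essentially an unwinding of definitions once one observes that continuous realizability transports preimages of opens to preimages of opens. An entirely analogous argument handles the multivariate case, using that finite products of Baire spaces are again (homeomorphic to) $\Nomega$ and that their final topologies behave correctly, but for the stated unary version the above suffices.
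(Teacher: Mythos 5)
Your proof is correct and is essentially the paper's own argument: the paper likewise fixes $V\in\tau_\gamma$ and observes that $\delta^{-1}[f^{-1}[V]]=g^{-1}[\gamma^{-1}[V]]$ is relatively open in $\dom(\delta)$ by continuity of the realizer $g$, so $f^{-1}[V]\in\tau_\delta$. Your version merely spells out the domain bookkeeping ($\dom(\delta)\subseteq\dom(g)$ and $g[\dom(\delta)]\subseteq\dom(\gamma)$) that the paper leaves implicit, which is fine.
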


\noindent
Hence the operator $\calT$ 
constitutes a forgetful functor $\calT\colon \RepN \to \Top$ mapping continuously realizable functions to themselves.
In Subsections~\ref{sub:SequentialSpaces} and~\ref{sub:QCB} we will see that $\calT(X,\delta)$ is a sequential topological space and even a qcb-space. 

\pagebreak[3]
\begin{proposition}[Computability implies continuity]
\label{p:comprel2topcontinuous}
 Any total computable function between two represented spaces is 
 topologically continuous\index{computability implies continuity}
 with respect to the topologies asscociated to the represented spaces.
\end{proposition}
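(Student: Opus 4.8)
The plan is to reduce the statement to the two facts already established. First I would observe that any computable function $g\colon \Nomega \parto \Nomega$ on the Baire space is topologically continuous by Proposition~\ref{p:ComputableImpliesContinuous}. Hence a realizer witnessing computability of a function $f\colon \XX \to \YY$ between represented spaces is in particular a \emph{continuous} realizer, so every total computable $f$ is a total continuously realizable function. Then I would invoke Proposition~\ref{p:contrel2topcontinuous}, which asserts precisely that total continuously realizable functions are topologically continuous with respect to the final topologies of the representations; applying it to $f$ yields the claim. This is essentially a one-line argument.

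For a more self-contained route one can unfold the definition of the final topology directly. Let $\XX=(X,\delta)$ and $\YY=(Y,\gamma)$, let $g$ be a computable realizer of the total function $f$, and let $V \in \tau_\gamma$. Since $f$ is total we have $\dom(f\delta)=\dom(\delta)$, and $\gamma g(p)=f\delta(p)$ for every $p\in\dom(\delta)$; in particular $\dom(\delta)\subseteq \dom(g)$ and $g(p)\in\dom(\gamma)$ there. Writing $\gamma^{-1}[V]=W'\cap\dom(\gamma)$ with $W'\subseteq\Nomega$ open, and using continuity of $g$ (Proposition~\ref{p:ComputableImpliesContinuous}) to write $g^{-1}[W']=W''\cap\dom(g)$ with $W''\subseteq\Nomega$ open, one computes $\delta^{-1}[f^{-1}[V]]=\{p\in\dom(\delta)\mid \gamma g(p)\in V\}=W''\cap\dom(\delta)$, so that $f^{-1}[V]\in\tau_\delta$. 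Hence $f$ is continuous.

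There is no genuine obstacle here, since the mathematical content has already been isolated in Propositions~\ref{p:ComputableImpliesContinuous} and~\ref{p:contrel2topcontinuous}. The only mild care needed, should one take the direct route rather than the two-step reduction, is the bookkeeping with partial domains: one must check that intersecting with $\dom(\delta)$ absorbs the auxiliary conditions $p\in\dom(g)$ and $g(p)\in\dom(\gamma)$, which holds exactly because $f$ is assumed total.
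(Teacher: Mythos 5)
Your first argument is exactly the paper's proof: the paper disposes of this proposition by citing Propositions~\ref{p:ComputableImpliesContinuous} and~\ref{p:contrel2topcontinuous}, which is precisely your two-step reduction. The direct unfolding you add is also correct (and mirrors the paper's own direct argument for Proposition~\ref{p:contrel2topcontinuous}), but it is not needed.
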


%
\subsection{Sequential Topological Spaces}
\label{sub:SequentialSpaces}

A \emph{sequential space}\index{sequential space} (cf.\ \cite{ELS04,Eng89,Hyl79}) is a topological space $\WX$
such that every sequentially closed subset of $\WX$ is closed in $\WX$.
A subset $A$ of $\WX$ is called \emph{sequentially closed}, 
if $A$ contains every limit $x$ of every sequence $(x_n)_n$ in $A$.
Remember that $(x_n)_n$ is defined to converge in a topological space $\WX$ to an element $x$
(in symbols $(x_n)_n \To{\WX} x$ or shorter $(x_n)_n \to x$),
if any open set $U \ni x$ contains $(x_n)_n$ eventually,
meaning that there is some $n_0 \in \IN$ with $x_n \in U$ for all $n \geq n_0$.
The complements of sequentially closed sets are called \emph{sequentially open}. 

Basic examples of sequential spaces are metrisable spaces and second-countable spaces.
The following fact is crucial:

\pagebreak[3]
\begin{example}[Sequential spaces]\label{ex:finaltopology:sequential} \rm \quad\\
\noindent
 For any represented space $\XX$, 
 the associated topological space $\calT(\XX)$ is sequential.
\end{example}

Neither forming the topological product $\WX \times_\Top \WY$
nor taking the compact-open topology on the family $\setC(\WX,\WY)$ of total topologically continuous functions from $\WX$ to $\WY$ nor taking a topological subspace preserve the property of being sequential.
For example, the compact-open topologies on $\setC(\Nomega,\IN)$ and on $\setC(\IR^\IR,\IR)$ are not sequential.

A partial function $f\colon \WX \parto \WY$ between topological spaces is called \emph{sequentially continuous}\index{sequentially continuous},
if $f$ preserves convergent sequences,
i.e., $(x_n)_n \To{\WX} x$ implies $(f(x_n))_n \To{\WY} f(x)$, whenever the sequence $(x_n)_n$
and its limit $x$ are in the domain of~$f$.
The map $f$ is called \emph{topologically continuous}\index{topologically continuous}, 
if for every open set $V$ in $\WY$ there is some open set $U$ in $\WX$ such that $f^{-1}[V]=U \cap \dom(f)$.
Topological continuity implies sequential continuity, but not vice versa.
Importantly, sequential continuity coincides with topological continuity
for total unary functions between sequential spaces. 
The reader should be warned that this equivalence does not hold in general for multivariate functions nor for non-total functions.

By $\Seq$ we denote the category of sequential topological spaces as objects
and total sequentially continuous functions as morphisms.
This category is known to be cartesian closed (see \cite{ELS04,Hyl79}).
By contrast, the category $\Top$ of topological spaces and total topologically continuous functions fails to be cartesian closed.


\subsection{Sequentialisation of Topological Spaces}
\label{sub:Sequentialisation}

The family of all sequentially open sets of a topological space $\WX$ forms a (possibly) finer topology on the carrier set of $\WX$.
It is called the \index{sequentialisation}\emph{sequentialisation} of the topology of $\WX$.
The topological space $\seq(\WX)$ carrying this topology
is the unique sequential space that has the same convergence relation on sequences as $\WX$.
It is equal to $\WX$ if, and only if, $\WX$ is sequential.
The functor $\seq$ is right-adjoint to the inclusion functor of $\Seq$ into the category $\Top$ of topological spaces (see \cite{Hyl79}).
So $\seq(\WX)$ is referred to as the \emph{sequential coreflection} of~$\WX$.

%
\subsection{Topological Admissibility} 
\label{sub:admissibility}
Given represented spaces, the natural question arises which functions are computable between them.
Conversely, one wants to know how to construct appropriate representations that admit computability of a relevant family of functions and thus have certain required effectivity properties.

Since computability w.r.t.\ given representations implies continuity w.r.t.\ these representations,
it is reasonable to consider two different kinds of effectivity properties, the topological ones and the purely computable ones.
Many representations fail to be appropriate simply by having unsuitable topological properties preventing some interesting functions from being at least continuously realizable. 
The prime example is the notorious decimal representation $\decrep$:
an inspection of the proof of Example~\ref{ex:MultiByThree} shows that multiplication by $3$ is in fact not even continuously realizable.
So the misbehaviour of the decimal representation is of purely topological nature.

Therefore we turn our attention to continuous realizability
and identify those representations for topological spaces which induce an appropriate class of continuously realizable functions.
This is done by introducing the property of \emph{admissibility} for representations. 
It aims at guaranteeing that every topologically continuous function is continuously realizable.
Definition~\ref{def:admissible:Top} generalises a previous notion of admissibility introduced by C.~Kreitz and K.~Weihrauch 
(see \cite{KW85,Wei00} and Subsection~\ref{sub:KWadmissible}) which is taylor-made for topological spaces with a countable base.

\pagebreak[3]
\begin{definition}[Admissible representations \cite{Sch02}] \rm
\label{def:admissible:Top} \quad\\ \noindent
 A representation $\delta\colon \Nomega \parto \WX$ is called \emph{admissible}\index{admissible representation} for a topological space $\WX$, if
 \begin{itemize} 
  \item[(a)] 
   $\delta$ is continuous,
  \item[(b)]
   for every partial continuous function $\phi\colon \Nomega \parto \WX$
   there is a continuous function $g\colon \Nomega \parto \Nomega$ 
   satisfying $\phi=\delta \circ g$, i.e., the diagram
   $$
    \xymatrix{
      & &
      \WX
     \\
      \dom(\phi)
      \ar[rr]_{\textstyle g}   
      \ar@{->>}[rru]^{\textstyle\phi}_{\;\;\;\textstyle\pmb{\circlearrowleft}}
      & &
       \dom(\delta)
       \ar@{->>}[u]_{\textstyle\delta} 
    } 
  $$ 
  commutes.
 \end{itemize}
\end{definition}

\noindent
The condition (b) is referred to as the \emph{maximality property} of $\delta$.
It postulates that any continuous representation of $\WX$ can be translated into $\delta$. 
Any admissible representation for $\WX$ is maximal in the class of  
continuous representations of $\WX$ with respect to continuous reducibility $\leqt$.
Continuous reducibility (= topological reducibility) is defined by:
$\phi \leqt \phi'$ iff there is a continuous $g\colon \dom(\phi) \to \Nomega$ with $\phi=\phi' \circ g$. 

\pagebreak[3]
\begin{proposition}\label{p:adm:leqtmaximal}
 Let $\delta$ be a representation of a topological space $\WX$.
 Then $\delta$ is admissible if, and only if, it is $\leqt$-maximal 
 in the class of continuous representations for~$\WX$.
\end{proposition}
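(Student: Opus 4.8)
The plan is to prove the two directions of the equivalence separately, exploiting the obvious fact that admissibility is \emph{defined} via a maximality-type condition (the maximality property (b)) together with continuity (a). So one direction is essentially immediate and the content lies in the converse.

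First I would treat the forward direction. Assume $\delta$ is admissible. By (a) it is a continuous representation of $\WX$, so it belongs to the class in question. Now let $\phi$ be any continuous representation of $\WX$; since $\phi$ is in particular a partial continuous function $\phi\colon\Nomega\parto\WX$, the maximality property (b) supplies a continuous $g\colon\dom(\phi)\to\Nomega$ with $\phi=\delta\circ g$, which is exactly $\phi\leqt\delta$. Hence $\delta$ is $\leqt$-maximal among continuous representations of $\WX$. (One should note that $\leqt$ is transitive and reflexive, as recorded earlier, so ``maximal'' is meaningful; and that $g$ maps into $\dom(\delta)$ automatically because $\delta g=\phi$ is defined on all of $\dom(\phi)$.)

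For the converse, assume $\delta$ is a continuous representation of $\WX$ that is $\leqt$-maximal among continuous representations of $\WX$; I must verify (b). Let $\phi\colon\Nomega\parto\WX$ be an arbitrary partial continuous function. The key step is to manufacture from $\phi$ a \emph{continuous representation} $\phi'$ of $\WX$ with $\phi\leqt\phi'$ and $\phi'\leqt\delta$; then composing the two translators gives the desired $g$ with $\phi=\delta\circ g$. The natural candidate is $\phi' := \phi\boxplus\delta$ in the sense of the coproduct/``join'' of representations: put $\dom(\phi'):=\{\,0p' \mid p'\in\dom(\phi)\,\}\cup\{\,1p'\mid p'\in\dom(\delta)\,\}$ and $\phi'(0p'):=\phi(p')$, $\phi'(1p'):=\delta(p')$, using the computable prepending maps $p'\mapsto 0p'$ and $p'\mapsto 1p'$ and the continuous map stripping the first coordinate. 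Then $\phi'$ is a representation of $\WX$ (it is onto because $\delta$ already is), it is continuous (being a ``patch'' of two continuous partial maps along a clopen partition of its domain — here one uses that $\Nomega$ is zero-dimensional, so $\{p : p(0)=0\}$ and $\{p:p(0)=1\}$ are clopen), and clearly $\phi\leqt\phi'$ via $p'\mapsto 0p'$ and $\delta\leqt\phi'$ via $p'\mapsto 1p'$. By $\leqt$-maximality of $\delta$ there is a continuous $t$ with $\phi'=\delta\circ t$; precomposing with $p'\mapsto 0p'$ yields a continuous $g$ on $\dom(\phi)$ with $\delta\circ g=\phi$, establishing (b), and (a) holds by hypothesis. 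Hence $\delta$ is admissible.

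The main obstacle is purely bookkeeping: one must be careful that the ``join'' $\phi'$ is genuinely a \emph{representation} of $\WX$ (surjectivity — which is free once $\delta$ is surjective), that it is \emph{continuous} in the sense of continuous realizability of the identity from $(\WX,\phi')$ — equivalently that $\phi'$ is a partial continuous function $\Nomega\parto\WX$ — and that the translators involved are genuinely continuous on the relevant domains. All of these are routine given that $\Nomega$ is zero-dimensional and that prepending a fixed symbol and deleting the first symbol are continuous (indeed computable) operations on $\Nomega$; no genuinely hard point arises. An alternative to the explicit join is to observe directly that ``$\delta$ is $\leqt$-above every continuous $\phi\colon\Nomega\parto\WX$ viewed as a representation of its image'' already forces (b), but the image of $\phi$ need not be all of $\WX$, so the join construction is the cleanest way to reduce to the maximality hypothesis as literally stated.
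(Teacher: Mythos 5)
Your proof is correct and follows the standard argument (the paper itself only cites \cite{Sch:phd} for this proposition): the forward direction is immediate from the definition, and the converse hinges on exactly the point you isolate, namely that condition (b) quantifies over \emph{all} partial continuous $\phi\colon\Nomega\parto\WX$ rather than only the surjective ones appearing in the maximality hypothesis. Your join $\phi'$ of $\phi$ with $\delta$ along the clopen partition $\{p\,:\,p(0)=0\}\cup\{p\,:\,p(0)=1\}$ is indeed a continuous representation of $\WX$ with $\phi\leqt\phi'$ and $\delta\leqt\phi'$, so maximality gives $\phi'\leqt\delta$ and hence $\phi\leqt\delta$, which is condition (b); this is precisely the intended reduction.
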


Admissibility ensures the equivalence of mathematical continuity and continuous realizability as desired.

\pagebreak[3]
\begin{theorem}\label{th:adm:relcont=cont}
 Let $\delta$ and $\gamma$ be admissible representations for topological spaces $\WX$ and~$\WY$, respectively.
 Then a partial function $f\colon \WX \parto \WY$ is sequentially continuous
 if, and only if,
 $f\colon (\WX,\delta) \parto (\WY,\gamma)$ is continuously realizable.
\end{theorem}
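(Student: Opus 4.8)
The plan is to prove the two implications separately, using the two defining properties of admissibility in the two directions.

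For the ``only if'' direction, suppose $f\colon\WX\parto\WY$ is sequentially continuous. I want to produce a continuous realizer $g$ on the Baire space. The natural candidate for what $g$ must realize is the composite partial map $\phi := f\circ\delta\colon\Nomega\parto\WY$, restricted to $\delta^{-1}[\dom(f)]$. The key point is that $\phi$ is \emph{topologically} continuous as a partial map into $\WY$: this is because $\delta$ is continuous (property (a)) and $\Nomega$ is metrizable hence sequential, so the composite of the topologically continuous $\delta$ with the sequentially continuous $f$ is sequentially continuous on a sequential domain, hence topologically continuous there. Wait — one must be careful: $\delta$ is continuous but $f$ is only sequentially continuous, and composing a topologically continuous map with a sequentially continuous one need not be topologically continuous in general. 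The fix is that the domain $\delta^{-1}[\dom(f)]$ is a subspace of $\Nomega$, and subspaces of metrizable spaces are metrizable hence sequential, so on this domain sequential continuity of $\phi$ does follow, and then by the Seq=Top coincidence for maps out of sequential spaces (quoted in Subsection~\ref{sub:SequentialSpaces}), $\phi$ is topologically continuous. Now I apply the maximality property (b) of $\gamma$: since $\phi\colon\Nomega\parto\WY$ is a partial continuous map, there is a continuous $g\colon\dom(\phi)\to\Nomega$ with $\phi=\gamma\circ g$. This $g$ is exactly a continuous realizer of $f$ with respect to $\delta$ and $\gamma$, since $\gamma(g(p))=\phi(p)=f(\delta(p))$ for all $p\in\dom(f\delta)$.

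For the ``if'' direction, suppose $f$ is continuously realizable, say $\gamma\circ g = f\circ\delta$ on $\dom(f\delta)$ for some continuous $g\colon\Nomega\parto\Nomega$. I must show $f$ is sequentially continuous. Take a sequence $(x_n)_n$ in $\dom(f)$ with $x_n\to x$ in $\WX$ and $x\in\dom(f)$; I want $f(x_n)\to f(x)$ in $\WY$. The idea is to lift the convergent sequence to a convergent sequence of names in the Baire space and then push it through $g$ and $\gamma$. Concretely, one uses a standard fact about admissible representations: if $\delta$ is admissible and $x_n\to x$ in $\WX$ with all points in $\dom(\delta)$, then there exist names $p_n\in\delta^{-1}[x_n]$ and $p\in\delta^{-1}[x]$ with $p_n\to p$ in $\Nomega$. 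This is itself a consequence of the maximality property (b): one builds a partial continuous map $\phi\colon\Nomega\parto\WX$ defined on a suitable subspace of $\Nomega$ (encoding the sequence together with its limit, e.g.\ on a copy of $\IN_\infty$) sending the $n$-th point to $x_n$ and the limit point to $x$, using that convergent sequences are exactly the continuous images of the one-point compactification $\IN_\infty$; applying (b) to $\phi$ yields a continuous $g'$ into $\dom(\delta)$, and $p_n := g'(\text{code of }n)$, $p := g'(\text{code of }\infty)$ do the job by continuity of $g'$. Granting this lemma, continuity of $g$ gives $g(p_n)\to g(p)$ in $\Nomega$, and then continuity of $\gamma$ (property (a) for $\gamma$) gives $f(x_n)=\gamma(g(p_n))\to\gamma(g(p))=f(x)$. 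Hence $f$ is sequentially continuous.

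\textbf{Main obstacle.} The genuinely substantive point — and the one I expect to be the technical heart — is the lifting of convergent sequences used in the ``if'' direction: extracting from a convergent sequence in $\WX$ a convergent sequence of $\delta$-names whose limit is a name of the limit. This is where admissibility is really used, and it is not a triviality; it requires the observation that convergent sequences in any space correspond to continuous maps from $\IN_\infty$, so that one can package the sequence and limit as a single partial continuous map into $\WX$ and then invoke the maximality property (b) of $\delta$. The ``only if'' direction, by contrast, is mostly a matter of correctly checking that the composite $f\circ\delta$ is topologically (not merely sequentially) continuous on its domain before invoking (b) for $\gamma$, which is routine once one notes that the domain is a metrizable (hence sequential) subspace of $\Nomega$.
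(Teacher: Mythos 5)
Your proof is correct and follows essentially the same route as the cited source: the ``only if'' direction applies the maximality property of $\gamma$ to the composite $f\circ\delta$ (which is topologically continuous on its domain because that domain is a metrizable, hence sequential, subspace of $\Nomega$), and the ``if'' direction rests on the sequence-lifting property of the admissible $\delta$, which is exactly Proposition~\ref{p:adm:quotient}(1) and is proved, as you indicate, by applying the maximality property to a continuous map defined on a copy of $\INu$ inside $\Nomega$. No gaps.
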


\noindent
Recall that topological continuity implies sequential continuity, but not vice versa.

Admissibility is necessary only for the representation of the target space, 
whereas the source representation is merely required to be a quotient representation\index{quotient representation}
($\delta$ is called a \emph{quotient representation} for a topological space $\WX$,
 if its final topology is equal to the topology of $\WX$).
We formulate this fact with the help of the notion of an \emph{admissibly represented space}\index{admissibly represented space}:
this is a represented space $\YY$ such that its representation is admissible for the associated topological space $\calT(\YY)$.  
A.~Pauly coined this notion in \cite{Pau16} using a different definition
which is equivalent to the above definition by Theorem~\ref{th:deltaIS:admissible}.

\pagebreak[3]
\begin{theorem}
\label{th:maintheo}
 Let $\XX$ be a represented space and let $\YY$ be an admissibly represented space.
 Then a total function $f\colon \XX \to \YY$ is continuously realizable
 if, and only if,
 $f$ is topologically continuous between the associated topological space $\calT(\XX)$ and $\calT(\YY)$.
\end{theorem}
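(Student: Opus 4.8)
The plan is to reduce Theorem~\ref{th:maintheo} to the already-established Theorem~\ref{th:adm:relcont=cont}, using the fact that the source representation of $\XX$ is automatically a quotient representation of the sequential space $\calT(\XX)$ (Proposition~\ref{p:contrel2topcontinuous} and Example~\ref{ex:finaltopology:sequential}). Write $\XX = (X,\delta_X)$ and $\YY = (Y,\delta_Y)$, and set $\WX := \calT(\XX)$, $\WY := \calT(\YY)$; by hypothesis $\delta_Y$ is admissible for $\WY$, while $\delta_X$ is merely a quotient representation for $\WX$. The two implications are handled separately.

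First I would treat the easier direction: if $f\colon \XX \to \YY$ is continuously realizable, then $f$ is topologically continuous. This is exactly Proposition~\ref{p:contrel2topcontinuous}, which states that any total continuously realizable function between represented spaces is topologically continuous with respect to the associated final topologies — so nothing new is needed here.

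For the converse, suppose $f\colon \WX \to \WY$ is topologically continuous, hence sequentially continuous. The key observation is that the composite $\phi := f \circ \delta_X \colon \dom(\delta_X) \to \WY$ is a partial continuous function into $\WY$: $\delta_X$ is continuous into $\WX$ by definition of the final topology, and $f$ is topologically continuous from $\WX$ to $\WY$, so $\phi$ is a partial topologically continuous function from the Baire space to $\WY$. Now apply clause~(b) of Definition~\ref{def:admissible:Top} — the maximality property of the admissible representation $\delta_Y$ — to obtain a continuous $g\colon \dom(\phi) \to \Nomega$ with $\delta_Y \circ g = \phi = f \circ \delta_X$ on $\dom(\phi) = \dom(\delta_X)$ (noting $f$ is total, so $\dom(f\delta_X) = \dom(\delta_X)$). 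Extending $g$ arbitrarily continuously to a partial continuous function on $\Nomega$ (or simply observing $\dom(\delta_X)$ suffices), the equation $\delta_Y(g(p)) = f(\delta_X(p))$ for all $p \in \dom(\delta_X)$ says precisely that $g$ realizes $f$, so $f\colon \XX \to \YY$ is continuously realizable.

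The only subtlety — and the one place to be careful — is the transition from ``topologically continuous on the sequential space $\calT(\WX)$'' to having a genuine partial \emph{topologically} continuous map $\phi$ on the Baire space to feed into clause~(b); here one uses that $\delta_X$ is topologically continuous into $\calT(\WX)$ (true by construction of the final topology) and that composition of topologically continuous partial maps is topologically continuous, so no appeal to sequentiality of the Baire space is even required. Thus the main work was already done in Theorem~\ref{th:adm:relcont=cont} and Definition~\ref{def:admissible:Top}; Theorem~\ref{th:maintheo} is the natural ``asymmetric'' strengthening where only the target need be admissible, and it follows by directly invoking the maximality property.
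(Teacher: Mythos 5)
Your proof is correct and is essentially the argument the paper delegates to the thesis citation: the forward direction is exactly Proposition~\ref{p:contrel2topcontinuous}, and the converse correctly observes that $f\circ\delta_X$ is a partial topologically continuous map into $\calT(\YY)$ (since $\delta_X$ is continuous for its own final topology) and then factors it through $\delta_Y$ via the maximality property of Definition~\ref{def:admissible:Top}(b), yielding a continuous realizer. The only cosmetic remark is that your stated plan of ``reducing to Theorem~\ref{th:adm:relcont=cont}'' is not what you actually do --- your argument is self-contained via Definition~\ref{def:admissible:Top}(b) --- but that does not affect its validity.
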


For partial functions and for multivariate functions we only get equivalence with sequential continuity.

\pagebreak[3]
\begin{theorem}
\label{th:maintheo:multivariate}
 Let $\XX_i$ be represented spaces for $i\in \{1,\dotsc, k\}$ and let $\YY$ be an admissibly represented space.
 Then a partial function $f\colon \XX_1 \times\dotsc\times \XX_k \parto \YY$ is continuously realizable\index{continuous realizability},
 if, and only if,
 $f$ is sequentially continuous\index{sequentially continuous} 
 between the associated topological spaces $\calT(\XX_1),\dotsc,\calT(\XX_k)$
 and $\calT(\YY)$.
\end{theorem}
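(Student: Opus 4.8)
The plan is to reduce to a single source space and then prove the two implications separately, the second being the substantial one. Since each projection $\XX_1\times\dots\times\XX_k\to\XX_i$ is continuously realizable, hence topologically continuous by Proposition~\ref{p:contrel2topcontinuous}, and since (as for finite products of represented spaces in general) the associated space $\calT(\XX_1\times\dots\times\XX_k)$ carries exactly the coordinatewise convergent sequences, a map is sequentially continuous ``between $\calT(\XX_1),\dots,\calT(\XX_k)$ and $\calT(\YY)$'' precisely when it is sequentially continuous as a partial map from $\calT(\XX_1\times\dots\times\XX_k)$ to $\calT(\YY)$. Writing $\XX:=\XX_1\times\dots\times\XX_k$ with product representation $\delta:=\delta_1\boxtimes\dots\boxtimes\delta_k$, it therefore suffices to prove: a partial $f\colon\XX\parto\YY$ is continuously realizable if and only if it is sequentially continuous as a partial map $\calT(\XX)\parto\calT(\YY)$, where $\YY=(Y,\gamma)$ with $\gamma$ admissible for $\calT(\YY)$.

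For ``continuously realizable $\Rightarrow$ sequentially continuous'', I would start from a continuous realizer $g$ of $f$. Restricting $f$ to its domain gives a total function $\tilde f\colon(\dom(f),\delta|^{\dom(f)})\to\YY$ which is still realized by $g$, so by Proposition~\ref{p:contrel2topcontinuous} it is topologically continuous from $\calT(\dom(f),\delta|^{\dom(f)})$ to $\calT(\YY)$, hence sequentially continuous. It then remains to note that a sequence lying in $\dom(f)$ and converging in $\calT(\XX)$ also converges in $\calT(\dom(f),\delta|^{\dom(f)})$ — corestriction of a representation does not destroy convergence of sequences taking values in the retained subset — so $f$ sends such a sequence and its limit as required.

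For the converse, suppose $f\colon\XX\parto\YY$ is sequentially continuous. I would form $\phi:=f\circ\delta\colon\Nomega\parto\calT(\YY)$ with $\dom(\phi)=\delta^{-1}[\dom(f)]=\dom(f\delta)$, and the key step is to show that \emph{$\phi$ is topologically continuous}. Indeed, $\dom(\phi)$ is a subspace of the metrisable space $\Nomega$, hence itself metrisable and in particular sequential, and $\calT(\YY)$ is sequential by Example~\ref{ex:finaltopology:sequential}; for a total unary map between sequential spaces, sequential and topological continuity coincide, so it is enough to check that $\phi$ is sequentially continuous. And it is: if $p_n\to p$ in $\dom(\phi)$ then $\delta(p_n)\to\delta(p)$ in $\calT(\XX)$ because representations are continuous with respect to their final topologies, and since these points lie in $\dom(f)$ the sequential continuity of $f$ gives $f(\delta(p_n))\to f(\delta(p))$ in $\calT(\YY)$, i.e.\ $\phi(p_n)\to\phi(p)$. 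Now $\phi$ is a partial continuous function into the topological space $\calT(\YY)$ for which $\gamma$ is admissible, so the maximality property of $\gamma$ (Definition~\ref{def:admissible:Top}(b)) yields a continuous $g\colon\dom(\phi)\to\dom(\gamma)$ with $\gamma\circ g=\phi$. Then $\gamma(g(p))=f(\delta(p))$ for every $p\in\dom(f\delta)$, so $g$ is a continuous realizer of $f$; precomposing with the Baire tupling function turns it into a realizer in the multivariate sense with respect to $\delta_1,\dots,\delta_k$ and $\gamma$.

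I expect the main obstacle to be precisely the claim that $\phi=f\circ\delta$ is topologically, not merely sequentially, continuous. The usual equivalence of sequential and topological continuity fails for partial maps and for maps out of products, so one cannot argue directly on $\calT(\XX)$; the trick is to push the problem onto $\dom(f\delta)\subseteq\Nomega$, which is metrisable, and this is also the reason the theorem can only assert \emph{sequential} continuity on the source side. Everything else — continuity of representations relative to their final topologies, sequentiality of associated spaces, the coincidence of the two notions of continuity for total unary maps between sequential spaces, the coordinatewise nature of convergence in $\calT$ of a finite product, the stability of convergence under corestriction, and the maximality property of admissible representations — is available from the material already set up, so the write-up should be short once these are invoked.
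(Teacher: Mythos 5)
Your converse direction (sequential continuity $\Rightarrow$ continuous realizability) is correct and is the intended argument: pass to $\phi=f\circ\delta$ on $\dom(f\delta)\subseteq\Nomega$, use metrisability of that domain to upgrade sequential to topological continuity, and then apply the maximality property of the admissible target representation. That part of the write-up is fine, and it is the only place where admissibility of $\gamma$ enters.

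The genuine gap is in the direction you treat as routine. The claim that a sequence in $\dom(f)$ converging in $\calT(\XX)$ still converges in $\calT(\dom(f),\delta|^{\dom(f)})$ is false: the final topology of a corestriction is in general strictly finer than the trace of $\tau_\delta$ on $\dom(f)$, since $\delta^{-1}[U]$ only needs to be open relative to the smaller set $\delta^{-1}[\dom(f)]$. Concretely, let $X=\{x,w,v\}$ and let $\delta$ have domain $\{0^\omega\}\cup\{0^k10^\omega \mid k\in\IN\}\cup\{20^\omega\}\cup\{20^k10^\omega \mid k\in\IN\}$ with $\delta(0^\omega)=x$, $\delta(0^k10^\omega)=\delta(20^\omega)=w$ and $\delta(20^k10^\omega)=v$. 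Then $\tau_\delta=\{\emptyset,\{v\},\{v,w\},X\}$, so the constant sequence $(v)_n$ converges to $x$ in $\calT(\XX)$; but $\delta^{-1}[\{x,v\}]$ is a discrete subspace of $\Nomega$, so in $\calT(\{x,v\},\delta|^{\{x,v\}})$ only eventually constant sequences converge, and your bridge between the two spaces collapses. (The same phenomenon undermines the unproved half of your claim that $\calT(\XX_1\times\dotsb\times\XX_k)$ carries exactly the coordinatewise convergent sequences; only the half you need for the converse direction is immediate.) The real content of the forward direction is the ability to lift (subsequences of) convergent sequences of arguments to convergent sequences of names, which admissible representations provide (Proposition~\ref{p:adm:quotient}) but arbitrary source representations do not: in the example above no sequence of $\delta$-names of $v$ converges to the unique name $0^\omega$ of $x$, and indeed the partial map $f$ with $\dom(f)=\{x,v\}$, $f(x)=0$, $f(v)=1$ into $(\IN,\varrho_\IN)$ has a continuous realizer on the discrete set $\dom(f\delta)$ while $(f(v))_n$ does not converge to $f(x)$. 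So the forward implication must be argued at the level of the lifted convergence relation of the $\XX_i$ (their associated limit spaces, as in \cite{Sch:phd}), not through corestrictions of final topologies; this, not the converse, is where the weight of the theorem lies.
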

\noindent
Since the topological product of the spaces $\calT(\XX_i)$ is not necessarily sequential, continuous realizability only implies sequential continuity.

Admissible representations have the following quotient properties.

\pagebreak[3]
\begin{proposition}
\label{p:adm:quotient}
Let $\delta$ be an admissible representation for a topological space $\WX$.
\begin{enumerate}
 \item 
  $\delta$ lifts convergent sequences, 
  i.e., for every sequence $(x_n)_n$ converging in $\WX$ to some $x \in \WX$ 
  there is a sequence $(p_n)_n$ converging in $\Nomega$ to some $p \in \Nomega$
  such that $\delta(p_m)=x_m$ for all $m \in \IN$ and $\delta(p)=x$.
 \item \label{en:admquotient}
  $\delta$ is a quotient representation for $\WX$
  if, and only if, $\WX$ is a sequential space.
 \item
  $\delta$ is an admissible quotient representation for the sequential coreflection of $\WX$.
\end{enumerate}
\end{proposition}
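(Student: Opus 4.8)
The plan is to prove~(1) directly from the maximality property, and then to read off~(2) and~(3), using that the final topology of any represented space is sequential (Example~\ref{ex:finaltopology:sequential}) and that $\seq(\WX)$ is the unique sequential space with the same convergence relation as~$\WX$ (Subsection~\ref{sub:Sequentialisation}).

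For~(1), fix $(x_n)_n \To{\WX} x$. I embed the one-point compactification $\INu$ of $\IN$ into the Baire space as $D := \{\,\w{0}^n\w{1}\w{0}^\omega \mid n\in\IN\,\} \cup \{\w{0}^\omega\}$, so that each $\w{0}^n\w{1}\w{0}^\omega$ is isolated in~$D$ while $\w{0}^n\w{1}\w{0}^\omega \to \w{0}^\omega$ in~$\Nomega$. Define $\phi\colon D \to \WX$ by $\phi(\w{0}^n\w{1}\w{0}^\omega) := x_n$ and $\phi(\w{0}^\omega) := x$; this is topologically continuous, since the only neighbourhoods to test are those of $\w{0}^\omega$, and if an open $U \subseteq \WX$ contains $x$ then $x_n \in U$ eventually because $(x_n)_n \to x$. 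Regarding $\phi$ as a partial continuous function $\Nomega \parto \WX$ and applying the maximality property of~$\delta$ (condition~(b) of Definition~\ref{def:admissible:Top}), I get a continuous $g\colon D \to \dom(\delta)$ with $\phi = \delta \circ g$. Put $p_n := g(\w{0}^n\w{1}\w{0}^\omega)$ and $p := g(\w{0}^\omega)$. Continuity of~$g$ between the metric spaces $D$ and $\Nomega$ preserves the convergence $\w{0}^n\w{1}\w{0}^\omega \to \w{0}^\omega$, so $(p_n)_n \to p$ in~$\Nomega$, and $\delta(p_m) = \phi(\w{0}^m\w{1}\w{0}^\omega) = x_m$, $\delta(p) = x$.

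For~(2): if $\delta$ is a quotient representation for~$\WX$, then $\WX = (X,\tau_\delta) = \calT(X,\delta)$, which is sequential by Example~\ref{ex:finaltopology:sequential}. Conversely, assume $\WX$ is sequential. Continuity of~$\delta$ gives $\tau_\WX \subseteq \tau_\delta$. For the reverse inclusion, take $U \subseteq X$ with $\delta^{-1}[U]$ open in $\dom(\delta)$; since $\WX$ is sequential it is enough to check that $U$ is sequentially open. Given $(x_n)_n \To{\WX} x$ with $x \in U$, lift it by~(1) to $(p_n)_n \to p$ in $\Nomega$ with $\delta(p_n) = x_n$ and $\delta(p) = x$; then $p \in \delta^{-1}[U]$, an open subset of $\dom(\delta)$, so $p_n \in \delta^{-1}[U]$ for almost all~$n$, i.e.\ $x_n \in U$ for almost all~$n$. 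Hence $U$ is sequentially open, thus open, and $\tau_\delta = \tau_\WX$. For~(3), I first note that $\delta$ is admissible also for $\seq(\WX)$: conditions~(a),(b) of Definition~\ref{def:admissible:Top} only refer to topological continuity of partial maps whose domains are subspaces of~$\Nomega$, and such subspaces are metrisable, hence sequential; for a sequential source, topological continuity into~$\WX$, sequential continuity, and topological continuity into $\seq(\WX)$ all coincide, because sequential openness depends only on the (shared) convergence relation. Thus the partial continuous maps $\Nomega \parto \WX$ and $\Nomega \parto \seq(\WX)$ are the same maps, so conditions~(a),(b) hold for $\seq(\WX)$ verbatim. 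Since $\seq(\WX)$ is sequential, part~(2) applied to it shows $\delta$ is a quotient representation for $\seq(\WX)$, and so an admissible quotient representation for it. (Directly, one may instead observe that $\tau_\delta$ is sequential and shares all convergent sequences with $\tau_\WX$ --- each $\tau_\WX$-convergent sequence lifts by~(1), hence is $\tau_\delta$-convergent, and $\tau_\WX \subseteq \tau_\delta$ gives the converse --- whence $\tau_\delta$ is the sequentialisation of $\tau_\WX$ by uniqueness.)

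The crux is step~(1): everything hinges on choosing the right partial continuous probe $\phi$ so that the maximality property returns the desired lifted sequence, after which~(2) and~(3) are routine manipulations of sequential spaces and final topologies. A lesser subtlety in~(3) is seeing that admissibility is insensitive to replacing $\WX$ by its sequential coreflection, which uses that the source spaces in Definition~\ref{def:admissible:Top} are subspaces of~$\Nomega$ and would fail for general sources.
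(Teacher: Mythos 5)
Your proof is correct. The paper itself only cites external references for this proposition, and your arguments are precisely the standard ones behind those citations: the $\INu$-probe $D=\{\w{0}^n\w{1}\w{0}^\omega\mid n\in\IN\}\cup\{\w{0}^\omega\}$ combined with the maximality property for (1), sequential openness of $\tau_\delta$-open sets for (2), and the observation that admissibility is insensitive to passing to the sequential coreflection (valid because the source spaces in Definition~\ref{def:admissible:Top} are metrisable subspaces of $\Nomega$) for (3).
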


Though the decimal representation $\decrep$ is a quotient representation for the Euclidean space~$\IR$, 
it does not lift convergent sequences. 
Hence it is not admissible for $\IR$.
By contrast, the Cauchy representation enjoys admissibility.
This explains why the latter induces a useful notion of computability, whereas the former fails to do so.

\pagebreak[3]
\begin{example}\label{ex:sdrep:adm} \rm
 The Cauchy representation and the signed-digit representation are admissible for the Euclidean space $\IR$, 
 whereas the decimal representation is not.   
\end{example}

Proofs of this subsection can be found in \cite{Sch:phd}.

%
\subsection{Examples of Admissible Representations}
\label{sub:ExamplesAdmissible}

We present admissible representations for some relevant spaces.

\pagebreak[3]
\begin{example}[Admissible representations for relevant metric spaces] \rm \quad \label{ex:adm:reps} 
\begin{enumerate} \vspace*{-1ex}
 \item
  The identity on $\Nomega$ is an admissible representation for $\Nomega$. 
 \item
  For the discrete natural numbers $\IN$
  we use as canonical admissible representation $\varrho_\IN$
  the one defined by $\varrho_\IN(p):=p(0)$.
 \item
  The one-point compactification $\INu$ has as carrier set $\IN \cup \{\infty\}$.
  Its topology has as basis the sets $\{a\}$ and $\{ \infty, n \,|\, n \geq a \}$ for all $a \in \IN$.
  An admissible representation $\varrho_{\INu}$ for $\INu$ is defined by
  \[
   \varrho_{\INu}(p):=\left\{
    \begin{array}{cl}
      \infty & \text{if $p=\w{0}^\omega$}
      \\[0.5ex]   
      \min\{n \in \IN \,|\, p(n) \neq 0\}  & \text{otherwise.}
    \end{array}\right.
  \]
  for all $p \in \Nomega$.
 \item
  Any \index{Polish space}Polish space $\WX$ has a total admissible representation.
  We briefly sketch a construction.
  Let $d\colon \WX \times \WX \to \IR$ be a complete metric
  inducing the topology of $\WX$,
  and let $\{a_i \,|\, i \in \IN\}$ be a dense subset of $\WX$.
  Using the continuous function $\ell\colon \Nomega \to \INu$ given by
  \[
    \ell(p):=
    \min\big( \{\infty\} \cup \{m \in \IN \,|\, d(a_{p(m)},a_{p(m+1)})\geq 1/2^{m+1} \} \big),
  \]
  we define a total representation $\delta\colon\IN^\IN \to \WX$ by
  \[
   \delta(p):= \left\{
    \begin{array}{cl}
     \lim\limits_{i \to \infty} a_{p(i)}
     & \text{if $d(a_{p(m)},a_{p(m+1)}) < 1/2^{m+1}$ for all $m \in \IN$}
     \\
     a_{p(\ell(p))} & \text{otherwise.}
    \end{array}\right.  
  \]
  Since $d( \delta(p), a_{p(m)} ) \leq 2^{-m}$
  holds for all $p \in \IN^\IN$ and $m \leq \ell(p)$, $\delta$~is continuous.
  The maximality property follows from the fact that the restriction of $\delta$
  to $\ell^{-1}\{\infty\}$ is a version of the Cauchy representation for $\WX$
  similar to the admissible representation defined in \cite[Definition 3.4.17]{Wei00}. 
\end{enumerate}  
\end{example}

\noindent
We turn our attention to non-Hausdorff examples.
  
\pagebreak[3]  
\begin{example}[Admissible representations for relevant non-$T_2$-spaces] 
\rm \quad 
\label{ex:adm:reps:NonT2}\label{ex:Sierpinski}\label{ex:EnRep}\label{ex:deltabeta}
\begin{enumerate} \vspace*{-1ex}
 \item
  The Sierpi{\'n}ski space $\IS$ has $\{\bot,\top\}$ as its carrier set
  and is topologised by the family $\{ \emptyset,\{\top\},\{\bot,\top\} \}$.
  Note that the constant sequence $(\top)_n$ converges both to $\top$ and to $\bot$.
  A total admissible representation $\varrho_\IS$ for $\IS$ is defined by
  \[
   \varrho_\IS(p) := \left\{
   \begin{array}{ll}   
    \bot &\text{if $p = \w{0}^\omega$}
    \\
    \top &\text{otherwise,}
   \end{array}\right. 
  \]
  where $\w{0}^\omega$ denotes the constant sequence consisting of $\w{0}$'s.
 \item
  Scott's graph model $\Pomega$ is a prime example of a Scott domain.
  Its carrier set is the power set of $\IN$ ordered by set inclusion $\subseteq$.
  A countable basis of the Scott topology on $\Pomega$ is given by the family of sets
  $\{ M \subseteq \IN \,|\, E \subseteq M \}$,
  where $E$ varies over the finite subsets of~$\IN$.
  The enumeration representation $\En\colon \Nomega \to \Pomega$ defined by
  \[
   \En(p):=\{ p(i)-1 \,|\, i \in \IN \;\&\; p(i)>0 \} 
  \]
  is an admissible representation for $\Pomega$ viewed as a topological space endowed with the Scott topology (cf.\ \cite{Wei00}).
 \item
  The \emph{lower reals} $\IR_<$ are endowed with the topology $\{ (a;\infty) \,|\, a \in \IR \} \cup \{ \emptyset, \IR\}$.
  An admissible representation $\delta_<$ for $\IR_<$ is given by: 
  $\delta_<(p)=x \,:\Longleftrightarrow\, \En(p)=\{ i \,|\, \nu_\IQ(i)< x \}$,
  where $\nu_\IQ$ is a canonical numbering of the rational numbers.
 \item
  Any second-countable $T_0$-space $\WX$ has an admissible representation (cf. \cite{KW85}).
  Given a countable subbase $\{ \beta_i \,|\, i \in \IN\}$ with $\bigcup_{i\in\IN} \beta_i=X$, one defines $\delta_\beta\colon \Nomega \parto X$ by
  \[  
    \delta_\beta(p)=x \;:\Longleftrightarrow\;
     \En(p) = \{ i \in \IN \,|\, x \in \beta_i \} \,.
  \]
  So $p$ is a name of $x$ if, and only if, $p$ lists all (indices of) subbasic sets which contain $x$.
  Remember that $T_0$-spaces are exactly those topological spaces $\WY$ 
  $\eta_\WY\colon y \mapsto \{ V \text{ open in $\WY$} \,|\, y \in V \}$
  is injective, ensuring that $\delta_\beta$ is indeed well-defined.
  C.~Kreitz and K.~Weihrauch call $\delta_\beta$ a \emph{standard representation} for $\WX$ in \cite{KW85}
  and showed that $\delta_\beta$ is continuous and has the maximality property. 
  Hence $\delta_\beta$ is admissible in the sense of Definition~\ref{def:admissible:Top}.
 \item
  M.~de Brecht has characterised the second-countable $T_0$-spaces
  enjoying a total admissible representation as the quasi-Polish spaces (see \cite{dBre13}).  
  Quasi-Polishness is a completeness notion for quasi-metric spaces.
  For example, Scott's graph model $\Pomega$ is quasi-Polish.
\end{enumerate}
\end{example}

%
\subsection{The Admissibility Notion by Kreitz and Weihrauch}
\label{sub:KWadmissible}
C.~Kreitz and K.~Weihrauch introduced a notion of admissible representations  taylor-made for representing $T_0$-spaces $\WX$ with a countable base.
They call a representation $\phi$ for $\WX$ \emph{admissible},
if $\phi$ is topologically equivalent to the standard representation $\delta_\beta$ presented in Example~\ref{ex:adm:reps:NonT2}.
This notion of admissibility guarantees equivalence of continuous realizability 
and topological continuity for partial multivariate functions between second-countable $T_0$-spaces.
This equivalence is referred to as \index{Main Theorem of Computable Analysis}\emph{Main Theorem} in \cite{Wei87,Wei00}.  
Moreover, C.~Kreitz and K.~Weihrauch proved in \cite{KW85} that $\phi$ is topologically equivalent to $\delta_\beta$
if, and only if, $\phi$ is continuous and has the maximality property.
Therefore this admissibility notion agrees with the one in Definition~\ref{def:admissible:Top} for representations of second-countable $T_0$-spaces.

%
\subsection{Constructing Admissible Representations}
\label{sub:construction:adm}

In Subsection~\ref{sub:RepNomega:ClosureProps} we have seen how to form products, coproducts, meets and function spaces in $\RepN$.
The corresponding constructions\index{construction of representations} preserve admissibility.

\pagebreak[3]
\begin{proposition}
\label{p:prodrep:adm}
 Let $\delta_i$ be an admissible representation for a topological space $\WX_i$.
\begin{enumerate} 
 \item
  $\delta_1 \boxtimes \delta_2$ is an admissible rep.\ for the topological product of $\WX_1$ and $\WX_2$.
 \item
  $\mybigboxtimes_{i\in \IN} \delta_i$ is an admissible representation for the Tychonoff product $\prod_{i \in \IN} \WX_i$.
 \item
  $\delta_1 \boxplus \delta_2$ is an admissible representation for the binary coproduct $\WX_1 \uplus \WX_2$.
 \item
  $\mybigboxplus_{i\in \IN} \delta_i$ is an admissible representation for the countable coproduct $\biguplus_{i \in \IN} \WX_i$.  
 \item
  $\delta_1 \wedge \delta_2$ is an admissible representation for the meet $(X_1 \cap X_2,\setO(\WX_1) \wedge \setO(\WX_2))$.
 \item
  $\mybigwedge_{i\in \IN} \delta_i$ is an admissible representation for the meet $(\bigcap_{i \in \IN} \WX_i, \bigwedge_{i \in \IN} \setO(\WX_i))$.   
\end{enumerate}
\end{proposition}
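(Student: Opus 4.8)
The plan is to verify admissibility of each constructed representation by checking the two defining conditions of Definition~\ref{def:admissible:Top}: continuity (condition (a)) and the maximality property (condition (b)). For condition (a), continuity of each constructor follows routinely: the tupling/cotupling functions $\langle\cdot,\cdot\rangle$, $\pi_1,\pi_2$, $\pi_{\infty,i}$ are computable (hence continuous) by the results of Subsection~\ref{sub:Bairespace}, and each of $\delta_1\boxtimes\delta_2$, $\mybigboxtimes_{i}\delta_i$, $\delta_1\boxplus\delta_2$, $\mybigboxplus_i\delta_i$, $\delta_1\wedge\delta_2$, $\mybigwedge_i\delta_i$ is obtained by composing these with the $\delta_i$, which are continuous by hypothesis; since the target topology in each case is (by definition) the product/coproduct/meet topology and the relevant projections and inclusions into the factors are continuous, continuity of the composite representation into the constructed space is immediate.

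The substantive part is the maximality property. For each construction I would take an arbitrary partial continuous $\phi\colon\Nomega\parto\WX$ into the constructed space and build a continuous $g\colon\dom(\phi)\to\Nomega$ with $\phi=\delta\circ g$, using the maximality of the component representations. For the binary product: given $\phi\colon\Nomega\parto\WX_1\times\WX_2$, postcompose with the two (continuous) topological projections to get $\phi_j\colon\Nomega\parto\WX_j$; maximality of $\delta_j$ yields continuous $g_j$ with $\phi_j=\delta_j\circ g_j$ on $\dom(\phi)$; then $g:=\langle g_1,g_2\rangle$ works. For countable products the same argument runs with $\pi_{\infty,i}$ in place of $\pi_1,\pi_2$, yielding continuous $g_i$ and $g:=\langle g_0,g_1,\dots\rangle$. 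For the meets, a partial continuous $\phi\colon\Nomega\parto X_1\cap X_2$ (resp.\ $\bigcap_i X_i$) is, by definition of the meet topology, simultaneously a partial continuous map into each $\WX_i$, so again maximality gives continuous $g_i$ and tupling gives $g$; one just checks $(\mybigwedge_i\delta_i)(g(p))=x$ for each value $x=\phi(p)$, which holds since $\delta_i(\pi^\infty_i(g(p)))=\phi(p)$ for all $i$.

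The coproduct cases require a preliminary decomposition step. Given a partial continuous $\phi\colon\Nomega\parto\WX_1\uplus\WX_2$, the sets $\dom(\phi)\cap\phi^{-1}[\{1\}\times X_1]$ and $\dom(\phi)\cap\phi^{-1}[\{2\}\times X_2]$ are relatively clopen in $\dom(\phi)$ (the two summands are clopen in the coproduct), and on each piece $\phi$ restricts to a partial continuous map into $\WX_j$; applying maximality of $\delta_j$ on each piece and gluing (prepending the tag $j$) produces a continuous $g$ with $\phi=(\delta_1\boxplus\delta_2)\circ g$. The countable coproduct is analogous with a countable clopen partition indexed by the tag. The one point deserving care — and the place I expect the only real friction — is ensuring that the glued $g$ is genuinely continuous and that the tag component can be computed continuously from $p\in\dom(\phi)$: this is fine because continuity of $\phi$ into a coproduct means each $p\in\dom(\phi)$ has a neighbourhood (in $\dom(\phi)$) on which $\phi$ stays in a single summand, so the tag is locally constant, hence continuous. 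Once this partition-of-unity-style bookkeeping is handled, all six statements follow; alternatively one may cite Proposition~\ref{p:adm:leqtmaximal} and phrase the whole argument as: the constructed representation is $\leqt$-maximal among continuous representations of the constructed space, reducing everything to the universal properties of product, coproduct and meet in the poset of continuous representations under $\leqt$.
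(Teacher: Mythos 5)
Your argument is correct and is essentially the standard one; the paper gives no proof of its own here but simply cites Propositions 4.1.8 and 4.5.10 of \cite{Sch:phd} (and Section 4.3 of \cite{Sch02} for binary products), and your direct verification --- continuity of the constructed representation via the universal properties of the topological product, coproduct and meet, and maximality by post-composing $\phi$ with the projections (resp.\ decomposing $\dom(\phi)$ along the relatively clopen, locally constant tag for coproducts, resp.\ reading $\phi$ as a continuous map into each factor for meets) and then tupling the resulting translators --- is exactly what those cited proofs amount to. The only points worth making explicit are the trivial surjectivity of the constructed representations and, for the meet, that the inclusion of the meet space into each $\WX_i$ is continuous because $U_i\cap\bigcap_j X_j$ is a basic open of the meet topology; both are immediate.
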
 

\noindent
Here $\bigwedge_{i \in \IN} \setO(\WX_i)$ denotes the countable meet topology which is generated by the basis
\[
 \big\{ \bigcap\nolimits_{i=0}^k U_i \,\big|\, k \in \IN,\; \text{$U_i$ open in $\WX_i$ for all $i \leq k$} \big\}
 \,.
\]
The binary meet topology $\setO(\WX_1) \wedge \setO(\WX_2)$ is defined analogously.
 
For the function space representation to be admissible
it suffices that the target space representation is admissible,
whereas the source space representation only has to have appropriate quotient properties.
This fact is crucial in showing that every qcb$_0$-space has an admissible representation (cf. Subsection~\ref{sub:qcb:admissible}).
Recall that any admissible representation lifts convergent sequences;
moreover, it is a quotient representation, whenever the represented topological space is sequential
(cf.\ Proposition~\ref{p:adm:quotient}).
 
\pagebreak[3] 
\begin{proposition} 
\label{p:funcrep:adm}
 Let $\gamma$ be an admissible representation for a topological space $\WY$.  
\begin{enumerate}
 \item
  Let $\delta$ be a quotient representation for a topological space $\WX$.
  Then $[\delta\to\gamma]$ is an admissible representation for the space of all total topologically continuous functions from $\WX$ to $\WY$ endowed with the compact-open topology.
 \item
  Let $\delta$ be a continuous representation of a topological space $\WX$ that lifts convergent sequences.
  Then $[\delta\to\gamma]$ is an admissible representation for the space of all total sequentially continuous functions from $\WX$ to $\WY$ endowed with the compact-open topology.
\end{enumerate}  
\end{proposition}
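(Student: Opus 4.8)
The plan is to verify the two clauses of Definition~\ref{def:admissible:Top} for the representation $[\delta\to\gamma]$, namely (a) continuity with respect to the compact--open topology and (b) the maximality property. Before that I would pin down its carrier set. Since $\gamma$ is admissible, $\calT(\YY)=\WY$; in part~(1) the quotient hypothesis gives $\calT(\XX)=\WX$, which is therefore sequential by Example~\ref{ex:finaltopology:sequential}, while in part~(2) continuity of $\delta$ together with the lifting of convergent sequences force $\WX$ and $\calT(\XX)$ to have the same convergent sequences, so $\calT(\XX)=\seq(\WX)$. By Theorem~\ref{th:maintheo} the total continuously realizable maps $\XX\to\YY$ are then precisely the topologically continuous maps $\WX\to\WY$ in part~(1), and --- since topological continuity out of $\seq(\WX)$ coincides with sequential continuity out of $\WX$ --- precisely the sequentially continuous maps $\WX\to\WY$ in part~(2). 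So $[\delta\to\gamma]$ is indeed a representation of the stated function space.

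\emph{Maximality.} Let $\phi\colon\Nomega\parto\setC(\XX,\YY)$ be a partial continuous function; by Definition~\ref{def:admissible:Top}(b) I must factor $\phi=[\delta\to\gamma]\circ g$ through a continuous $g$. Turn $\dom(\phi)$ into a represented space $\ZZ$ via the inclusion map into $\Nomega$, so that $\calT(\ZZ)$ is a metrizable subspace of $\Nomega$, and put $h\colon\ZZ\times\XX\to\YY$, $h(p,x):=\phi(p)(x)$. I claim $h$ is sequentially continuous into $\calT(\YY)=\WY$: if $p_n\to p$ and $x_n\to x$ in the associated spaces, then $\phi(p_n)\to\phi(p)$ in the compact--open topology (continuity of $\phi$) and $\phi(p)(x_n)\to\phi(p)(x)$ (as $\phi(p)$ is continuous, resp.\ sequentially continuous), so for any open $V\ni\phi(p)(x)$ there is an $N$ with $\phi(p)[K]\subseteq V$ for the set $K:=\{x\}\cup\{x_n\mid n\ge N\}$; this $K$ is quasi-compact, hence $\{f\mid f[K]\subseteq V\}$ is an open neighbourhood of $\phi(p)$, and therefore contains $\phi(p_n)$ eventually, giving $\phi(p_n)(x_n)\in V$ eventually. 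By Theorem~\ref{th:maintheo:multivariate}, $h$ is continuously realizable, so the continuous smn-Theorem (Subsection~\ref{sub:RepNomega:ClosureProps}, Proposition~\ref{p:etaFww}) makes $\mathit{curry}(h)\colon\ZZ\to\YY^\XX$ continuously realizable; since $\mathit{curry}(h)(p)=\phi(p)$, a continuous realizer of $\mathit{curry}(h)$ restricts on $\dom(\phi)$ to the desired $g$.

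\emph{Continuity, and the main obstacle.} As $\dom([\delta\to\gamma])$ is a subspace of $\Nomega$, hence metrizable and sequential, $[\delta\to\gamma]$ is topologically continuous into the compact--open space iff it is sequentially continuous. So I would take $p_n\to p$ in $\dom([\delta\to\gamma])$, set $f_n:=[\delta\to\gamma](p_n)$ and $f:=[\delta\to\gamma](p)$, fix a compact $K\subseteq\WX$ and an open $V\subseteq\WY$ with $f[K]\subseteq V$, and suppose towards a contradiction that, along a subsequence, $x_k\in K$ satisfy $f_{n_k}(x_k)\notin V$. The crux --- where the hypothesis on $\delta$ is genuinely used, and the step I expect to require the most care --- is a structural lemma that a compact $K\subseteq\WX$ is the $\delta$-image of a \emph{compact} set $C\subseteq\dom(\delta)$; in part~(1) this rests on $\delta$ being a quotient map, in part~(2) it is extracted from the lifting of convergent sequences. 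Granting it, choose $r_k\in C\cap\delta^{-1}[x_k]$; compactness of the metrizable space $C$ lets us pass to $r_{k_j}\to r\in C$, so $\delta(r)\in K$ and $f(\delta(r))\in V$. Writing $\gamma^{-1}[V]=\calO\cap\dom(\gamma)$ with $\calO\subseteq\Nomega$ open, continuity of the evaluation $(q,s)\mapsto\etaFww(q)(s)$ (Proposition~\ref{p:etaFww}) gives $\etaFww(p_{n_{k_j}})(r_{k_j})\to\etaFww(p)(r)\in\calO$, whence $f_{n_{k_j}}(x_{k_j})=\gamma\big(\etaFww(p_{n_{k_j}})(r_{k_j})\big)\in V$ for large $j$ --- a contradiction. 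Hence $f_n\to f$ in the compact--open topology, clause~(a) holds, and $[\delta\to\gamma]$ is admissible for the stated function space.
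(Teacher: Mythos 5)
Your overall architecture is the right one, and the maximality half is essentially sound: turning $\dom(\phi)$ into a represented space, verifying joint sequential continuity of $(p,x)\mapsto\phi(p)(x)$ via the compact set $\{x\}\cup\{x_n\mid n\ge N\}$, and then currying with Theorem~\ref{th:maintheo:multivariate} is exactly how that direction goes. (Two harmless imprecisions: $\calT(\YY)$ is $\seq(\WY)$ rather than $\WY$, and you should note that each $\phi(p)$ is itself continuously realizable, so that $\mathit{curry}(h)$ really lands in the carrier set of $\YY^\XX$.)

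The gap is in the continuity half, precisely at the step you flagged. The structural lemma you invoke --- every compact $K\subseteq\WX$ equals $\delta[C]$ for some compact $C\subseteq\dom(\delta)$ --- is \emph{false} for quotient representations. Take $M\subseteq\Nomega$ to be the disjoint sum of the convergent sequences $S_\infty=\{d\}\cup\{e_k\mid k\in\IN\}$ (with $e_k\to d$) and $S_k=\{d_k\}\cup\{e'_{k,i}\mid i\in\IN\}$ (with $e'_{k,i}\to d_k$), and let $q$ map $d\mapsto x$, $e_k\mapsto z_k$, $d_k\mapsto z_k$, $e'_{k,i}\mapsto x_i$, with $X=\{x\}\cup\{z_k\mid k\}\cup\{x_i\mid i\}$ carrying the quotient topology. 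Then $x_i\to x$ and $K:=\{x\}\cup\{x_i\mid i\}$ is compact; but any compact $C\subseteq q^{-1}[K]$ meets only finitely many of the clopen summands and meets each $S_k$ in a finite set (as $d_k\notin q^{-1}[K]$), so $q[C]$ is finite and never equals $K$. The same example shows that no subsequence of $(x_i)$ lifts to a convergent sequence of names (every sequence converging to $d$ eventually lies in $S_\infty$, whose image misses all $x_i$), so in part~(1) you also cannot retreat to ``lift the convergent subsequence $x_{k_j}\to x^*$''. What is true, and what your argument actually needs, is the weaker statement: for every compact $K$ and every open $U\supseteq K$ there is a compact $C\subseteq\dom(\delta)$ with $K\subseteq\delta[C]\subseteq U$. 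Applied with $U:=f^{-1}[V]$, your extraction of $r_{k_j}\to r\in C$ then yields $\delta(r)\in U$, hence $f(\delta(r))\in V$, and the rest of your contradiction goes through verbatim. But this ``lifting with slack'' is the real mathematical content of the proposition --- it is the key lemma behind \cite[Proposition 4.2.5]{Sch:phd} and the corresponding results in \cite{ELS04}, proved by a nontrivial finitely-branching tree construction exploiting countable compactness of $K$ together with the quotient (resp.\ sequence-lifting) property --- so ``granting it'' leaves the main work of the proof undone, and the version you granted is not even correct.
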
 

Now we turn our attention to subspaces and sequential embeddings.
A function $e\colon \WZ \to \WX$ is called a \emph{sequential embedding} of $\WZ$ into $\WX$,
if $e$ is injective and satisfies 
$(z_n)_n \To{\WZ} z \Longleftrightarrow (e(z_n))_n \To{\WX} e(z)$.

\pagebreak[3]
\begin{proposition}
\label{p:seqembedding:adm}
 Let $\delta$ be an admissible representation for a topological space $\WX$.
\begin{enumerate}
 \item
  Let $\WY$ be a topological subspace of\/ $\WX$.
  Then the corestriction $\delta|^\WY$ is an admissible representation for $\WY$.
 \item 
  Let $e\colon \WZ \to \WX$ be a sequential embedding of a topological space $\WZ$ into $\WX$.
  Then $\zeta\colon \Nomega \parto \WZ$ defined by
  $ \zeta(p)=z \;:\Longleftrightarrow\; e(z)=\delta(p)$
  is an admissible representation for $\WZ$.
\end{enumerate}
\end{proposition}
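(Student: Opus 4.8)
The plan is to verify the two defining conditions of admissibility — continuity and the maximality property — for each of the two representations, using the hypotheses on $\delta$ together with general facts about sequential embeddings and about how the Baire space interacts with subspaces.

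\medskip
\noindent\textbf{Part (1): corestriction to a subspace.}
First I would note that $\delta|^\WY$ has domain $\delta^{-1}[\WY]$ and maps $p$ to $\delta(p)$. For continuity of $\delta|^\WY\colon \delta^{-1}[\WY] \to \WY$, I would take an open set of $\WY$, which by definition of the subspace topology has the form $V \cap Y$ for $V$ open in $\WX$; its preimage under $\delta|^\WY$ is $\delta^{-1}[V] \cap \delta^{-1}[\WY]$, which is open in the subspace $\delta^{-1}[\WY]$ of $\Nomega$ because $\delta$ is continuous. For the maximality property, let $\phi\colon \Nomega \parto \WY$ be partial continuous. Composing with the inclusion $\iota\colon \WY \hookrightarrow \WX$ gives a partial continuous map $\iota\circ\phi\colon \Nomega \parto \WX$, to which I apply the maximality property of $\delta$: there is a continuous $g\colon \dom(\phi) \to \dom(\delta)$ with $\iota\circ\phi = \delta \circ g$. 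Since the image of $\iota\circ\phi$ lies in $Y$, we get $g[\dom(\phi)] \subseteq \delta^{-1}[Y] = \dom(\delta|^\WY)$, so $g$ corestricts to a continuous map into $\dom(\delta|^\WY)$ satisfying $\phi = \delta|^\WY \circ g$. Hence $\delta|^\WY$ is admissible for $\WY$.

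\medskip
\noindent\textbf{Part (2): pulling back along a sequential embedding.}
The representation $\zeta$ is well-defined because $e$ is injective, and it is surjective since $\delta$ is surjective onto $\WX$ and $\zeta^{-1}[z] = \delta^{-1}[e(z)]$ is nonempty exactly for $z$ with $e(z)$ in the image of $\delta$, i.e.\ for all $z \in Z$. The subtlety is that $e$ need only be a \emph{sequential} embedding, not a topological one, so $\zeta$ is continuous only in the sense that we can show via sequences — and here is where I would invoke that the associated space $\calT(\dom(\zeta))$ is sequential (Example~\ref{ex:finaltopology:sequential} applied to the represented space with representation $\zeta$, or more simply that $\Nomega$ is sequential and so is any subspace). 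Concretely: $\delta = e \circ \zeta$ on $\dom(\zeta) = \dom(\delta)$, so $\zeta$ is the corestriction of $\delta$ along $e^{-1}$; to see $\zeta$ is continuous, take $(p_n)_n \to p$ in $\dom(\zeta)$, then $(\delta(p_n))_n \To{\WX} \delta(p)$ by continuity of $\delta$, i.e.\ $(e(\zeta(p_n)))_n \To{\WX} e(\zeta(p))$, and the \emph{sequential embedding} property of $e$ gives $(\zeta(p_n))_n \To{\WZ} \zeta(p)$; since the domain of $\zeta$ as a subspace of $\Nomega$ is sequential, sequential continuity yields topological continuity of $\zeta$. For the maximality property, let $\phi\colon \Nomega \parto \WZ$ be partial continuous; then $e \circ \phi\colon \Nomega \parto \WX$ is sequentially continuous (composition of topologically continuous $\phi$ with the sequentially continuous $e$), hence topologically continuous since $\dom(\phi)$ is sequential, so by maximality of $\delta$ there is continuous $g\colon \dom(\phi) \to \dom(\delta)$ with $e\circ\phi = \delta\circ g = e\circ\zeta\circ g$; injectivity of $e$ forces $\phi = \zeta \circ g$, and $g$ lands in $\dom(\delta) = \dom(\zeta)$. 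Thus $\zeta$ is admissible for $\WZ$.

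\medskip
\noindent\textbf{Main obstacle.}
The routine part is the subspace case (1); the care is needed in (2) precisely because $e$ is only required to reflect and preserve convergent \emph{sequences}, not open sets. The key move is to reduce topological continuity to sequential continuity wherever a map out of a subspace of $\Nomega$ (which is sequential) is involved — both for showing $\zeta$ itself is continuous and for showing $e\circ\phi$ is continuous so that the maximality property of $\delta$ applies. One should also double-check that $\dom(\phi)$, being a $G_\delta$ (or arbitrary) subspace of the metrizable space $\Nomega$, is metrizable and hence sequential, so that "sequentially continuous $\Rightarrow$ topologically continuous" is legitimate there; this is the only non-formal point, and it is standard.
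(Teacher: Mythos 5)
Your proof is correct; the only structural difference from the paper is in part (1). The paper obtains (1) as a special case of (2): the inclusion $\iota\colon \WY \hookrightarrow \WX$ of a topological subspace is in particular a sequential embedding, and the corestriction $\delta|^\WY$ is precisely the representation $\zeta$ obtained by pulling $\delta$ back along $\iota$. You instead prove (1) directly from the definitions, which costs a few extra lines but keeps the easy subspace case free of the sequential-continuity machinery; both decompositions are fine. For part (2) the paper merely cites \cite[Proposition 4.1.6]{Sch:phd}, and your argument --- establish sequential continuity of $\zeta$ and of $e\circ\phi$, then upgrade to topological continuity using that every subspace of the metrizable space $\Nomega$ is metrizable and hence sequential --- is the expected one; you correctly isolate that upgrade as the one non-formal point, and it is exactly where the hypothesis that $e$ is only a \emph{sequential} embedding gets absorbed.

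One small correction: you twice assert $\dom(\zeta)=\dom(\delta)$, but in fact $\dom(\zeta)=\delta^{-1}[e[\WZ]]$, which is a proper subset of $\dom(\delta)$ whenever $e$ is not surjective. This does not damage the argument: the fact you actually need is that the translator $g$ produced by the maximality of $\delta$ lands in $\dom(\zeta)$, and that follows from $\delta(g(p))=e(\phi(p))\in e[\WZ]$ for all $p\in\dom(\phi)$ --- the same containment argument you already spell out in part (1). Replacing the erroneous equality by this one-line observation makes the proof fully correct.
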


For proofs we refer to \cite{Sch:phd,Sch02}.

\subsection{Pseudobases}
\label{sub:charac:Top:adm}

The class of topological spaces enjoying an admissible representation can be characterised with the help of the notion of a pseudobase.
A family $\calB$ of subsets of a topological space $\WX$ is called a \emph{pseudobase}\index{pseudobase} for $\WX$,
if for every open set $U$ in $\WX$ and for every sequence $(x_n)_n$ converging to some element $x \in U$
there is a set $B \in \calB$ with
 \[
   x\in B \subseteq U
   \quad\text{and}\quad x_n \in B \quad \text{for almost all $n \geq n_0$. }
 \]  
A family $\calB$ of subsets such that its closure under finite intersection
is a pseudobase for $\WX$ is called a \emph{pseudosubbase} for $\WX$.
Any base of a topological space is a pseudobase; the converse is not true in general. 
Pseudobases become interesting when they are countable.
Given pseudobases for spaces $\WX$ and $\WY$, there exist canonical constructions for forming pseudobases
for product, coproduct, meet, exponentiation, and subspaces (cf.\ \cite{ELS04,Sch:phd,Sch02}).

\pagebreak[3]
\begin{example}[Pseudobase generated by an admissible representation]
\label{ex:adm2psb}
\rm \quad \noindent
 Let $\delta$ be an admissible representation for a topological space $\WX$.
 Then the sets $\delta[u\Nomega \cap \dom(\delta)]$, where $u$ varies over the finite strings over $\IN$, 
 form a countable pseudobase for $\WX$.
\end{example}

Together with the $T_0$-property, the existence of a computable pseudobase characterises the class of admissibly representable topological spaces.

\pagebreak[3]
\begin{proposition}[Topological spaces with admissible representations] \quad
\label{p:adm:pseudobase}
\begin{enumerate} \vspace*{-1ex}
 \item 
  A topological space has an admissible representation 
  if, and only if,
  it has a countable pseudobase and satisfies the $T_0$-property.
 \item
  A topological space has an admissible quotient representation 
  if, and only if,
  it is sequential, has a countable pseudobase and satisfies the $T_0$-property.
\end{enumerate}
\end{proposition}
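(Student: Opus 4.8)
The plan is to prove both equivalences, deriving (2) from (1) at the end; write $\WX$ for the topological space under consideration.

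\emph{Forward direction of (1).} Suppose $\delta$ is an admissible representation of $\WX$. That $\WX$ has a countable pseudobase is exactly Example~\ref{ex:adm2psb}: the sets $\delta[u\Nomega\cap\dom(\delta)]$, with $u$ ranging over the finite strings over $\IN$, form one. For the $T_0$-property I argue by contradiction. If $x\neq y$ in $\WX$ were topologically indistinguishable, then for every subset $S\subseteq\Nomega$ the total function $\phi_S\colon\Nomega\to\WX$ with $\phi_S[S]=\{x\}$ and $\phi_S[\Nomega\setminus S]=\{y\}$ is topologically continuous, since no open set of $\WX$ separates $x$ from $y$, so each $\phi_S^{-1}[U]$ is $\emptyset$ or $\Nomega$. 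By the maximality property each $\phi_S$ factors as $\delta\circ g_S$ with $g_S\colon\Nomega\to\Nomega$ continuous (into $\dom(\delta)$), and since $\delta^{-1}[x]$ and $\delta^{-1}[y]$ are disjoint we obtain $g_S^{-1}[\delta^{-1}[x]]=S$. Thus $g\mapsto g^{-1}[\delta^{-1}[x]]$ would be a surjection of the set of continuous self-maps of $\Nomega$ onto $\mathcal{P}(\Nomega)$; but a continuous self-map of $\Nomega$ is determined by its restriction to a fixed countable dense set, so there are at most $\mathfrak{c}$ of them, whereas $|\mathcal{P}(\Nomega)|=2^{\mathfrak{c}}>\mathfrak{c}$. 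This contradiction yields the $T_0$-property.

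\emph{Reverse direction of (1).} This is the substantial part. Assume $\WX$ is $T_0$ and fix a countable pseudobase $\calB=(B_n)_{n\in\IN}$, which we may take to be closed under finite intersections. First note that $T_0$ together with the pseudobase property makes $x\mapsto\{n\in\IN\mid x\in B_n\}$ injective: given $x\neq y$, an open set separating them contains, by the pseudobase property, some $B_n$ separating them. A first attempt is therefore to mimic the standard representation $\delta_\beta$ of Example~\ref{ex:deltabeta}, letting a name of $x$ enumerate $\{n\mid x\in B_n\}$. This already gives a \emph{continuous} representation of $\WX$: if $U$ is open and $\delta(p)\in U$, the pseudobase property supplies some $B_n$ with $\delta(p)\in B_n\subseteq U$, the name $p$ commits to $n$ after a finite prefix $u$, and every $q\in u\Nomega\cap\dom(\delta)$ then satisfies $\delta(q)\in B_n\subseteq U$. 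However, this naive representation fails the maximality property in general, because membership $\phi(p)\in B_n$ of a value in a pseudobase element (which need not be open) is not witnessed by any finite prefix of $p$, so the required name cannot be recovered continuously from an arbitrary continuous $\phi\colon\Nomega\parto\WX$. The construction in \cite{Sch02,Sch:phd} repairs this by enriching names so that the pseudobase data is presented interleaved with witnesses of membership in \emph{open} sets, which \emph{are} extractable from finite prefixes through the continuity of $\phi$; showing that the resulting $\delta$ is still continuous and now also maximal is the technical heart of the argument. An alternative route passes to the sequential coreflection $\seq(\WX)$ — which has the same convergent sequences, is still $T_0$, and still carries a countable pseudobase — identifies it as a qcb$_0$-space, sequentially embeds it into a function space such as $\setC(\Nomega,\IS)$, and pulls back the admissible representation $[\mathrm{id}_{\Nomega}\to\varrho_\IS]$ along that embedding via Propositions~\ref{p:funcrep:adm} and~\ref{p:seqembedding:adm}.

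\emph{Deriving (2).} If $\delta$ is an admissible quotient representation for $\WX$, then by part~(1) the space $\WX$ is $T_0$ with a countable pseudobase, and since the quotient (final) topology of $\delta$ coincides with the topology of $\WX$, Example~\ref{ex:finaltopology:sequential} shows $\WX$ is sequential. Conversely, if $\WX$ is sequential, $T_0$ and has a countable pseudobase, then by the reverse direction of~(1) it carries an admissible representation $\delta$, and by Proposition~\ref{p:adm:quotient}(\ref{en:admquotient}) this $\delta$ is automatically a quotient representation because $\WX$ is sequential. The main obstacle in the whole argument is the construction, and above all the verification of the maximality property, in the reverse direction of~(1); the other implications are quick consequences of facts already recorded.
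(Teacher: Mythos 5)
Your argument follows the paper's route: the pseudobase comes from Example~\ref{ex:adm2psb}, your cardinality argument is a correct instantiation of the ``simple cardinality argument'' the paper invokes for the $T_0$-property, the hard converse of (1) is deferred to \cite{Sch02,Sch:phd} exactly as in the paper's own proof (the representation $\theta_\beta$ displayed right after the proposition statement is precisely the ``repaired'' standard representation you describe, with names enumerating pseudobase elements containing $x$ that are cofinal in its neighbourhood filter), and (2) is derived from (1) together with Proposition~\ref{p:adm:quotient}(2) just as the paper does. The only caveat is your parenthetical ``alternative route'': identifying $\seq(\WX)$ as a qcb$_0$-space from the existence of a countable pseudobase is, in this paper's development, itself a consequence of the proposition being proved, so that detour would be circular and should be dropped in favour of the explicit $\theta_\beta$ construction.
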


Given a countable pseudobase $\{ \beta_i \,|\, i \in \IN\}$ for a $T_0$-space $\WX$, 
an admissible representation $\theta_\beta$ for $\WX$ can be constructed by
\[
  \theta_\beta(p) = x \;:\Longleftrightarrow\; \left\{
  \begin{array}{l}
    \En(p) \subseteq \{ i \in \IN \,|\, x \in \beta_i \}
    \quad\&
    \\[1ex]
     \forall U \in \eta_\WX(x). \exists i \in \En(p). \beta_i \subseteq U
   \end{array} \right.
\]
where $\eta_\WX(x)$ denotes the family of open sets containing $x$ and $\En$ is the enumeration numbering of $\{ M \,|\, M \subseteq \IN\}$ from Example~\ref{ex:EnRep}.
The $T_0$-property guarantees that at most one element $x$ satisfies the right hand side of the displayed formula.
On the other hand, a simple cardinality argument shows that the existence of an admissible representation implies the $T_0$-property.

 
\section{Qcb-Spaces}
\label{sec:QCB}

We discuss the class of qcb-spaces\index{qcb-space} and their subclass of qcb$_0$-spaces.
Qcb$_0$-spaces are shown to be exactly the class of topological spaces that have an admissible quotient representation.
Their excellent closure properties exhibit them as an appropriate topological framework for studying computability on spaces with cardinality of the continuum.
Moreover, we investigate computability properties of qcb-spaces using the notion of an effective qcb-space.
Finally, we present quasi-normal spaces as an important subclass of Hausdorff qcb-spaces suitable for studying computability in Functional Analysis.


\subsection{Qcb-Spaces}
\label{sub:QCB}

A topological space $\WX$ is called a \emph{qcb-space}\index{qcb-space}
(\emph{q}uotient of a \emph{co}untably-\emph{b}ased space),
if $\WX$ can be exhibited as a topological quotient of a topological space $\WW$ with a countable base.
This means that there exists a surjection $q\colon \WW \twoheadrightarrow \WX$ such that
$\WX$ carries the finest (= largest) topology $\tau_q$ such that $q$ is topologically continuous
(in this case $q$ is called a \emph{quotient map}).
Clearly, this finest topology is equal to $\{ U \subseteq \WX \,|\, q^{-1}[U] \text{ open in } \WW \}$.
By being a topological quotient of a sequential space, any qcb-space is sequential as well (cf. \cite{Eng89}).
Recall that this implies that a total unary function between qcb-spaces is topologically continuous if, and only if, it is sequentially continuous.
We write $\QCB$ for the category of qcb-spaces with continuous functions as morphisms
and $\QCBZ$ for the full subcategory of \emph{qcb$_0$-spaces},
which are qcb-spaces satisfying the $T_0$-property.

Clearly, any separable metrisable space is a qcb$_0$-space by being second-count\-able and Hausdorff.
Moreover, the topological space associated to a represented space is a qcb-space. 
So henceforth we consider the functor $\calT$ from Subsection~\ref{sub:QuotientTopology} as a functor from $\RepN$ to $\QCB$.

The category $\QCB$ was first investigated in \cite{Sch:phd} 
under the acronym $\mathsf{AdmSeq}$.
In \cite{Bau00}, $\QCB$ was shown 
to be the same category as the category $\mathsf{PQ}$ of sequential spaces with an $\omega$-projecting cover.
Later the name `qcb-space' was coined by A.~Simpson in \cite{Sim03}. 
Further information about $\QCB$ can be found in \cite{BSS06,ELS04,LN15}.


\subsection{Operators on Qcb-Spaces} 
\label{sub:closureprop:qcb}

The category $\QCB$ of qcb-spaces and its full subcategory $\QCBZ$ have excellent closure properties.

\pagebreak[3]
\begin{theorem}\label{th:qcb:ccc}
 The categories $\QCB$ and $\QCBZ$\index{qcb-space}
 are cartesian closed\index{cartesian closed}
 and have all countable limits and all countable colimits.
\end{theorem}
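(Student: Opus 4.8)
The plan is to establish cartesian closedness and (co)completeness of $\QCB$ (and then of $\QCBZ$) by transporting the corresponding properties from the category $\RepN$ of represented spaces, exploiting the functor $\calT\colon\RepN\to\QCB$ together with the admissible representations furnished by Subsections~\ref{sub:construction:adm}--\ref{sub:charac:Top:adm}. The key structural fact is that $\QCBZ$ is a full reflective (and, on the relevant diagrams, also coreflective via sequentialisation) subcategory of $\QCB$, and that $\calT$ restricted to admissibly represented spaces is essentially an equivalence onto $\QCBZ$; so it suffices to build the limits and colimits at the level of represented spaces using the canonical constructors $\boxtimes$, $\boxplus$, $\wedge$, $[\,\cdot\to\cdot\,]$, corestriction, and the countable versions $\mybigboxtimes$, $\mybigboxplus$, and then read off that their images under $\calT$ have the right universal property in $\QCB$.

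First I would treat $\QCBZ$. Given qcb$_0$-spaces, pick admissible representations for them (Proposition~\ref{p:adm:pseudobase}, or directly the canonical constructions). For finite products and countable products, apply Proposition~\ref{p:prodrep:adm}: $\delta_1\boxtimes\delta_2$ and $\mybigboxtimes_{i}\delta_i$ are admissible for the Tychonoff products, hence those products are qcb$_0$-spaces; universality in $\QCBZ$ follows because a continuous map into a product is continuous into each factor (no sequentiality subtlety arises for the \emph{limit} side since the product topology is what we use). For the exponential, use Proposition~\ref{p:funcrep:adm}(1): since an admissible $\delta$ for $\WX$ is a quotient representation of the (sequential) space $\WX$, the function-space representation $[\delta\to\gamma]$ is admissible for $\setC(\WX,\WY)$ with the compact-open topology; combined with the utm- and smn-Theorems for $\etaFww$, and Theorem~\ref{th:adm:relcont=cont} identifying continuous realizability with sequential = topological continuity, this yields that $\calT(\YY^\XX)$ is the exponential in $\QCBZ$. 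For equalisers, use corestriction $\delta|^\WY$ together with Proposition~\ref{p:seqembedding:adm}(1): the equaliser of $f,g\colon\WX\to\WY$ is the subspace $\{x : f(x)=g(x)\}$, which carries an admissible representation, except one must pass to the sequential coreflection of that subspace (Proposition~\ref{p:adm:quotient}(3)) since topological subspaces of qcb-spaces need not be qcb; the sequentialisation is again a qcb$_0$-space and provides the equaliser in $\QCBZ$. Finite products plus equalisers give all finite limits; countable products plus equalisers give all countable limits. Dually, binary and countable coproducts come from $\delta_1\boxplus\delta_2$ and $\mybigboxplus_i\delta_i$ (Proposition~\ref{p:prodrep:adm}(3),(4)), and coequalisers are obtained as quotients: the set-theoretic quotient of a qcb$_0$-space carries a quotient representation, hence is a qcb-space; one then applies the $T_0$-reflection to land in $\QCBZ$. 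Finite coproducts plus coequalisers yield finite colimits, countable coproducts plus coequalisers all countable colimits.

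Next I would remove the $T_0$ assumption to get the statement for $\QCB$. The clean route is: $\QCBZ$ is a reflective subcategory of $\QCB$ (the $T_0$-reflection identifies topologically indistinguishable points; it preserves being a qcb-space and is left adjoint to the inclusion), so $\QCB$ inherits all colimits from $\QCBZ$ by reflecting. For limits and exponentials, note that every qcb-space is sequential and that the forgetful passage to $\QCBZ$ together with the reflection lets one compute limits in $\QCB$ as the limit in $\QCBZ$ of the reflected diagram only after checking that limits of $T_0$ diagrams stay $T_0$ (they do) and that a general qcb-space maps into such a limit correctly; alternatively, and more uniformly, one observes that all the representation-level constructors $\boxtimes,\mybigboxtimes,[\,\cdot\to\cdot\,],\boxplus,\mybigboxplus$, corestriction already live in $\RepN$ (Subsection~\ref{sub:RepNomega:ClosureProps}), that $\RepN$ is (countably) cartesian closed with all countable limits and colimits, and that $\calT\colon\RepN\to\QCB$ preserves these because a continuous map of qcb-spaces lifts to a continuously realizable map between the (not necessarily admissible) represented spaces via Proposition~\ref{p:adm:quotient} and the quotient property --- so $\QCB$ is a full reflective subcategory of $\RepN$ closed under the relevant constructions and thereby inherits cartesian closedness and countable (co)completeness.

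The main obstacle I expect is not the products, coproducts, or the exponential --- those are essentially immediate from Propositions~\ref{p:prodrep:adm} and~\ref{p:funcrep:adm} --- but the handling of \emph{equalisers} (and dually coequalisers): topological subspaces and topological quotients do not in general preserve sequentiality, so one cannot simply take the subspace topology or the quotient topology; the correct equaliser/coequaliser in $\QCB$ is the sequential coreflection of the naive construction, and one must verify (i) that this coreflection is still a qcb-space, which follows since $\seq$ of an admissibly represented space is admissibly represented (Proposition~\ref{p:adm:quotient}(3)) and more generally $\QCB$ is closed under $\seq$, and (ii) that it genuinely has the universal property, which uses that morphisms in $\QCB$ are exactly the sequentially continuous maps so that factoring through the coreflection is automatic. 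Making this sequentialisation step precise --- in particular that the category $\QCB$ is closed under sequential coreflection of arbitrary subspaces and quotients of its objects --- is the technical heart of the argument; everything else is bookkeeping with the canonical representation constructors and the universal properties already available in $\RepN$.
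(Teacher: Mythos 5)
Your overall strategy for $\QCBZ$ --- building everything at the level of admissible representations with the constructors $\boxtimes$, $\boxplus$, $[\cdot\to\cdot]$, $\wedge$ and corestriction, and reading off the universal properties via Theorem~\ref{th:adm:relcont=cont} --- is a legitimate alternative to the paper's proof, which instead points to the direct topological constructions of Example~\ref{ex:construction:qcb} and to \cite{Sch:phd,ELS04}. But two steps as written do not go through. First, the claim that ``no sequentiality subtlety arises'' for products is wrong: the Tychonoff product of qcb-spaces need not be sequential, hence need not be a qcb-space at all, so ``$\delta_1\boxtimes\delta_2$ is admissible for the Tychonoff product'' does not yield ``the Tychonoff product is qcb$_0$'' (Theorem~\ref{th:qcb:admissible} only gives that its \emph{sequential coreflection} is a qcb$_0$-space). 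The categorical product in $\QCB$ is $\seq(\WX\times_\Top\WY)$, for which $\delta_1\boxtimes\delta_2$ is an admissible \emph{quotient} representation by Proposition~\ref{p:adm:quotient}(3); the same sequentialisation is needed for the exponential, since the compact-open topology on $\setC(\WX,\WY)$ is generally not sequential. This is exactly the subtlety you isolate for equalisers; it occurs for every limit and for the function space as well, and the repair, though routine, must be made uniformly.

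Second, and more seriously, your passage from $\QCBZ$ to $\QCB$ fails on both proposed routes. Reflectivity of $\QCBZ$ in $\QCB$ transfers limits and colimits \emph{into} the subcategory, not out of it, so ``$\QCB$ inherits all colimits from $\QCBZ$ by reflecting'' is backwards. And $\QCB$ is \emph{not} a full subcategory of $\RepN$ under $\calT$: a qcb-space lacking the $T_0$-property admits no admissible representation at all (the cardinality argument at the end of Subsection~\ref{sub:charac:Top:adm}), so for a non-$T_0$ target continuity does not imply continuous realizability and the universal properties cannot be read off from $\RepN$. To cover all of $\QCB$ one must either argue directly in $\Top$ and $\Seq$ as in Example~\ref{ex:construction:qcb} (quotients and countable coproducts of qcb-spaces are qcb-spaces; sequential coreflections of Tychonoff products, subspaces and compact-open function spaces of qcb-spaces are again qcb-spaces, which is where countable pseudobases do the real work), or replace representations by multirepresentations as in \cite{Sch:phd}. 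As it stands your argument establishes the theorem only for $\QCBZ$.
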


\noindent
For short this means that we can form within $\QCB$ and $\QCBZ$ countable products, function spaces, subspaces, countable coproducts and quotients.

We sketch how to construct qcb-spaces\index{construction of qcb-spaces} from others.
Unless stated otherwise, the respective operators preserve the $T_0$-property, hence they apply both to $\QCB$ and $\QCBZ$. 

\pagebreak[3]
\begin{example}[Construction of qcb-spaces]\label{ex:construction:qcb} \rm \quad
\begin{enumerate} \vspace*{-0.5ex}
\item \emph{Binary products:}
 The $\QCB$-product of $\WX$ and $\WY$ is the sequential coreflection (see Subsection~\ref{sub:Sequentialisation}) of the topological product $\WX \times_\Top \WY$.
\item \emph{Countable products:}
 Similarly, the $\QCB$-product $\prod_{i\in \IN} \WX_i$ of a sequence $(\WX_i)_{i \in \IN}$ of qcb-spaces
 is the sequential coreflection of the Tychonoff product of these spaces.
 The countable product $\prod_{i\in \IN} \WX$ of a single qcb-space $\WX$ with itself is homeomorphic to the $\QCB$-exponential $\WX^\IN$ (see below),
 where $\IN$ are the natural numbers equipped with the discrete topology.
\item \emph{Exponentiation:}
 Given two qcb-spaces $\WX$ and $\WY$, 
 the carrier set of the $\QCB$-exponential, which we denote by $\WY^\WX$,
 is the set $\setC(\WX,\WY)$ of all total continuous functions $f\colon \WX \to \WY$.
 The topology is given by the sequentialisation of the compact-open topology on $\setC(\WX,\WY)$.
 The compact-open topology is generated by the subbasic opens
 \[
   \langle K,V \rangle:=\big\{ f \in \setC(\WX,\WY) \,\big|\, f[K] \subseteq V \big\} \,,
 \]
 where $K$ varies over the compact subsets of $\WX$ and $V$ over the open subsets of $\WY$.
 For $\WY^\WX$ to be a $T_0$-space it suffices that $\WY$ has the $T_0$-property.
 The convergence relation on $\WY^\WX$ is continuous convergence:
 $(f_n)_n$ \emph{converges continuously to} $f$, if
 $(f_n(x_n))_n$ converges in $\WY$ to $f(x)$, whenever $(x_n)_n$ is a sequence converging in $\WX$ to $x \in \WX$.
\item \emph{Subspaces:}
 Given a qcb-space $\WX$ and a subset $Z$ of $\WX$,
 the subspace topology on $Z$ inherited from $\WX$ is not necessarily sequential.
 Instead one has to consider the \emph{subsequential topology} on $Z$ which is defined as the sequentialisation of subspace topology. This topology is indeed a qcb-topology.
\item \emph{Equalisers:}
 Let $f,g\colon \WX \to \WY$ be continuous functions between $\WX,\WY \in \QCB$.
 The set $E:=\{ x \in \WX \,|\, f(x)=g(x) \}$ endowed with the subsequential topology along with the continuous inclusion map forms an equaliser to $f,g$ in $\QCB$.
\item \emph{Countable limits:}
 Since $\QCB$ and $\QCBZ$ have countable products and equalisers, both have all countable limits. 
 Hence both categories are countably complete.
\item \emph{Meets:}
 Given a sequence of qcb-spaces $(\WX_i)_i$,
 their \emph{meet} (or \emph{conjunction}) $\bigwedge_{i\in \IN} \WX_i$ 
 is defined to have the intersection as its carrier set.
 The topology is the sequentialisation of the meet topology which is generated by the basis
 \[
  \big\{ \bigcap\nolimits_{i=0}^k U_i \,\big|\, k \in \IN,\; \text{$U_i$ open in $\WX_i$ for all $i \leq k$} \big\}
  \,.
 \] 
 Since this is a countable limit construction, $\bigwedge_{i\in \IN} \WX_i$ is a qcb-space.
 The binary meet $\WX_1 \wedge \WX_2$ is defined analogously.
 Note that the Euclidean reals $\IR$ arises as the meet of the lower reals $\IR_<$ and the upper reals $\IR_>$.
\item \emph{Countable coproduct:}
 The countable coproduct of a sequence $(\WX_i)_{i \in \IN}$ of qcb-spaces is constructed as in the category $\Top$ of topological spaces.
 The carrier set is the disjoint union $\bigcup_{i\in \IN} \{ (i,x) \,\big|\, x \in \WX_i\}$.
 A subset $U$ is open if, and only if,
 for every $i \in \IN$ the set $\{ x \in \WX_i \,|\, (i,x) \in U \}$ is open in $\WX_i$.
\item \emph{Quotients:}
 For any qcb-space $\WX$ and any quotient map $q:\WX \twoheadrightarrow \WZ$ the topological space $\WZ$ is a qcb-space as well.
 But in general $\WZ$ does not have the $T_0$-property, even if $\WX$ is $T_0$.
\item \emph{Kolmogorov quotient (``$T_0$-fication''):}
 For a qcb-space $\WX$ the equivalence relation of topological indistinguishability is defined by
 $x \equiv_\WX x' :\Longleftrightarrow \eta_\WX(x)=\eta_\WX(x')$,
 where $\eta_\WX(x)$ denotes the neighbourhood filter $\{ U \text{ open} \,|\, x \in U\}$ of $x$.
 
 The Kolmogorov quotient of a $T_0$-space is just the space itself.
 If $\WX$ is not $T_0$, then
 the \emph{Kolmogorov quotient} $\KQ(\WX)$ of $\WX$ consists of the equivalence classes of $\equiv_\WX$ 
 and its topology is the quotient topology induced by the surjection $[.]_{\equiv_\WX}$ that maps $x$ to its equivalence class.

\item \emph{Coequalisers in $\QCB$:}
 Let $f,g\colon \WA \to \WX$ be continuous functions between $\WA,\WX \in \QCB$.
 We consider the transitive hull $\approx$ of the reflexive relation 
 \[
  \big\{ (x,x) \,\big|\, x \in \WX \big\} 
  \cup 
  \big\{ (f(a),g(a)) \,\big|\, a \in \WA \big\}
  \cup
  \big\{ (g(a),f(a)) \,\big|\, a \in \WA \big\}\,,
 \]
 define $\WZ$ as the space of equivalence classes of $\approx$ 
 and endow $\WZ$ with the quotient topology induced by the surjection $[.]_\approx$ that maps $x$ to its equivalence class.
 Then $\WZ$ and $[.]_\approx$ form a coequaliser to $f,g$ in $\QCB$.
\item \emph{Coequalisers in $\QCBZ$:}
 Given two continuous functions $f,g\colon \WA \to \WX$ between spaces $\WA,\WZ \in \QCBZ$,
 we first construct $\WZ$ and $[.]_\approx$ as above.
 If $\WZ$ is a $T_0$-space, then this pair is also a coequaliser in $\QCBZ$.
 Otherwise we take the Kolmogorov quotient of $\WZ$ 
 together with the composition $[.]_{\equiv_\WZ} \circ [.]_\approx$.
\item \emph{Countable colimits:}
 Since $\QCB$ and $\QCBZ$ have countable coproducts and coequalisers, both have all countable colimits. 
 Hence both categories are countably cocomplete.
\end{enumerate}
\end{example}

More information can be found in \cite{BSS06,ELS04,Sch:phd}.


\subsection{Powerspaces in $\QCBZ$}
\label{sub:powerspaces}

We discuss spaces\index{powerspace} of families of subsets of qcb-spaces like open, closed, overt and compact subsets.


\subsubsection{Open subsets}
\label{subsub:open}

Given a qcb-space $\WX$, we endow the family $\setO(\WX)$ of open sets of $\WX$ with the Scott-topology on the complete lattice $(\setO(\WX),{\subseteq})$ 
and denote the resulting topological space by $\calO(\WX)$.
A subset $H \subseteq \setO(X)$ is called \emph{Scott open},
if $H$ is upwards closed and $D\cap H$ is non-empty
for each directed subset $D \subseteq \setO(\WX)$ with $\bigcup(D)\in H$.
Since any qcb-space is hereditarily Lindel{\"of}, the Scott-topology coincides with the $\omega$-Scott-topology.
The \emph{$\omega$-Scott-topology} consists of all upwards-closed subsets $H$ such that
for all sequences $(U_i)_{i \in \IN}$ with $\bigcup_{i \in \IN} U_i \in H$
there exists some $n$ such that $\bigcup_{i=0}^n U_i \in H$.

The space $\calO(\WX)$ is homeomorphic to the $\QCB$-exponential $\IS^\WX$, where $\IS$ denotes the Sierpi{\'n}ski space (cf.\ Example~\ref{ex:Sierpinski}).
The homeomorphism is given by the natural identification of an open set $U$ with its continuous characteristic function $\cf(U)\colon \WX \to \IS$ defined by $\cf(U)(x)=\top :\Longleftrightarrow x \in U$.

\pagebreak[3]
\begin{proposition}\label{p:calO} 
 Let $\WX$ be a qcb-space.
\begin{enumerate} 
 \item \label{i:calO:qcb}
  The space $\calO(\WX)$ is a qcb$_0$-space.
 \item \label{i:calO:homeomorphic}
  The space $\calO(\WX)$ is homeomorphic to $\IS^\WX$ via the map $\cf\colon \setO(\WX) \to \IS^\WX$.
 \item
  The Scott-topology on $\setO(\WX)$ is the sequentialisation of the \emph{compact-open} topology on $\setO(\WX)$, 
  which is generated by basic sets $K^{\subseteq}:=\{ U \in \setO(\WX)\,|\, K \subseteq U \}$,
  where $K$ varies over the compact subsets of $\WX$.
 \item \label{i:calO:cup:cap}
  A sequence $(U_n)_n$ converges to $V$ in $\calO(\WX)$
  if, and only if,
  \[
    W_k:=\bigcap\limits_{n \geq k} U_n \cap V \text{ is open for all $k \in \IN$}
    \;\;\text{and}\;\;
    V=\bigcup\limits_{n \in \IN} W_n.
  \]
 \item
  Binary intersection $\cap\colon \calO(\WX) \times \calO(\WX) \to \calO(\WX)$
  is sequentially continuous.
 \item
  Countable union $\bigcup\colon \prod_{i\in\IN} \calO(\WX) \to \calO(\WX)$ is sequentially continuous.
\end{enumerate}
\end{proposition}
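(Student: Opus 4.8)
The plan is to prove the six items more or less in order, leaning on the fact that $\QCB$ is cartesian closed (Theorem~\ref{th:qcb:ccc}) and on the characterisation of the exponential topology as the sequentialised compact-open topology (Example~\ref{ex:construction:qcb}). The conceptual core is item~(\ref{i:calO:homeomorphic}): once $\calO(\WX)\cong\IS^\WX$ is established, items~(\ref{i:calO:qcb}) and~(3) become essentially bookkeeping, and items~(5)--(6) follow by exhibiting the relevant maps as continuous maps into an exponential via currying. So I would reorganise slightly and do (\ref{i:calO:homeomorphic}) first.

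For (\ref{i:calO:homeomorphic}), the bijection $\cf\colon\setO(\WX)\to\setC(\WX,\IS)$, $U\mapsto\cf(U)$ with $\cf(U)(x)=\top\Leftrightarrow x\in U$, is standard: continuity of $\cf(U)$ is exactly openness of $U=\cf(U)^{-1}[\{\top\}]$, and it is clearly injective and surjective onto $\setC(\WX,\IS)$. The work is to check that $\cf$ is a homeomorphism when the left side carries the Scott topology and the right side carries the sequentialised compact-open topology. Since both spaces are sequential (the right one by being a $\QCB$-exponential, the left one because the Scott topology on $\setO(\WX)$ is sequential here — one can see this directly, or it drops out of the homeomorphism once established), it suffices to show $\cf$ and $\cf^{-1}$ are sequentially continuous, i.e. that $\cf$ is a convergence isomorphism. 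Thus the real task is item~(\ref{i:calO:cup:cap}): characterise convergence $(U_n)_n\to V$ in $\calO(\WX)$ and show it matches continuous convergence of $(\cf(U_n))_n\to\cf(V)$ in $\IS^\WX$. For the latter, $(\cf(U_n))_n$ converges continuously to $\cf(V)$ iff for every convergent sequence $(x_m)_m\to x$ in $\WX$ we have $\cf(U_m)(x_m)\to\cf(V)(x)$ in $\IS$; unwinding the Sierpi\'nski convergence (recall $(\top)_m$ converges both to $\top$ and to $\bot$, while any sequence converging to $\top$ must eventually be $\top$) yields exactly the two conditions in~(\ref{i:calO:cup:cap}): $x\in V$ forces $x_m\in U_m$ eventually (this gives, by letting $x$ range, that each $W_k=\bigcap_{n\ge k}U_n\cap V$ is open and that $V\subseteq\bigcup_n W_n$), and conversely $x\notin V$ together with $x\in\bigcup_n W_n$ would be contradictory, giving $V=\bigcup_n W_n$. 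One then checks directly that this same combinatorial condition on $(U_n)_n$ and $V$ is equivalent to Scott-convergence in $\setO(\WX)$ — using that $\calO(\WX)$ carries the $\omega$-Scott topology since $\WX$ is hereditarily Lindel\"of, as noted before the proposition. I expect \textbf{this convergence computation to be the main obstacle}: the careful matching of the $W_k$-condition to Scott-convergence, and the verification that the Scott topology is genuinely recovered from this convergence relation (not merely refined by it).

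Items~(\ref{i:calO:qcb}) and~(3) then follow quickly. For~(\ref{i:calO:qcb}): $\IS$ is a qcb$_0$-space (it has an admissible representation by Example~\ref{ex:Sierpinski} and is $T_0$), $\WX$ is a qcb-space, so $\IS^\WX$ is a qcb-space by cartesian closedness of $\QCB$ (Theorem~\ref{th:qcb:ccc}), and it is $T_0$ because the target $\IS$ is $T_0$ (as remarked under ``Exponentiation'' in Example~\ref{ex:construction:qcb}); transport along the homeomorphism of~(\ref{i:calO:homeomorphic}) gives that $\calO(\WX)$ is a qcb$_0$-space. For~(3): the $\QCB$-exponential topology on $\setC(\WX,\IS)$ is by definition the sequentialisation of the compact-open topology, whose subbasic opens are $\langle K,V\rangle=\{f\mid f[K]\subseteq V\}$; since $\IS$ has only the nontrivial open $\{\top\}$, the only informative subbasic opens are $\langle K,\{\top\}\rangle$, and under $\cf$ these correspond precisely to $K^{\subseteq}=\{U\in\setO(\WX)\mid K\subseteq U\}$, so the compact-open topology on $\setO(\WX)$ as described is carried to the compact-open topology on $\setC(\WX,\IS)$ by $\cf$, and sequentialising both sides gives the Scott topology on the left by~(\ref{i:calO:homeomorphic}).

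Finally, for~(5) and~(6), binary intersection corresponds under $\cf$ to the map $\IS^\WX\times\IS^\WX\to\IS^\WX$ induced by the binary ``and'' on $\IS$ (meet in the lattice $\{\bot,\top\}$, which is sequentially continuous $\IS\times\IS\to\IS$), precomposed with the diagonal and postcomposed with currying; countable union corresponds to the map $\prod_{i\in\IN}\IS^\WX\to\IS^\WX$ induced by the countable ``or'' $\prod_{i\in\IN}\IS\to\IS$ (sending a tuple to $\top$ iff some coordinate is $\top$, which is sequentially continuous since a sequence in $\prod\IS$ converging to a tuple with $\top$ in coordinate $i$ is eventually $\top$ in that coordinate). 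In each case the resulting map into an exponential is obtained by currying a sequentially continuous (hence, between sequential spaces, continuous) map out of a product, and is therefore continuous — equivalently, one may just verify sequential continuity directly using the convergence description from~(\ref{i:calO:cup:cap}). Either route is routine once~(\ref{i:calO:cup:cap}) is in hand, which is why I regard the convergence analysis as the one genuinely substantive step.
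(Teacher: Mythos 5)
Your proposal is correct, and for the parts the paper actually proves it takes exactly the paper's route: items (5) and (6) are obtained, as in the paper, from the continuity of $\wedge\colon \IS\times\IS\to\IS$ and $\bigvee\colon \IS^\IN\to\IS$ together with cartesian closedness of $\QCB$. For items (1)--(4) the paper offers no argument at all but defers to de Brecht--Schr\"oder--Selivanov (Proposition 4.1 of that paper); your reconstruction --- prove the convergence characterisation (4) by unwinding continuous convergence into $\IS$, deduce the homeomorphism (2), and read off (1) and (3) --- is the standard proof given there, and you correctly single out (4) as the only substantive step. The one place to tighten when executing the plan is the sequentiality of the Scott topology on $\setO(\WX)$: you cannot let it ``drop out of the homeomorphism once established,'' as that is circular, and ``same convergent sequences'' only identifies two topologies when both are known to be sequential. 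The clean repair uses only ingredients you already have: show the two inclusions of topologies directly. Every set sequentially open for compact-open convergence is Scott open (upward closure comes from the compact-open convergence of the constant sequence $(U')_n$ to $U$ whenever $U\subseteq U'$; inaccessibility under countable increasing unions comes from the compact-open convergence of an increasing sequence of opens to its union, and hereditary Lindel\"ofness reduces directed suprema to this case). Conversely, every Scott-open $H$ is sequentially open for continuous convergence: if $(U_n)_n\to V$ with $V\in H$, write $V=\bigcup_k W_k$ as in (4); by $\omega$-Scott-openness some $W_k\in H$, and $W_k\subseteq U_n$ for $n\geq k$ gives $U_n\in H$ by upward closure. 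This yields (3), hence (2), without presupposing sequentiality, and the rest of your plan goes through as written.
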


A space such that the Scott-topology on its lattice of open subsets agrees with the compact-open topology is called \emph{consonant}. 
M.~de Brecht has shown that quasi-Polish spaces are consonant (cf.\ \cite{dBSS16}),
whereas the space of rational numbers is known to be dissonant (= non-consonant).

Importantly, any qcb$_0$-space $\WX$ embeds sequentially into the space $\calO(\calO(\WX))$ via the neighbourhood filter map $\eta_\WX$.
Recall that this means that $\eta_\WX$ is injective and $(x_n)_n$ converges to $x$ in $\WX$ if, and only if, $(\eta_\WX(x_n))_n$ converges to $\eta_\WX(x)$ in $\calO(\calO(\WX))$.

\pagebreak[3]
\begin{proposition}
\label{p:X:embeds:into:OOX}
 Let $\WX$ be a qcb$_0$-space.
\begin{enumerate}
  \item 
   For every $x\in \WX$, the open neighbourhood filter 
   $\eta_\WX(x):=\{ U \text{ open} \,|\, x \in U \}$ is open in $\calO(\WX)$.
  \item
   The map $\eta_\WX\colon \WX \to \calO(\calO(\WX))$ is a sequential embedding of $\WX$ into $\calO(\calO(\WX))$.
  \item
   The map $\eta_\WX \colon \WX \to \IS^{\IS^\WX}$ defined by $\eta_\WX(x)(h) := h(x)$ is a sequential embedding of $\WX$ into the $\QCB$-exponential $\IS^{\IS^\WX}$.
\end{enumerate}
\end{proposition}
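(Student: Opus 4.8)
The plan is to establish the three claims in sequence, using the homeomorphism $\calO(\WX) \cong \IS^\WX$ from Proposition~\ref{p:calO} to translate between the "open-set" and "function-space" pictures, and then invoking the sequential-embedding characterisation together with the $T_0$-property.

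For (1), I would argue that for each fixed $x \in \WX$ the set $\eta_\WX(x) = \{ U \in \setO(\WX) \mid x \in U \}$ is Scott open in the lattice $(\setO(\WX),{\subseteq})$. Upwards-closedness is immediate: if $x \in U$ and $U \subseteq U'$ then $x \in U'$. For the directedness condition, if $D \subseteq \setO(\WX)$ is directed with $\bigcup D \in \eta_\WX(x)$, then $x \in \bigcup D$, so $x \in U$ for some $U \in D$, i.e.\ $D \cap \eta_\WX(x) \neq \emptyset$. Hence $\eta_\WX(x)$ is Scott open, which by definition of $\calO(\WX)$ means $\eta_\WX(x)$ is an open point of $\calO(\WX)$. (Equivalently, under the identification $\calO(\WX) \cong \IS^\WX$, $\eta_\WX(x)$ corresponds to the evaluation-at-$x$ functional, which is continuous.)

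For (2), I would use the convergence characterisation of the Scott topology on $\calO$-spaces provided by Proposition~\ref{p:calO}\eqref{i:calO:cup:cap}, applied one level up to $\calO(\calO(\WX))$. \textbf{Injectivity} of $\eta_\WX$ follows directly from the $T_0$-property: if $x \neq x'$ then some open set separates them, so $\eta_\WX(x) \neq \eta_\WX(x')$ as sets, hence as points of $\calO(\calO(\WX))$. For the \textbf{sequential embedding} property I must show $(x_n)_n \To{\WX} x$ iff $(\eta_\WX(x_n))_n \To{\calO(\calO(\WX))} \eta_\WX(x)$. The forward direction: if $x_n \to x$, then for each open $U \ni x$ we have $x_n \in U$ eventually, i.e.\ $U \in \eta_\WX(x_n)$ eventually; a short argument using Proposition~\ref{p:calO}\eqref{i:calO:cup:cap} (with the roles of $U_n$ played by the $\eta_\WX(x_n)$ and of $V$ by $\eta_\WX(x)$, checking the "intersection is open" and "union equals $V$" conditions) then gives the convergence in $\calO(\calO(\WX))$. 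For the converse, suppose $(\eta_\WX(x_n))_n$ converges to $\eta_\WX(x)$; I need $x_n \to x$ in $\WX$. Since $\WX$ is a qcb$_0$-space it is sequential, so it suffices to show that $x$ lies in the sequential closure argument — more directly, given any open $U \ni x$, i.e.\ $U \in \eta_\WX(x)$, I use that the convergence in $\calO(\calO(\WX))$ forces, via the union condition $\eta_\WX(x) = \bigcup_k W_k$ with $W_k = \bigcap_{n \geq k}\eta_\WX(x_n) \cap \eta_\WX(x)$, that $U \in W_k$ for some $k$, hence $U \in \eta_\WX(x_n)$ for all $n \geq k$, i.e.\ $x_n \in U$ eventually; as $U$ was an arbitrary neighbourhood of $x$, this is exactly $x_n \to x$.

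Part (3) is then essentially a restatement of (2) through the homeomorphism $\calO(\WY) \cong \IS^\WY$ (Proposition~\ref{p:calO}\eqref{i:calO:homeomorphic}), applied with $\WY = \WX$ and $\WY = \calO(\WX) \cong \IS^\WX$: the map $\eta_\WX \colon \WX \to \IS^{\IS^\WX}$, $\eta_\WX(x)(h) = h(x)$, is the transport of $\eta_\WX \colon \WX \to \calO(\calO(\WX))$ along the chain of homeomorphisms $\cf$, so it inherits the sequential-embedding property. The main obstacle I anticipate is the converse direction in part (2): verifying carefully that abstract convergence in the double-exponential $\calO(\calO(\WX))$, unwound via the somewhat delicate condition in Proposition~\ref{p:calO}\eqref{i:calO:cup:cap}, really does deliver pointwise eventual membership in every neighbourhood of $x$ — here one must be attentive that the $W_k$ are themselves required to be open in $\calO(\WX)$, and that the union condition is applied at the right level; the rest of the argument is bookkeeping with definitions and the cartesian-closed structure of $\QCB$.
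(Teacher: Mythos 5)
Your overall structure is sound, and since the paper itself only cites the underlying thesis at this point there is no in-text proof to compare against; judged on its own terms, your argument for (1), for injectivity in (2), for the backward direction of (2), and for the transport argument in (3) is correct. The one place where you have underestimated the work is the \emph{forward} direction of (2), which you dismiss as ``a short argument \dots checking the `intersection is open' and `union equals $V$' conditions''. The union condition is indeed immediate from $x_n \to x$. But the openness condition asks you to show that
\[
 W_k \;=\; \Tbigcap_{n\geq k}\eta_\WX(x_n)\cap\eta_\WX(x)
 \;=\;\big\{\, U\in\setO(\WX) \,\big|\, \{x\}\cup\{x_n \mid n\geq k\}\subseteq U \,\big\}
\]
is Scott-open in $\calO(\WX)$, and this is \emph{not} just an instance of your part (1): upward closure is clear, but for the directed-join condition you must show that whenever a directed family $D$ of opens satisfies $\{x\}\cup\{x_n\mid n\geq k\}\subseteq\bigcup D$, some single member of $D$ already contains the whole set. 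This holds because a convergent sequence together with its limit is compact, so finitely many members of $D$ cover it and directedness merges them into one --- and this is precisely where the hypothesis $x_n\to x$ enters the forward direction. Without that observation the step is incomplete. (Note also that you flagged the \emph{converse} as the delicate part, whereas your converse argument --- extract $U\in W_{k_0}$ from $\eta_\WX(x)=\bigcup_k W_k$ and read off eventual membership --- is in fact complete as written.)

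A cheaper route to the forward direction avoids Proposition~\ref{p:calO}(4) entirely: the map $x\mapsto(h\mapsto h(x))\colon \WX\to\IS^{\IS^\WX}$ is the exponential transpose of the evaluation map $\IS^\WX\times\WX\to\IS$, hence continuous by cartesian closedness of $\QCB$, hence sequentially continuous; transporting along $\cf$ as in your part (3) gives sequential continuity of $\eta_\WX\colon\WX\to\calO(\calO(\WX))$. Then the convergence criterion of Proposition~\ref{p:calO}(4) is needed only for the converse, where your argument already works.
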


\noindent
Recall that $\calO(\calO(\WX))$ is homeomorphic to $\IS^{\IS^\WX}$.
We remark that $\eta_\WX$ is even a topological embedding of $\WX$ into $\IS^{\IS^\WX}$, 
meaning that $\eta_\WX$ forms an homeomorphism between $\WX$ and its image endowed with the subspace topology. 
For more information see \cite{dBSS16,Sch04}.


\subsubsection{Overt subsets}
\label{subsub:overt sets}

On the family of all subsets of $\WX$ we consider the topology of \emph{overtness} which has as subbasis
all sets of the form
$\Diamond U:=\{ M \subseteq \WX \,|\,  M \cap U \neq \emptyset \}$, where $U$ varies over the open sets of $\WX$.
By $\calV(\WX)$ we denote the space all subsets of $\WX$ equipped with the sequentialisation of this topology.

\pagebreak[3]
\begin{proposition}
\label{p:calV:QCB}
 Let $\WX$ be a qcb-space.
\begin{enumerate}
  \item 
   $\calV(\WX)$ is a qcb-space.
  \item
   For every subset $M$, $\eta^\Diamond_\WX(M):= \{ U \text{ open} \,|\, M \in \Diamond U \}$ is open in $\calO(\WX)$.   
  \item
   Binary union on $\calV(\WX)$ is sequentially continuous.
  \item
   Binary intersection on $\calV(\IR)$ is sequentially discontinuous.
  \item
   Two subsets $M_1,M_2 \subseteq \WX$ have the same neighbourhoods in $\calV(\WX)$ if, and only if, their topological closure agrees.
\end{enumerate}
\end{proposition}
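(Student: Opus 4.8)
First I would dispatch (2) and (5), which only involve point-set topology. For (2), I would check directly that $\eta^\Diamond_\WX(M)=\{U\in\setO(\WX)\mid M\cap U\neq\emptyset\}$ is Scott-open in the lattice $(\setO(\WX),\subseteq)$: it is evidently upwards closed, and if $D\subseteq\setO(\WX)$ is directed with $\bigcup D\in\eta^\Diamond_\WX(M)$, then some point $x\in M$ lies in $\bigcup D$, hence in some member $U\in D$, so $D$ meets $\eta^\Diamond_\WX(M)$ (alternatively one could argue with increasing sequences via the $\omega$-Scott-topology, since $\WX$ is hereditarily Lindel\"of). For (5), the crucial observation is that for an open set $U$ one has $M\in\Diamond U$ if and only if $\overline M\cap U\neq\emptyset$, since an open set meets $M$ exactly when it meets $\overline M$. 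Thus if $\overline{M_1}=\overline{M_2}$, then $M_1$ and $M_2$ lie in exactly the same subbasic opens $\Diamond U$, hence in the same opens of the lower Vietoris topology, and hence --- points that are topologically indistinguishable for a topology stay indistinguishable for its sequentialisation, by a constant-sequence argument --- in the same opens of $\calV(\WX)$. Conversely, if $x\in\overline{M_1}\setminus\overline{M_2}$, then $U:=\WX\setminus\overline{M_2}$ is open with $M_1\in\Diamond U$ and $M_2\notin\Diamond U$, so $\Diamond U$ is a neighbourhood of $M_1$ in $\calV(\WX)$ excluding $M_2$.

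For (3) and (4) I would use that sequentialisation does not change the convergence relation, so a sequence converges in $\calV(\WX)$ iff it converges in the lower Vietoris topology on the power set, which by the subbasis means: $(M_n)_n\to M$ iff for every open $U$ meeting $M$ one has $U\cap M_n\neq\emptyset$ for almost all $n$. Then (3) is immediate: if $(M_n)_n\to M$, $(N_n)_n\to N$ and an open $U$ meets $M\cup N$, then $U$ meets $M$ or $N$, hence $U$ meets $M_n$ (resp.\ $N_n$) eventually, hence $U$ meets $M_n\cup N_n$ eventually. For (4) I would take, in $\calV(\IR)$, the constant sequence $M_n:=\{0\}$ and $N_n:=\{1/(n+1)\}$: both converge to $\{0\}$, so the pair converges to $(\{0\},\{0\})$ in the product, whereas $M_n\cap N_n=\emptyset$ for every $n$, and the constant sequence $(\emptyset)_n$ cannot converge to $\{0\}\cap\{0\}=\{0\}$ since $\emptyset\notin\Diamond\IR$ while $\{0\}\in\Diamond\IR$; hence binary intersection is sequentially discontinuous.

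For (1) I would argue functorially. Choose a quotient map $q\colon\WW\twoheadrightarrow\WX$ from a second-countable space $\WW$ that moreover lifts convergent sequences; such a $q$ exists (for $T_0$ spaces one may take the domain of an admissible representation, cf.\ Proposition~\ref{p:adm:quotient}, and in general this amounts to the identification of $\QCB$ with the category $\mathsf{PQ}$ of sequential spaces with an $\omega$-projecting cover). If $\{B_i\}_i$ is a countable base of $\WW$, then $\Diamond\big(\bigcup_{i\in S}B_i\big)=\bigcup_{i\in S}\Diamond B_i$ shows that $\{\Diamond B_i\}_i$ is a countable subbasis for the lower Vietoris topology on the power set of $\WW$; this topology is therefore second-countable, hence already sequential, hence equal to $\calV(\WW)$. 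The map $\calV(q)\colon N\mapsto q[N]$ is a surjection onto the power set of $\WX$ (put $N:=q^{-1}[M]$), and it is continuous from $\calV(\WW)$ to $\calV(\WX)$ because $\calV(q)^{-1}[\Diamond U]=\Diamond(q^{-1}[U])$ and the domain is sequential. Consequently $\calV(\WX)$ is a qcb-space as soon as $\calV(q)$ is a topological quotient map; and since $\calV(\WW)$ is sequential while $\calV(\WX)$ carries the sequentialisation of the lower Vietoris topology, this in turn reduces to the single statement that $\calV(q)$ lifts convergent sequences.

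That lifting lemma is the main obstacle. Given $(M_n)_n\to M$ in the lower Vietoris topology of $\WX$, the naive choice $N_n:=q^{-1}[M_n]$, $N:=q^{-1}[M]$ generally fails to converge, because the full preimages meet too many basic opens; one has to assemble $N$ and the $N_n$ out of carefully selected preimages so that every basic open of $\WW$ meeting $N$ is eventually met by $N_n$, while still arranging $q[N]=M$ and $q[N_n]=M_n$. I would do this by exploiting that a closed subspace of a qcb-space is, in its subsequential topology, again a qcb-space and hence separable, lifting through $q$ (which lifts convergent sequences) suitable approximating sequences attached to a countable dense subset of $\overline M$, and amalgamating the lifts by a diagonal argument that only ``activates'' a preimage of a point of $M_n$ once the approximations have caught up. The bookkeeping here is genuinely intricate --- this is the technically delicate part of the whole proposition --- and I would, if available, simply cite that $\QCB$ is closed under the lower-powerspace construction (cf.\ \cite{BSS06}), which one may also see by noting that this construction preserves quotients.
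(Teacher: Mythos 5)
Parts (2)--(5) are correct. Your Scott-openness check in (2) is exactly the required argument, and your (5) rests on the same observation as the paper, namely that $M \cap U \neq \emptyset \Longleftrightarrow \Cls(M)\cap U \neq\emptyset$ for open $U$, supplemented by the (correct) remark that topologically indistinguishable points remain indistinguishable after sequentialisation. In (3) the paper argues categorically, from continuity of $\vee\colon\IS\times\IS\to\IS$ and cartesian closedness of $\QCB$; your elementary verification on subbasic opens is a legitimate alternative, and your reduction of convergence in $\calV(\WX)$ to eventual membership in the subbasic sets $\Diamond U$ containing the limit is sound. In (4) you use a different counterexample from the paper's (which takes $A_n=\{2^{-n},1\}$ and $B_n=\{-2^{-n},1\}$, both converging to $\{0,1\}$ while $A_n\cap B_n=\{1\}$ does not converge to $\{0,1\}$); yours is simpler and works because $\emptyset$ is a point of $\calV(\IR)$ and the subbasic open $\Diamond\IR$ separates $\{0\}$ from $\emptyset$.

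The gap is in (1). Your reduction is correct as far as it goes: for second-countable $\WW$ the lower Vietoris topology on the power set is second-countable (via the sets $\Diamond B_i$), hence sequential; $\calV(q)$ is a continuous surjection since $\calV(q)^{-1}[\Diamond U]=\Diamond(q^{-1}[U])$; and showing it to be a quotient map does reduce to showing that it lifts convergent sequences. But that lifting lemma carries the entire content of the claim, you rightly note that the naive choice $N_n:=q^{-1}[M_n]$ fails, and the sketch you offer in its place (approximating sequences attached to a countable dense subset of $\Cls(M)$, amalgamated by a diagonal argument) is not an argument: it is not explained how to arrange $q[N_n]=M_n$ exactly while ensuring that every basic open of $\WW$ meeting $N$ eventually meets $N_n$. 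The fallback of citing closure of $\QCB$ under the lower powerspace construction is citing item (1) itself. In fairness, the paper also only cites \cite[Proposition 4.4.5]{Sch:phd} at this point. A more economical self-contained route is to first establish Proposition~\ref{p:calAp:QCB}(1) from the sequential embedding $\eta^\Diamond_\WX$ of $\calAp(\WX)$ into the qcb-space $\calO(\calO(\WX))$, choose a quotient map $r$ from a countably-based space onto $\calAp(\WX)$, and then exhibit $\calV(\WX)$ as the topological quotient of the countably-based space $\{(w,M)\,|\,r(w)=\Cls(M)\}$ (topologised through the first coordinate) under the second projection; that the resulting quotient topology is exactly the sequentialisation of the lower Vietoris topology follows from item (5) together with the fact that $M\mapsto\Cls(M)$ preserves and reflects lower Vietoris convergence.
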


Overt subspaces were investigated by P.~Taylor in \cite{Tay10}.


\subsubsection{Closed subsets}
\label{subsub:closed}
The upper Fell topology on the family of closed subsets of a qcb-space $\WX$
is generated by the basic sets $\{ A \text{ closed} \,|\, A \cap K =\emptyset\}$,
where $K$ varies over the compact subsets of $\WX$.
This topology is not sequential, unless $\WX$ is a consonant space.
By $\calAm(\WX)$ we denote the space of closed subspaces equipped with the sequentialisation of the upper Fell topology.
Clearly $\calAm(\WX)$ is homeomorphic to $\calO(\WX)$ via set complementation.
So $\calAm(\WX)$ is a qcb$_0$-space.

The lower Fell topology on $\setA(\WX)$ has as subbasis all sets of the form
$\Diamond U:=\{ A \text{ closed} \,|\,  A \cap U \neq \emptyset \}$.
By $\calAp(\WX)$ we denote the space of closed sets of $\WX$ carrying the sequentialisation of the lower Fell topology on $\WX$.
This space is homeomorphic to the Kolmogorov quotient of the space $\calV(\WX)$ from Subsection~\ref{subsub:overt sets}.

\pagebreak[3]
\begin{proposition}
\label{p:calAp:QCB}
 Let $\WX$ be a qcb-space.
\begin{enumerate}
 \item 
   $\calAp(\WX)$ is a qcb$_0$-space.
 \item
  The function $\eta^\Diamond_\WX\colon \calAp(\WX) \mapsto \calO(\calO(\WX))$ mapping $A$ to $\{ U \text{ open} \,|\, A \in \Diamond U \}$
  is a sequential embedding of $\calAp(\WX)$ into $\calO(\calO(\WX))$.
 \item
  Binary union on $\calAp(\WX)$ is sequentially continuous.
 \item
  Binary intersection on $\calAp(\IR)$ is sequentially discontinuous.
\end{enumerate}
\end{proposition}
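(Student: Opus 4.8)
The plan is to reduce (1)--(4) to facts already established for $\calV(\WX)$ and for the functor $\calO$, via one recurring observation: since $\calAp(\WX)$ carries the \emph{sequentialisation} of the lower Fell topology, and sequentialisation does not alter the convergence relation (Subsection~\ref{sub:Sequentialisation}), a sequence $(A_n)_n$ of closed sets converges to $A$ in $\calAp(\WX)$ precisely when it does so in the lower Fell topology, i.e.\ precisely when for every open $U$ with $A\cap U\neq\emptyset$ one has $A_n\cap U\neq\emptyset$ for almost all $n$ (testing the subbasic neighbourhoods $\Diamond U$ of $A$ suffices, since any finite intersection of them is then satisfied eventually as well). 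For (1) I would invoke the remark preceding the proposition that $\calAp(\WX)$ is the Kolmogorov quotient of $\calV(\WX)$; this rests on Proposition~\ref{p:calV:QCB}(5), by which the topological-indistinguishability classes of $\calV(\WX)$ are exactly the families of subsets sharing a common closure, each containing a unique closed representative, so that the quotient map identifies the subbasic opens $\Diamond U$ of the two spaces and hence the quotient topology with the (sequentialised) lower Fell topology. As $\calV(\WX)$ is a qcb-space (Proposition~\ref{p:calV:QCB}(1)), so is every topological quotient of it (Example~\ref{ex:construction:qcb}(9)); thus $\calAp(\WX)$ is a qcb-space, and it is $T_0$ by construction of the Kolmogorov quotient. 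Equivalently, (1) follows from (2), since $\calAp(\WX)$ is sequential by definition and, by (2), sequentially embeds into $\calO(\calO(\WX))$, which is a qcb$_0$-space by Proposition~\ref{p:calO}(1) applied twice; a sequential space that sequentially embeds into a qcb$_0$-space is homeomorphic to the sequentialisation of a subspace of it, hence again a qcb$_0$-space by Example~\ref{ex:construction:qcb}(4).

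For (2), well-definedness---that $\eta^\Diamond_\WX(A)=\{U\text{ open}\mid A\cap U\neq\emptyset\}$ is an open subset of $\calO(\WX)$---is the case $M=A$ of Proposition~\ref{p:calV:QCB}(2), and injectivity is elementary: if $A_1\neq A_2$ are closed, say $x\in A_1\setminus A_2$, then $U:=\WX\setminus A_2$ is open with $x\in A_1\cap U$ and $A_2\cap U=\emptyset$, so $\eta^\Diamond_\WX(A_1)\neq\eta^\Diamond_\WX(A_2)$. The substantive part is the equivalence that $(A_n)_n\to A$ in $\calAp(\WX)$ if and only if $(\eta^\Diamond_\WX(A_n))_n\to\eta^\Diamond_\WX(A)$ in $\calO(\calO(\WX))$. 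I would describe the right-hand side using the homeomorphism $\calO(\calO(\WX))\cong\IS^{\calO(\WX)}$ (Proposition~\ref{p:calO}(2)) together with continuous convergence in $\QCB$-exponentials (Example~\ref{ex:construction:qcb}(3)) and the observation that a sequence in $\IS$ converges to $\top$ iff it is eventually $\top$; this yields that $(\eta^\Diamond_\WX(A_n))_n\to\eta^\Diamond_\WX(A)$ iff for every sequence $(U_n)_n$ converging to $U$ in $\calO(\WX)$ with $A\cap U\neq\emptyset$ one has $A_n\cap U_n\neq\emptyset$ for almost all $n$. The direction ``$\Leftarrow$'' is then immediate on taking $U_n:=U$ constant. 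For ``$\Rightarrow$'', suppose $(A_n)_n\to A$ in $\calAp(\WX)$ and $(U_n)_n\to U$ in $\calO(\WX)$ with some $x\in A\cap U$; by Proposition~\ref{p:calO}(4) one has $U=\bigcup_k W_k$ with $W_k:=\bigcap_{n\geq k}U_n\cap U$ open, so choosing $k$ with $x\in W_k$, the open set $W_k$ meets $A$, whence $A_n\cap W_k\neq\emptyset$ for all $n\geq N$; since moreover $W_k\subseteq U_n$ for $n\geq k$, we get $A_n\cap U_n\supseteq A_n\cap W_k\neq\emptyset$ for $n\geq\max(k,N)$. This establishes the equivalence, and hence that $\eta^\Diamond_\WX$ is a sequential embedding.

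Part (3) is immediate from the convergence characterisation of the first paragraph: if $(A_n)_n\to A$ and $(B_n)_n\to B$ and $U$ is open with $(A\cup B)\cap U\neq\emptyset$, then $A\cap U\neq\emptyset$ or $B\cap U\neq\emptyset$, and in either case $(A_n\cup B_n)\cap U\neq\emptyset$ for almost all $n$; hence $(A_n\cup B_n)_n\to A\cup B$, so binary union is sequentially continuous. For (4) I would give the counterexample $A_n:=[0,1]$ (constant) and $B_n:=[1+\tfrac1n,2]$ in $\calAp(\IR)$: then $A_n\to[0,1]$, and by the convergence characterisation $B_n\to[1,2]$ (every neighbourhood of every point of $[1,2]$ meets $B_n$ for all large $n$), whereas $A_n\cap B_n=\emptyset$ for all $n$ and the constant sequence $\emptyset$ does not converge to $[0,1]\cap[1,2]=\{1\}$ (the open set $(\tfrac12,\tfrac32)$ meets $\{1\}$ but not $\emptyset$); this parallels the failure on $\calV(\IR)$ in Proposition~\ref{p:calV:QCB}(4). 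I expect the ``$\Rightarrow$'' direction of the equivalence in (2) to be the main obstacle: it is the one step that really uses the explicit description of convergence in $\calO(\calO(\WX))$ from Proposition~\ref{p:calO}(4) rather than merely its definition, and the place where one must keep track that $\calAp(\WX)$, $\calV(\WX)$ and $\calO(\calO(\WX))$ all carry sequentialised topologies, so that throughout ``convergence'' means the unchanged convergence relation of the underlying non-sequential topology.
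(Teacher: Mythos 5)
Your proposal is correct, and it is rather more self-contained than the paper's own proof, which for parts (1) and (2) simply cites the underlying thesis. Where the two genuinely diverge: for (3) the paper argues categorically --- binary union corresponds under $\eta^\Diamond_\WX$ to postcomposition with the continuous map $\vee\colon \IS\times\IS\to\IS$, and cartesian closedness of $\QCB$ does the rest --- whereas you argue directly from the subbasic characterisation of lower-Fell convergence; both are valid, yours being more elementary and the paper's generalising more readily to other operations built from $\IS$-connectives (it is the same template the paper uses for $\Kup$). For (4) the paper's counterexample is $A_n:=\{2^{-n},1\}$ and $B_n:=\{-2^{-n},1\}$, both converging to $\{0,1\}$ with $A_n\cap B_n=\{1\}$ not converging to $\{0,1\}$; your $[0,1]$ versus $[1+\tfrac1n,2]$ works just as well and has the mildly instructive feature that the intersections are empty. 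Your proof of (2) --- reading convergence in $\calO(\calO(\WX))\cong\IS^{\calO(\WX)}$ as continuous convergence and using Proposition~\ref{p:calO}(4) to pass from a moving open set $U_n$ to a fixed open $W_k\subseteq U_n$ meeting $A$ --- is exactly the right mechanism and is the substantive content that the paper delegates to the reference; your two routes to (1) (Kolmogorov quotient of $\calV(\WX)$, or sequential embedding into the qcb$_0$-space $\calO(\calO(\WX))$) are both legitimate given the facts the paper has already stated.
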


Computability on closed subsets of computable metric spaces was studied by V.~Brattka and G.~Presser in \cite{BP03}.
P.~Hertling (\cite{Her02b}) and M.~Ziegler (\cite{Zie04})
investigated computability of regular closed subsets of $\IR^d$.


\subsubsection{Compact subsets}
\label{subsub:compact}
Several notions of compactness are studied in topology:
A subset $K$ of a topological space is \emph{compact}, if every open cover of $K$ has a finite subcover;
$K$ is called \emph{countably compact}, if every countable cover of $K$ has a finite subcover;
$K$ is called \emph{sequentially compact}, if every sequence $(x_n)_n$ in $K$ has a subsequence that converges to some element in $K$.
For subsets of qcb-spaces, however, these three notions coincide (cf.\ \cite{BSS07}).

The \emph{Vietoris topology} on the family of all compact subsets of a topological space~$\WX$ is generated by the subbasic open sets 
\[
  \Box U := \{ K \text{ compact in $\WX$} \,|\, K \subseteq U\} 
  \;\;\text{and}\;\;
 \Diamond U := \{ K \text{ compact in $\WX$} \,|\, K \cap U \neq \emptyset\} \,,
\] 
where $U$ varies over the open subsets of $\WX$. 
By $\KViet(\WX)$ we denote the space of all compact subsets of $\WX$ 
equipped with the sequentialisation of the Vietoris topology.

The \emph{upper Vietoris topology} on the family of compact sets is generated by the basic open sets $\Box U$ only.
Unless $\WX$ is a $T_1$-space, the upper Vietoris topology does not enjoy the $T_0$-property.
Therefore we restrict ourselves to the family of all saturated compact subsets of $\WX$, on which the corresponding upper Vietoris topology is indeed $T_0$.
A subset $M$ is \emph{saturated} if, and only if, it is equal to its \emph{saturation} $\mathalpha{\uparrow}M$,
which is the intersection of all open sets containing $M$.
In a $T_1$-space the saturation of a set yields the set itself.
By $\Kup(X)$ we denote the space of compact saturated subsets of $\WX$ equipped with the sequentialisation of the upper Vietoris topology.

\pagebreak[3]
\begin{proposition}
\label{p:upvietoris:qcb}
 Let $\WX$ be a qcb$_0$-space.
\begin{enumerate}
 \item 
  $\Kup(\WX)$ and $\KViet(\WX)$ are qcb$_0$-spaces.
 \item
  For every compact subset $K$, $\eta^\Box_\WX(K):= \{ U \text{ open} \,|\, K \in \Box U \}$ is open in $\calO(\WX)$.
 \item
  The map $\eta^\Box_\WX$ 
  yields a sequential embedding of $\Kup(\WX)$ into $\calO(\calO(\WX))$.
 \item
  Binary union on $\Kup(\WX)$ and on $\KViet(\WX)$ are sequentially continuous.
\end{enumerate}
\end{proposition}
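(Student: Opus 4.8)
The plan is to treat the four items in the order (2), (3), (1), (4): item~(1) drops out of the sequential embedding established in~(3), and~(4) is a short direct verification. For~(2), note that $\eta^\Box_\WX(K)=\{U\in\setO(\WX)\mid K\subseteq U\}$ is upwards closed in $(\setO(\WX),{\subseteq})$, and since $\WX$ is a qcb-space the Scott topology on $\setO(\WX)$ coincides with the $\omega$-Scott topology; so it suffices to check that whenever a sequence of open sets $(U_i)_i$ satisfies $K\subseteq\bigcup_iU_i$ then already $K\subseteq\bigcup_{i=0}^nU_i$ for some $n$, which is precisely (countable) compactness of $K$. Hence $\eta^\Box_\WX(K)$ is Scott open, i.e.\ open in $\calO(\WX)$.

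For~(3), $\eta^\Box_\WX$ is injective on $\Kup(\WX)$ because a compact saturated $K$ equals $\mathalpha{\uparrow}K=\bigcap\{U\text{ open}\mid K\subseteq U\}=\bigcap\eta^\Box_\WX(K)$ and is therefore recovered from its image. It remains to show that $(K_n)_n\to K$ in $\Kup(\WX)$ iff $(\eta^\Box_\WX(K_n))_n\to\eta^\Box_\WX(K)$ in $\calO(\calO(\WX))$. For the forward implication I would apply the convergence criterion of Proposition~\ref{p:calO}\eqref{i:calO:cup:cap} to the qcb-space $\calO(\WX)$: writing $\calH_n:=\eta^\Box_\WX(K_n)$ and $\calH:=\eta^\Box_\WX(K)$, one must verify that $\bigcap_{n\geq k}\calH_n\cap\calH$ is Scott open for every $k$ and that $\calH=\bigcup_k\bigcap_{n\geq k}\calH_n\cap\calH$; the latter is immediate, since $(K_n)_n\to K$ in the upper Vietoris topology means that every open $U\supseteq K$ lies in $\calH_n$ for all large $n$. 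For the former, an open $U$ lies in $\bigcap_{n\geq k}\calH_n\cap\calH$ iff $U$ contains $K$ and all $K_n$ with $n\geq k$; given a sequence $(V_i)_i$ of opens whose union contains all these sets, compactness of $K$ yields a finite subcover $\bigcup_{i\leq m_0}V_i\supseteq K$, upper Vietoris convergence then forces $K_n\subseteq\bigcup_{i\leq m_0}V_i$ for all $n$ past some index $n_1$, and the finitely many remaining $K_n$ are absorbed by enlarging $m_0$ using their own compactness, so the set is $\omega$-Scott, hence Scott, open. For the reverse implication, observe that for each open $U_0$ of $\WX$ the set $W_{U_0}:=\{\calH\in\setO(\calO(\WX))\mid U_0\in\calH\}$ equals $(\mathalpha{\uparrow}\{U_0\})^{\subseteq}$, and $\mathalpha{\uparrow}\{U_0\}$ is a compact subset of $\calO(\WX)$ (any Scott-open family containing $U_0$ contains all of $\mathalpha{\uparrow}\{U_0\}$), so $W_{U_0}$ is open in $\calO(\calO(\WX))$; hence if $(\eta^\Box_\WX(K_n))_n\to\eta^\Box_\WX(K)$ and $U_0\supseteq K$ is open, then $\eta^\Box_\WX(K)\in W_{U_0}$, so $\eta^\Box_\WX(K_n)\in W_{U_0}$ eventually, i.e.\ $K_n\subseteq U_0$ eventually, and testing the subbasic sets $\Box U_0$ this says $(K_n)_n\to K$ in the sequentialised upper Vietoris topology.

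Item~(1) now follows: iterating Proposition~\ref{p:calO}\eqref{i:calO:qcb} shows that $\calO(\calO(\WX))$ is a qcb$_0$-space, so by~(3) the sequential space $\Kup(\WX)$ is (homeomorphic to) a subsequential subspace of a qcb$_0$-space, hence itself qcb$_0$ by closure of $\QCBZ$ under subspaces (Example~\ref{ex:construction:qcb}), or via Proposition~\ref{p:seqembedding:adm}(2) followed by Proposition~\ref{p:adm:quotient}\eqref{en:admquotient}. For $\KViet(\WX)$ I would instead use the map $K\mapsto(\mathalpha{\uparrow}K,K)$ into $\Kup(\WX)\times\calV(\WX)$: it is injective already on the second coordinate, and since convergence in a $\QCB$-product is componentwise while Vietoris convergence of compact sets splits into upper Vietoris convergence (visible in the $\Kup(\WX)$-coordinate, using $\mathalpha{\uparrow}K_n\subseteq U\Leftrightarrow K_n\subseteq U$) and lower Vietoris convergence (visible in the $\calV(\WX)$-coordinate), it is a sequential embedding; thus $\KViet(\WX)$ is a sequential subspace of the qcb-space $\Kup(\WX)\times\calV(\WX)$, hence a qcb-space, and the $T_0$-property is read off the Vietoris subbasis. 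Finally, for~(4): $K\cup L$ is compact and $\mathalpha{\uparrow}(K\cup L)=\mathalpha{\uparrow}K\cup\mathalpha{\uparrow}L$ is compact saturated, so $\cup$ is well-defined on both spaces; as the targets carry sequentialised topologies it suffices to test the subbasic sets, and $K\cup L\subseteq U$ gives $K_n\subseteq U$ and $L_n\subseteq U$ eventually hence $K_n\cup L_n\subseteq U$ eventually, while $(K\cup L)\cap U\neq\emptyset$ gives (say) $K\cap U\neq\emptyset$, hence $K_n\cap U\neq\emptyset$ eventually, hence $(K_n\cup L_n)\cap U\neq\emptyset$ eventually.

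The main obstacle I anticipate is the forward direction of~(3), specifically showing that $\bigcap_{n\geq k}\eta^\Box_\WX(K_n)\cap\eta^\Box_\WX(K)$ is Scott open. The difficulty is that the union $K\cup\bigcup_{n\geq k}K_n$ need not be compact, so refining an open cover of it to a finite subcover simultaneously for $K$ and for all $K_n$ requires combining compactness of the individual $K_n$ with the convergence $(K_n)_n\to K$ in order to bring the infinitely many tails under uniform control.
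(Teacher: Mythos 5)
Your proposal is correct in substance and considerably more self-contained than the paper's proof, which for items (2) and (3) merely cites Lemmas 4.4.8(2) and 4.4.9(2) of \cite{Sch:phd} and for item (4) argues abstractly: it identifies $\eta^\Box_\WX(K)$ with an element of $\IS^{\IS^\WX}$, uses $\eta^\Box_\WX(K_1\cup K_2)(U)=\eta^\Box_\WX(K_1)(U)\wedge\eta^\Box_\WX(K_2)(U)$ together with continuity of $\wedge\colon\IS\times\IS\to\IS$ and cartesian closedness of $\QCB$ for the $\Kup$-part, and adds sequential continuity of union on $\calV(\WX)$ for the $\KViet$-part. Your direct subbasis test for (4) is more elementary and independent of (3), at the price of not exploiting the embedding. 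For (3), your verification of the convergence criterion of Proposition~\ref{p:calO}(4) inside $\calO(\calO(\WX))$ is sound; in particular the step you single out as the main obstacle is handled correctly: cover $K$ finitely, use upper Vietoris convergence to push the tail of $(K_n)_n$ under that finite union, and absorb the finitely many remaining $K_n$ by their own compactness --- this is exactly what is needed for $\omega$-Scott openness, which suffices since qcb-spaces are hereditarily Lindel\"of. The reverse direction via the Scott-open sets $\{\calH\mid U_0\in\calH\}$, and the derivation of (1) from (3) plus the embedding $K\mapsto({\uparrow}K,K)$ into $\Kup(\WX)\times\calV(\WX)$, match the route the paper indicates (it cites Proposition~\ref{p:calV:QCB}(1) and Example~\ref{ex:construction:qcb} for precisely this purpose).

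One caveat concerns your remark that the $T_0$-property of $\KViet(\WX)$ is ``read off the Vietoris subbasis'': this is not true for arbitrary qcb$_0$-spaces. Take $\WX=\{a,b,c\}$ with open sets $\emptyset,\{a\},\{a,b\},\WX$; then $K=\{a,c\}$ and $L=\WX$ are distinct compact sets with ${\uparrow}K={\uparrow}L$ and $\Cls(K)=\Cls(L)$, so no set $\Box U$ or $\Diamond U$ separates them, and sequentialisation cannot create separating opens because each of the two constant sequences converges to the other point. So the $T_0$-claim for $\KViet$ needs an additional hypothesis such as $T_1$ (under which saturation recovers $K$ and your argument for $\Kup$ transfers); this is a defect of the statement itself rather than of your method, and the paper's own proof sketch does not address it either.
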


Note that the intersection of two saturated compact sets need not be compact,
unless $\WX$ is for example Hausdorff. 

The idea to compute on compact subsets via their embedding into $\calO(\calO(\WX))$ was independently developed by M.~Escard{\'o} in \cite{Esc04,Esc08} and M.~Schr\"oder in \cite{Sch:phd}.
V.~Brattka and G.~Presser investigated computability on compacts subsets of computable metric spaces in \cite{BP03}.

%

\subsection{Qcb-Spaces and Admissibility}
\label{sub:qcb:admissible}

Qcb$_0$-spaces\index{qcb-space} form exactly the class of sequential spaces that have an admissible\index{admissible representation} representation.
For the proof, one first observes that any qcb$_0$-space has a quotient representation:
If a $T_0$-space $\WX$ is a topological quotient space of a second-countable space $\WW$, then $\WX$ is also a topological quotient of the Kolmogorov quotient $\KQ(\WW)$ of $\WW$.
By Example~\ref{ex:deltabeta}, $\KQ(\WW)$ has an admissible representation;
its composition with the quotient map $\KQ(\WW) \twoheadrightarrow\WX$ yields a quotient representation for $\WX$.

Now we define an operator that converts a quotient representation\index{quotient representation} 
$\delta$ into an admissible representation\index{admissible representation}.
From Propositions~\ref{p:funcrep:adm}, \ref{p:seqembedding:adm} and~\ref{p:X:embeds:into:OOX} 
we know for a quotient representation $\delta$ of a $T_0$-space $\WX$:

\begin{itemize}
 \item[(a)] 
  $\WX$ embeds sequentially into the $\QCBZ$-exponential $\IS^{\IS^\WX}$ via $\eta_\WX$.
 \item[(b)] 
  $[\delta\to\varrho_\IS]$ is an admissible representation for $\IS^\WX$, even if $\delta$ is not admissible.  
 \item[(c)]
  Corestrictions of admissible representations are admissible again.
\end{itemize}

\noindent
These observations motivate to define an operator by
\[
  \delta^\IS(p)=x \;:\Longleftrightarrow\;
  \eta_\WX(x)=[[\delta\to\varrho_\IS]\to\varrho_\IS](p)
  \,.
\]
The above facts imply that $\delta^\IS$ is admissible.
In the case that that the final topology $\tau_\delta$ of $\delta$ lacks the $T_0$-property, it makes sense to generalise this definition by letting $\delta^\IS$ be the representation of the Kolmogorov quotient $\KQ(X,\tau_\delta)$ given by the equivalence
\[
  \delta^\IS(p)=[x]_\equiv \;:\Longleftrightarrow\;
  \eta_{(X,\tau_\delta)}(x)=[[\delta\to\varrho_\IS]\to\varrho_\IS](p) 
  \,,
\]
where $\equiv$ denotes the equivalence relation of topological indistinguishability (see Example~\ref{ex:construction:qcb}). 
In any case, the map $\delta^\IS$ is indeed an admissible representation. 

\pagebreak[3]
\begin{theorem}[Admissibility of $\delta^\IS$]
\label{th:deltaIS:admissible} \quad\\
\noindent 
 Let $\delta$ be a representation of a set $X$.
\begin{enumerate}
 \item
  $\delta^\IS$ is an admissible representation for $(X,\tau_\delta)$,
  if $\tau_\delta$ has the $T_0$-property.
 \item
  $\delta^\IS$ is an admissible representation for $\KQ(X,\tau_\delta)$.
 \item
  If $\delta$ is a quotient representation for a $T_0$-space $\WX$,
  then $\delta^\IS$ is an admissible representation for $\WX$.
\end{enumerate}  
\end{theorem}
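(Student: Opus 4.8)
The plan is to reduce all three statements to part~(2) and then obtain part~(2) by iterating the function-space construction for admissible representations and pulling back along a sequential embedding.

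First I would note the reductions. Part~(3) is a special case of part~(1): if $\delta$ is a quotient representation for a $T_0$-space $\WX$, then by the definition of ``quotient representation'' the final topology $\tau_\delta$ is the topology of $\WX$, so $(X,\tau_\delta)=\WX$ has the $T_0$-property and part~(1) applies verbatim. And part~(1) is the case of part~(2) in which $\tau_\delta$ is already $T_0$: then $\KQ(X,\tau_\delta)=(X,\tau_\delta)$ and the generalised defining equivalence for $\delta^\IS$ collapses to the original one. So it suffices to prove part~(2).

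Write $\WX:=(X,\tau_\delta)$. Since $\WX$ is the topological space associated to the represented space $(X,\delta)$, it is a qcb-space; hence $\WY:=\KQ(\WX)$ is a qcb$_0$-space, and the surjection $[\,\cdot\,]_\equiv\colon\WX\to\WY$ induces a bijection between $\setO(\WX)$ and $\setO(\WY)$, so that $\calO(\calO(\WY))=\calO(\calO(\WX))$. The representation $\delta$ is a quotient representation for $\WX$ by the very definition of the final topology, and $\varrho_\IS$ is admissible for $\IS$ (Example~\ref{ex:Sierpinski}). Applying Proposition~\ref{p:funcrep:adm}(1) to the quotient representation $\delta$ and the admissible target $\varrho_\IS$ shows that $[\delta\to\varrho_\IS]$ is admissible for $\setC(\WX,\IS)$ with the compact-open topology; under the identification $f\mapsto f^{-1}\{\top\}$ this topology sequentialises to the Scott topology of $\calO(\WX)$ (Proposition~\ref{p:calO}), and since admissibility is inherited by the sequential coreflection (Proposition~\ref{p:adm:quotient}), $[\delta\to\varrho_\IS]$ is admissible for $\calO(\WX)\cong\IS^\WX$. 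Being admissible, $[\delta\to\varrho_\IS]$ is continuous and lifts convergent sequences (Proposition~\ref{p:adm:quotient}), so Proposition~\ref{p:funcrep:adm}(2) applies with $[\delta\to\varrho_\IS]$ as source and $\varrho_\IS$ as target; because $\calO(\WX)$ is sequential, topological and sequential continuity agree on maps out of it, and the same sequentialisation argument yields that $[[\delta\to\varrho_\IS]\to\varrho_\IS]$ is admissible for $\calO(\calO(\WX))\cong\IS^{\IS^\WX}$.

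It remains to recognise $\delta^\IS$ as a pullback of this representation. By Proposition~\ref{p:X:embeds:into:OOX} the neighbourhood-filter map $\eta_\WY\colon\WY\to\calO(\calO(\WY))$ is a sequential embedding of the qcb$_0$-space $\WY$; and because $\eta_\WX(x)$ depends only on the open neighbourhood filter of $x$, it factors as $\eta_\WX=\eta_\WY\circ[\,\cdot\,]_\equiv$, so in particular $\eta_\WY$ is injective. Then the generalised defining equivalence ``$\delta^\IS(p)=[x]_\equiv$ iff $\eta_\WX(x)=[[\delta\to\varrho_\IS]\to\varrho_\IS](p)$'' becomes ``$\delta^\IS(p)=[x]_\equiv$ iff $\eta_\WY([x]_\equiv)=[[\delta\to\varrho_\IS]\to\varrho_\IS](p)$'', which is exactly the representation $\zeta$ built from the sequential embedding $\eta_\WY$ and the admissible representation $[[\delta\to\varrho_\IS]\to\varrho_\IS]$ of $\calO(\calO(\WY))$ in Proposition~\ref{p:seqembedding:adm}(2). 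Hence $\delta^\IS$ is well defined (single-valued since $\eta_\WY$ is injective, surjective since $\eta_\WY$ takes values in $\calO(\calO(\WY))$) and is an admissible representation for $\WY=\KQ(X,\tau_\delta)$; parts~(1) and~(3) follow by the reductions above. I expect the main obstacle to be purely bookkeeping: keeping straight that each exponential produced by Proposition~\ref{p:funcrep:adm} carries the compact-open topology whereas $\IS^\WX$ and $\IS^{\IS^\WX}$ carry its sequentialisation, and checking that the hypothesis ``lifts convergent sequences'' needed for the second application of Proposition~\ref{p:funcrep:adm} is furnished precisely by the admissibility established in the first step.
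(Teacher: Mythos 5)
Your proof is correct and follows essentially the same route the paper itself sketches just before the theorem: establish that $[\delta\to\varrho_\IS]$ and then $[[\delta\to\varrho_\IS]\to\varrho_\IS]$ are admissible via Proposition~\ref{p:funcrep:adm}, embed the (Kolmogorov quotient of the) associated qcb-space sequentially into $\calO(\calO(\cdot))$ via $\eta$ (Proposition~\ref{p:X:embeds:into:OOX}), and pull the admissible representation back along this embedding using Proposition~\ref{p:seqembedding:adm}. Your explicit reduction of parts (1) and (3) to part (2) and the careful bookkeeping around the Kolmogorov quotient and the sequentialisation of the compact-open topology fill in exactly the details the paper delegates to the cited thesis.
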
  
  
We obtain the following characterisation of admissible representations.  
  
\pagebreak[3]  
\begin{theorem}[Characterisation of admissibility]
\label{th:charac:deltaIS:admissible}
  A representation $\delta$ of a set $X$ is an admissible\index{admissible representation} representation for $(X,\tau_\delta)$, 
  if and only if, 
  $\delta$ is topologically equivalent to $\delta^\IS$ and $\tau_\delta$ satisfies the $T_0$-property.
\end{theorem}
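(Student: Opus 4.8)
The plan is to read off both implications from Theorem~\ref{th:deltaIS:admissible}(1) together with the characterisation of admissibility by $\leqt$-maximality (Proposition~\ref{p:adm:leqtmaximal}), exploiting the trivial observation that every representation is continuous with respect to its own final topology $\tau_\delta$. Throughout I read ``$\leqt$-maximal in the class of continuous representations of $\WX$'' in the sense of Proposition~\ref{p:adm:leqtmaximal}, namely as ``$\leqt$-greatest'': $\delta$ is a continuous representation of $\WX$ and $\gamma \leqt \delta$ holds for every continuous representation $\gamma$ of $\WX$.

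For the ``only if'' direction I would argue as follows. Assume $\delta$ is admissible for $(X,\tau_\delta)$. Since $(X,\tau_\delta)$ then possesses an admissible representation, Proposition~\ref{p:adm:pseudobase}(1) forces $\tau_\delta$ to satisfy the $T_0$-property, so Theorem~\ref{th:deltaIS:admissible}(1) applies and tells us that $\delta^\IS$ is likewise an admissible representation of $(X,\tau_\delta)$. Now $\delta$ is a continuous representation of $(X,\tau_\delta)$ automatically, and $\delta^\IS$ is one by clause~(a) of Definition~\ref{def:admissible:Top}; invoking Proposition~\ref{p:adm:leqtmaximal}, each of the two representations is $\leqt$-greatest among the continuous representations of $(X,\tau_\delta)$. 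Applying $\leqt$-maximality of $\delta^\IS$ to the continuous representation $\delta$ gives $\delta \leqt \delta^\IS$; applying $\leqt$-maximality of $\delta$ to the continuous representation $\delta^\IS$ gives $\delta^\IS \leqt \delta$. Hence $\delta \equivt \delta^\IS$.

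For the ``if'' direction, assume $\tau_\delta$ is $T_0$ and $\delta \equivt \delta^\IS$. The $T_0$-assumption guarantees, via Theorem~\ref{th:deltaIS:admissible}(1), that $\delta^\IS$ is a bona fide representation of $X$ and is admissible for $(X,\tau_\delta)$; in particular Proposition~\ref{p:adm:leqtmaximal} yields $\gamma \leqt \delta^\IS$ for every continuous representation $\gamma$ of $(X,\tau_\delta)$. Since $\delta^\IS \leqt \delta$ by hypothesis, transitivity of $\leqt$ gives $\gamma \leqt \delta$ for every such $\gamma$. As $\delta$ is itself a continuous representation of $(X,\tau_\delta)$, this shows $\delta$ is $\leqt$-greatest among the continuous representations of $(X,\tau_\delta)$, so Proposition~\ref{p:adm:leqtmaximal} delivers that $\delta$ is admissible for $(X,\tau_\delta)$.

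The argument is short precisely because the substance is already packed into Theorem~\ref{th:deltaIS:admissible}. The step I would flag as the main obstacle is the $T_0$-conclusion in the ``only if'' direction: one genuinely needs that merely possessing an admissible representation forces the $T_0$ property — the cardinality argument underlying Proposition~\ref{p:adm:pseudobase}(1) — and without it the statement would not even typecheck, since for non-$T_0$ $\tau_\delta$ the object $\delta^\IS$ represents the Kolmogorov quotient $\KQ(X,\tau_\delta)$ rather than $X$ and the relation $\delta \equivt \delta^\IS$ loses its meaning. A minor secondary point to verify is that, in the ``if'' direction, we are entitled to speak of $\delta^\IS$ as a representation of $X$ at all, which is exactly what the $T_0$-hypothesis buys us through Theorem~\ref{th:deltaIS:admissible}(1).
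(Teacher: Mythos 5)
Your proof is correct and follows essentially the same route as the paper, whose entire argument consists of invoking Theorem~\ref{th:deltaIS:admissible} (that $\delta^\IS$ is admissible for $(X,\tau_\delta)$ when $\tau_\delta$ is $T_0$) and then transferring admissibility along the topological equivalence $\delta \equivt \delta^\IS$. The only cosmetic difference is that you encode that transfer via the $\leqt$-maximality characterisation of Proposition~\ref{p:adm:leqtmaximal} (together with Proposition~\ref{p:adm:pseudobase}(1) for the $T_0$-necessity, which the paper also asserts at the end of Subsection~\ref{sub:charac:Top:adm}), whereas the paper cites Proposition~\ref{p:translatable}(5); both amount to the same use of transitivity of~$\leqt$.
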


Along with Propositions~\ref{p:adm:quotient} and~\ref{p:adm:pseudobase} we obtain the following characterisation.

\pagebreak[3]
\begin{theorem}[Sequential spaces with admissible representation] \quad
\label{th:qcb:admissible}
\begin{enumerate} \vspace*{-0.5ex}
  \item 
   A sequential topological space has an admissible representation
   if, and only if, it is a qcb$_0$-space.
  \item
   A topological space has an admissible quotient representation
   if, and only if, it is a qcb$_0$-space.
  \item
   A topological space has an admissible representation
   if, and only if, its sequential coreflection is a qcb$_0$-space.
\end{enumerate} 
\end{theorem}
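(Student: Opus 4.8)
The plan is to prove the equivalence in part~(2) directly and then to deduce parts~(1) and~(3) from it with the help of Proposition~\ref{p:adm:quotient}; Proposition~\ref{p:adm:pseudobase} offers an alternative bookkeeping via pseudobases, but I prefer to go through the $\delta^\IS$-construction, which is already prepared. For the non-trivial direction of~(2), let $\WX$ be a qcb$_0$-space, exhibited as a topological quotient $q\colon\WW\twoheadrightarrow\WX$ of a second-countable space $\WW$. Since $\WX$ is $T_0$, it is also a topological quotient of the Kolmogorov quotient $\KQ(\WW)$, which is second-countable and $T_0$; by Example~\ref{ex:deltabeta} the standard representation $\gamma$ of $\KQ(\WW)$ is admissible, and since $\KQ(\WW)$ is sequential, Proposition~\ref{p:adm:quotient}(2) shows $\gamma$ is in fact a quotient representation. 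Composing with $q$ (a composite of quotient maps is again a quotient map) yields a quotient representation $\delta$ of $\WX$. Theorem~\ref{th:deltaIS:admissible}(3) then converts $\delta$ into an admissible representation $\delta^\IS$ of $\WX$, and because $\WX$ is sequential, Proposition~\ref{p:adm:quotient}(2) shows once more that $\delta^\IS$ is a quotient representation; hence $\WX$ has an admissible quotient representation. For the converse direction of~(2): if $\delta$ is an admissible quotient representation of $\WX$, then $\tau_\delta$ coincides with the topology of $\WX$, so $\WX=\calT(X,\delta)$, which is a qcb-space by the discussion in Subsection~\ref{sub:QCB}; moreover admissibility forces $\tau_\delta$ to satisfy $T_0$ (a cardinality argument, or Theorem~\ref{th:charac:deltaIS:admissible}), so $\WX$ is a qcb$_0$-space.

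Part~(1) is then essentially free: for a sequential space an admissible representation is automatically a quotient representation by Proposition~\ref{p:adm:quotient}(2), and every qcb$_0$-space is sequential, so ``has an admissible representation'' and ``is a qcb$_0$-space'' coincide for sequential spaces by part~(2). For part~(3): if $\WX$ has an admissible representation $\delta$, then by Proposition~\ref{p:adm:quotient}(3) $\delta$ is an admissible quotient representation of the sequential coreflection $\seq(\WX)$, so $\seq(\WX)$ is a qcb$_0$-space by part~(2). Conversely, if $\seq(\WX)$ is a qcb$_0$-space, then by part~(1) it has an admissible representation $\delta$, and I would verify that this very $\delta$ is admissible for $\WX$ as well: continuity of $\delta$ into $\WX$ holds because the topology of $\WX$ is coarser than that of $\seq(\WX)$, and for the maximality property one observes that every partial continuous $\phi\colon\Nomega\parto\WX$ has a metrisable, hence sequential, domain, so $\phi$ is sequentially continuous, hence---$\seq(\WX)$ sharing the convergent sequences of $\WX$ and both being sequential---topologically continuous into $\seq(\WX)$; admissibility of $\delta$ for $\seq(\WX)$ then supplies the required factorisation $\phi=\delta\circ g$ with $g$ continuous.

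The substantive content is all imported: Theorem~\ref{th:deltaIS:admissible} (turning any quotient representation of a $T_0$-space into an admissible one, ultimately via the embedding $\eta_\WX\colon\WX\hookrightarrow\calO(\calO(\WX))$ of Proposition~\ref{p:X:embeds:into:OOX}), Proposition~\ref{p:adm:quotient} (the interaction of admissibility with quotients and with sequential coreflections), and Example~\ref{ex:deltabeta}. What remains is bookkeeping: that composites of (possibly partial) quotient maps are quotient maps, and that the existence of an admissible representation implies the $T_0$-property. The one place asking for genuine care is the last step of part~(3): that the maximality clause of Definition~\ref{def:admissible:Top} is unaffected by replacing the target topology with its sequentialisation. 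This is where I expect the main (if modest) obstacle to lie, and it is resolved by the fact that the domains of the partial continuous maps in that clause, being subspaces of the metrisable space $\Nomega$, are sequential, so that continuity into $\WX$ and into $\seq(\WX)$ agree for them.
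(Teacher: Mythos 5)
Your proposal is correct and follows essentially the same route as the paper: the non-trivial direction is obtained by passing to the Kolmogorov quotient of the countably based space, taking its standard (admissible, hence quotient) representation, composing with the quotient map, and then applying the $\delta^{\mathbb{S}}$-operator of Theorem~\ref{th:deltaIS:admissible}, with Proposition~\ref{p:adm:quotient} handling the bookkeeping among the three items. The only place you go beyond what the paper makes explicit is the converse direction of~(3) --- that an admissible representation of $\seq(\WX)$ is already admissible for $\WX$ because the domains occurring in the maximality clause are metrisable, hence sequential --- and that argument is correct.
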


From our previous characterisation in Proposition~\ref{p:adm:pseudobase} we deduce that a sequential space is a qcb-space\index{qcb-space} if, and only if, it has a countable pseudobase.
We remark that there are canonical ways to construct pseudobases for qcb-spaces that are built by the operators considered in Example~\ref{ex:construction:qcb} and Subsection~\ref{sub:powerspaces} (cf. \cite{Sch:phd,Sch02}).
Theorem~\ref{th:qcb:admissible} implies that the category of admissibly represented spaces\index{admissibly represented space} and continuously realizable functions is equivalent to $\QCBZ$ (cf.\ \cite{BSS07}).
Proofs of this subsection can be found in~\cite{Sch:phd}.


\subsection{Effectively Admissible Representations}
\label{sub:effadm}

The basic observation in Theorem~\ref{th:charac:deltaIS:admissible}
that a representation $\delta$ is topologically admissible if, and only if, it is topologically equivalent to $\delta^\IS$ gives rise to call a representation $\delta$ \emph{effectively admissible}\index{effectively admissible},
if $\delta$ is even computably equivalent to $\delta^\IS$.

\pagebreak[3]
\begin{proposition}
\label{p:delta^IS:properties}
 Let $\delta,\gamma$ be representations such that their final topologies are~$T_0$.
\begin{enumerate}
 \item \label{en:preservesComputability}
  Any $(\delta,\gamma)$-computable total function $f$ is $(\delta^\IS,\gamma^\IS)$-computable.
 \item 
  Any $(\delta,\gamma)$-continuous total function $f$ is $(\delta^\IS,\gamma^\IS)$-continuous.
 \item
  The representations $\delta^\IS$ and $\gamma^\IS$ are effectively admissible.
 \item
  The signed-digit representation and the Cauchy representation for $\IR$ are effectively admissible.
\end{enumerate}
\end{proposition}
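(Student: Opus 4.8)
The plan is to prove the four statements in order, with (1) providing the machinery that makes (3) and (4) short.

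\emph{Parts (1) and (2).} Let $f\colon X\to Y$ be total and $(\delta,\gamma)$-computable, realised by some computable $g$. By Proposition~\ref{p:comprel2topcontinuous}, $f$ is topologically continuous between $\calT(X,\delta)$ and $\calT(Y,\gamma)$, so its preimage operator $\calO(f)\colon\calO(\calT(Y,\gamma))\to\calO(\calT(X,\delta))$, $V\mapsto f^{-1}[V]$, is well defined. Identifying $\calO(\WX)$ with $\IS^{\WX}$ via $\cf$ (Proposition~\ref{p:calO}), a $[\delta\to\varrho_\IS]$-name of an open set is precisely a code $p$ such that $\etaFww(p)$ realises its characteristic function; precomposing such a realiser with $g$ and invoking the computable smn-theorem for $\etaFww$ (Proposition~\ref{p:etaFww}) shows $\calO(f)$ is $([\gamma\to\varrho_\IS],[\delta\to\varrho_\IS])$-computable. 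Iterating the same step, $\calO(\calO(f))\colon\calO(\calO(\WX))\to\calO(\calO(\WY))$, $\mathcal U\mapsto\{V\mid f^{-1}[V]\in\mathcal U\}$, is computable with respect to $[[\delta\to\varrho_\IS]\to\varrho_\IS]$ and $[[\gamma\to\varrho_\IS]\to\varrho_\IS]$. The key point is that $\calO(\calO(f))$ sends $\eta_\WX(x)$ to $\eta_\WY(f(x))$, since $f^{-1}[V]\in\eta_\WX(x)$ iff $x\in f^{-1}[V]$ iff $f(x)\in V$ (both are open in the respective $\calO$-spaces by Proposition~\ref{p:X:embeds:into:OOX}). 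Since $\delta^\IS(p)=x$ iff $[[\delta\to\varrho_\IS]\to\varrho_\IS](p)=\eta_\WX(x)$ and likewise for $\gamma^\IS$, any computable realiser of $\calO(\calO(f))$ is at the same time a computable realiser of $f$ with respect to $\delta^\IS$ and $\gamma^\IS$. This proves (1). Part (2) is the verbatim argument with ``continuous'' in place of ``computable'' throughout, using the continuous smn-theorem; alternatively it is (1) relativised to an oracle.

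\emph{Part (3).} As $\tau_\delta$ is $T_0$, Theorem~\ref{th:deltaIS:admissible} gives that $\delta^\IS$ is admissible for $(X,\tau_\delta)$, so its final topology is again $T_0$ (the cardinality argument mentioned before the proposition), and effective admissibility of $\delta^\IS$ is meaningful and amounts to $\delta^\IS\equivcp(\delta^\IS)^\IS$. One direction is completely general: for any representation $\varepsilon$ with $T_0$ final topology one has $\varepsilon\leqcp\varepsilon^\IS$, because an $\varepsilon^\IS$-name of $x$ is a name of the evaluation functional $h\mapsto h(x)$ (Proposition~\ref{p:X:embeds:into:OOX}), and from an $\varepsilon$-name $p$ of $x$ the map $s\mapsto\etaFww(s)(p)$ realises that functional, so the smn-theorem yields the reduction; in particular $\delta^\IS\leqcp(\delta^\IS)^\IS$. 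For the converse $(\delta^\IS)^\IS\leqcp\delta^\IS$ we exploit that $\delta^\IS$-names are themselves codes: from any $[\delta\to\varrho_\IS]$-name $s$ of an open set $U$ one computes, by the smn-theorem, a $[\delta^\IS\to\varrho_\IS]$-name $\Phi(s)$ of the \emph{same} $U$, namely one with $\etaFww(\Phi(s))(q)=\etaFww(q)(s)$, the right-hand side being a $\varrho_\IS$-name of ``$\delta^\IS(q)\in U$'' whenever $q$ is a $\delta^\IS$-name (as $q$ encodes a realiser of $h\mapsto h(\delta^\IS(q))$). Feeding $\Phi(s)$ into a $(\delta^\IS)^\IS$-name $r$ of $x$ then produces a $\varrho_\IS$-name of ``$x\in U$'', and currying in $s$ turns $r$ computably into a $\delta^\IS$-name of $x$. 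Hence $\delta^\IS\equivcp(\delta^\IS)^\IS$, and the same for $\gamma^\IS$. I expect this converse reduction to be the main obstacle: unlike the forward inclusion it is not a formal consequence of functoriality, since $\delta^\IS$ need not be even continuously reducible to $\delta$; one really has to use the self-applicative structure of Kleene's second algebra to recast a $[\delta\to\varrho_\IS]$-name as a $[\delta^\IS\to\varrho_\IS]$-name of the same open set.

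\emph{Part (4).} By Example~\ref{ex:ComparisonRealReps}, $\sdrep\equivcp\varrho_\IR$, and effective admissibility is invariant under $\equivcp$ by the monotonicity of $(\cdot)^\IS$ contained in (1) (applied to the identity on $\IR$, it gives $\varepsilon_1\leqcp\varepsilon_2\Rightarrow\varepsilon_1^\IS\leqcp\varepsilon_2^\IS$). So it suffices to show $\sdrep\equivcp(\sdrep)^\IS$. The inclusion $\sdrep\leqcp(\sdrep)^\IS$ is the general fact used in (3). For $(\sdrep)^\IS\leqcp\sdrep$: the characteristic function of a rational open interval $(a,b)$ is $\sdrep$-computable uniformly in $a,b\in\IQ$, so one computes, uniformly in $(a,b)$, a $[\sdrep\to\varrho_\IS]$-name of $(a,b)$; feeding these into a $(\sdrep)^\IS$-name of $x$ semidecides, for every rational interval, whether it contains $x$. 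Dovetailing over all rational intervals of length below $2^{-n}$ one is guaranteed to confirm some such interval containing $x$, and outputs its midpoint as the $n$-th rational approximation, thereby computing a $\varrho_\IR$-name of $x$ and hence (via $\varrho_\IR\equivcp\sdrep$) a $\sdrep$-name. Thus $\sdrep$, and with it $\varrho_\IR$, is effectively admissible.
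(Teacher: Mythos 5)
Your argument is correct, and it is a sound self-contained reconstruction of what the paper only cites (the relevant propositions of \cite{Sch:phd} and \cite{Wei00}): parts (1)--(2) via functoriality of the double-dual construction $f \mapsto \calO(\calO(f))$ together with the identity $\calO(\calO(f))(\eta_\WX(x)) = \eta_\WY(f(x))$, part (3) via the ``swap'' trick $\etaFww(\Phi(s))(q) = \etaFww(q)(s)$ that recasts a $[\delta\to\varrho_\IS]$-name of an open set as a $[\delta^\IS\to\varrho_\IS]$-name of the same set, and you correctly identify this as the only non-formal step. For part (4) the paper's intended route is slightly different: it factors through the standard representation $\delta_\beta$ of $\IR$ (namely $\varrho_\IR \equivcp \delta_\beta$ by \cite{Wei00} and $\delta_\beta \equivcp \delta_\beta^\IS$ for computable topological spaces), whereas you prove $(\sdrep)^\IS \leqcp \varrho_\IR$ directly by querying rational intervals; the two arguments have the same mathematical content, and yours has the advantage of not presupposing the theory of standard representations. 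One small repair: to see that $(\delta^\IS)^\IS$ is again a representation of $X$ itself (rather than of a Kolmogorov quotient), the clean justification is not the cardinality argument but that $\delta^\IS$, being admissible for the sequential space $(X,\tau_\delta)$, is a quotient representation for it by Proposition~\ref{p:adm:quotient}(2), whence $\tau_{\delta^\IS}=\tau_\delta$ is $T_0$; this identity of final topologies is also silently used when you treat $[\delta\to\varrho_\IS]$ and $[\delta^\IS\to\varrho_\IS]$ as representations of the same set of opens, so it is worth stating once explicitly.
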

We remark that Proposition~\ref{p:delta^IS:properties}\eqref{en:preservesComputability} does not hold for partial functions: 
it is even possible that a $(\delta,\gamma)$-computable partial function is not $(\delta^\IS,\gamma^\IS)$-continuous.

\pagebreak[3]
\begin{proposition}
\label{p:effadm:constructionI}
Let $\delta$ and $\gamma$ be effectively admissible representations for qcb$_0$-spaces $\WX$ and $\WY$, respectively.
\begin{enumerate}
 \item
  $\delta \boxtimes \gamma$ is an effectively admissible representation for the $\QCB$-product $\WX \times \WY$.
 \item
  $\delta \boxplus \gamma$ is an effectively admissible representation for the $\QCB$-coproduct $\WX \uplus \WY$. 
 \item
  $[\delta\to\gamma]$ is an effectively admissible representation for the $\QCB$-exponential $\WY^\WX$.
 \item
  $\delta \wedge \gamma$ is an effectively admissible representation for the $\QCB$-meet $\WX \wedge \WY$.
 \item
  Let $\WZ$ be a subsequential subspace of $\WX$. \\
  Then the corestriction $\delta|^\WZ$ is an effectively admissible representation for $\WZ$.
 \item 
  Let $e\colon \WZ \to \WX$ be a sequential embedding of a sequential space $\WZ$ into $\WX$.\\
  Then $e^{-1} \circ \delta$ is an effectively admissible representation for $\WZ$.
\end{enumerate}
\end{proposition}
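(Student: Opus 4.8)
The plan is to establish, for each of the six constructions $\calC$ in the statement, that $\calC(\delta,\gamma)\equivcp\calC(\delta,\gamma)^\IS$, which is precisely effective admissibility. In every case the final topology of $\calC(\delta,\gamma)$ is the corresponding $\QCB$-construction applied to $\WX$ and $\WY$ (cf.\ Propositions~\ref{p:prodrep:adm}, \ref{p:funcrep:adm} and~\ref{p:seqembedding:adm}), hence a qcb$_0$-space and in particular $T_0$; thus $\calC(\delta,\gamma)^\IS$ genuinely represents that space (Theorem~\ref{th:deltaIS:admissible}). One half is free of hypotheses: for \emph{any} representation $\theta$ whose final topology is $T_0$ one has $\theta\leqcp\theta^\IS$, since by definition a $\theta^\IS$-name of a point is a name (under the iterated function-space representation built from $\etaFww$) of the open neighbourhood filter of that point, and passing from a $\theta$-name to such a name is computable by the evaluation and smn properties of $\etaFww$ (Proposition~\ref{p:etaFww}): take $s$ computable with $\etaFww(s(p))(q)=\etaFww(q)(p)$. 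So the whole task reduces to producing, in each case, a computable translator witnessing $\calC(\delta,\gamma)^\IS\leqcp\calC(\delta,\gamma)$.

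The common device will be Proposition~\ref{p:delta^IS:properties}\eqref{en:preservesComputability}: a total function that is $(\delta_1,\delta_2)$-computable (final topologies $T_0$) is $(\delta_1^\IS,\delta_2^\IS)$-computable, and composing the resulting realizer with a translator witnessing $\delta_2^\IS\leqcp\delta_2$ — which exists exactly because $\delta_2$ is effectively admissible — makes it $(\delta_1^\IS,\delta_2)$-computable. For the \emph{product} I would apply this to the projections $\pi_1\colon\WX\times\WY\to\WX$ and $\pi_2\colon\WX\times\WY\to\WY$, which are $(\delta\boxtimes\gamma,\delta)$- resp.\ $(\delta\boxtimes\gamma,\gamma)$-computable (realizers: the decoding maps of the Baire-space pairing). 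This yields realizers $a_1,a_2$ making $\pi_1,\pi_2$ computable with respect to $\big((\delta\boxtimes\gamma)^\IS,\delta\big)$ and $\big((\delta\boxtimes\gamma)^\IS,\gamma\big)$, and then $p\mapsto\langle a_1(p),a_2(p)\rangle$ translates $(\delta\boxtimes\gamma)^\IS$ into $\delta\boxtimes\gamma$, because $\delta(a_1(p))$ and $\gamma(a_2(p))$ are precisely the two components of $(\delta\boxtimes\gamma)^\IS(p)$. The \emph{meet}, the \emph{subspace} $\delta|^\WZ$ and the \emph{sequential embedding} $e^{-1}\circ\delta$ are handled identically, using respectively the two continuous inclusions $\WX\wedge\WY\hookrightarrow\WX,\WY$ (realized by the meet-tupling projections), the continuous inclusion $\WZ\hookrightarrow\WX$ (realized by the identity), and $e\colon\WZ\to\WX$ itself (realized by the identity); the only extra remark is that the recovered $\delta$-name automatically lands in the restricted domain $\dom(\delta|^\WZ)$ resp.\ $\dom(e^{-1}\circ\delta)$, since the structure map takes values in the relevant subset. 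For the \emph{coproduct} $\WX\uplus\WY$ one additionally uses that the two summands are clopen: from a $(\delta\boxplus\gamma)^\IS$-name one first recovers the summand index via the total continuous map $(i,z)\mapsto i$ into the discrete $\IN$ (which is $(\delta\boxplus\gamma,\varrho_\IN)$-computable and $\varrho_\IN$ is effectively admissible, so the device applies), and on the selected summand — a clopen, hence sequential, subspace homeomorphic to $\WX$ or $\WY$ — the subspace case returns the corresponding $\delta$- or $\gamma$-name.

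The exponential is the substantial case. Here the relevant structure map is evaluation $\mathit{eval}\colon\WY^\WX\times\WX\to\WY$, which is $\big([\delta\to\gamma]\boxtimes\delta,\gamma\big)$-computable by the utm property of $\etaFww$. Given a $([\delta\to\gamma])^\IS$-name $p$ of $f$, I would build an $\etaFww$-name of a realizer of $f$ with respect to $(\delta,\gamma)$ as follows: from $p$ and a $\delta$-name $q$ of $x$ it suffices to compute a $\gamma$-name of $f(x)$, and since $\gamma$ is effectively admissible it suffices to compute a $\gamma^\IS$-name of $f(x)$, i.e.\ to semidecide ``$f(x)\in V$'' uniformly in a name of an open $V\subseteq\WY$. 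The key points are that $\{g\in\WY^\WX\mid g(x)\in V\}=\mathit{eval}_x^{-1}[V]$ is open in $\WY^\WX$, that $(x,V)\mapsto\mathit{eval}_x^{-1}[V]$ is a computable map $\WX\times\calO(\WY)\to\calO(\WY^\WX)$ (computable preimage along $\mathit{eval}$ followed by a partial application in the $\WX$-argument, i.e.\ a currying — using cartesian closedness, equivalently the smn-theorems for $\etaFww$), and that the name $p$ is by construction a name of the double-dual point $\eta_{\WY^\WX}(f)$, which \emph{is} the continuous map $\calO(\WY^\WX)\to\IS$, $U\mapsto(\top\text{ iff }f\in U)$ (Propositions~\ref{p:X:embeds:into:OOX} and~\ref{p:calO}); feeding $U=\mathit{eval}_x^{-1}[V]$ into the latter and using $f\in\mathit{eval}_x^{-1}[V]\iff f(x)\in V$ gives the semidecision. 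Assembling these computable pieces and then currying in $q$ via the computable smn-theorem produces a computable $s$ with $\etaFww(s(p))$ a realizer of $f$, i.e.\ a translator $([\delta\to\gamma])^\IS\leqcp[\delta\to\gamma]$.

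Combining $\calC(\delta,\gamma)\leqcp\calC(\delta,\gamma)^\IS$ with the translator in the opposite direction gives $\calC(\delta,\gamma)\equivcp\calC(\delta,\gamma)^\IS$ in every case, so $\calC(\delta,\gamma)$ is effectively admissible. I expect the exponential to be the main obstacle: the product, meet, subspace, embedding and (with the index-extraction step) coproduct cases all reduce to pushing an explicit structure map through Proposition~\ref{p:delta^IS:properties}\eqref{en:preservesComputability} and then down to $\delta,\gamma$ by effective admissibility, whereas for the exponential one must verify that preimage and currying of open sets, and the reading off of a double-dual name, are all uniformly \emph{computable} rather than merely continuous — and it is exactly there that effective, as opposed to topological, admissibility of $\gamma$ is used in an essential way.
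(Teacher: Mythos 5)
The paper itself offers no argument here beyond a citation to \cite{Sch:phd}, but your route is the standard one and matches the machinery that reference uses: the free inequality $\theta\leqcp\theta^\IS$ obtained by currying the swap $(p,q)\mapsto\etaFww(q)(p)$, plus the transport of explicit structure maps through Proposition~\ref{p:delta^IS:properties}(1) followed by the translators $\delta^\IS\leqcp\delta$, $\gamma^\IS\leqcp\gamma$ supplied by effective admissibility. The product, meet, subspace and sequential-embedding cases are complete as written, since in each of these the structure maps (projections, inclusions, $e$ itself) are \emph{total} continuous maps \emph{out of} the constructed space whose $\delta$- or $\gamma$-names suffice to reassemble a name for the construction. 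Your treatment of the exponential is also sound: the chain ``$\delta$-name of $x$ plus name of open $V\subseteq\WY$ $\mapsto$ name of the open $\{g\mid g(x)\in V\}$ $\mapsto$ apply the double-dual functional named by $p$'' followed by two applications of the smn-theorem and one use of $\gamma^\IS\leqcp\gamma$ is exactly where effective (rather than merely topological) admissibility of $\gamma$ enters, and you identify that correctly.

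The one place where the argument as written does not literally go through is the coproduct. Extracting the index via $(i,z)\mapsto i$ is fine, but after that your ``device'' no longer applies: the map you would need, $(1,x)\mapsto x$ from $\WX\uplus\WY$ to $\WX$, is \emph{partial}, and the paper explicitly warns (remark after Proposition~\ref{p:delta^IS:properties}) that statement (1) fails for partial functions; nor can you totalise it by a continuous retraction, since that requires a default point of $\WX$ with a \emph{computable} name, which need not exist. Likewise, a $(\delta\boxplus\gamma)^\IS$-name of a point of the summand is not automatically a $\bigl((\delta\boxplus\gamma)|^{\{1\}\times\WX}\bigr)^\IS$-name, so ``the subspace case returns the name'' skips a step. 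The correct repair is the dual one you hint at with clopenness: the extension operator $h\mapsto\tilde h$ sending a continuous $h\colon\WX\to\IS$ to the map on $\WX\uplus\WY$ that agrees with $h$ on the first summand and is constantly $\bot$ on the second is total, continuous and computable from $[\delta\to\varrho_\IS]$ to $[\delta\boxplus\gamma\to\varrho_\IS]$; precomposing the functional named by the given $(\delta\boxplus\gamma)^\IS$-name with it yields a $\delta^\IS$-name of $x$ (using $\tilde h(1,x)=h(x)$), whence a $\delta$-name by effective admissibility of $\delta$, and prefixing the recovered index gives the desired $\delta\boxplus\gamma$-name. With this substitution the proof is complete.
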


To construct\index{construction of representations} effectively admissible representations 
for the function space $\YY^\XX$ and for the power spaces over $\WX$ considered in Subsection~\ref{sub:powerspaces},
we only need a quotient representation for $\WX$.

\pagebreak[3]
\begin{proposition}
\label{p:effadm:constructionII}
 Let $\delta$ be a quotient representation for a qcb$_0$-space $\WX$.
\begin{enumerate}
 \item
  Let $\gamma$ be an effectively admissible representation for a qcb$_0$-space $\WY$. \\
  Then $[\delta\to\gamma]$ is an effectively admissible representation for $\WY^\WX$.
 \item
  Let $q\colon \WX \to \WZ$ be a quotient map. 
  Then $(q \circ \delta)^\IS$ is an effectively admissible representation of the Kolmogorov quotient of $\WZ$.  
 \item
  Define $\delta^\calO$ by 
  $\delta^\calO(p)=U \,:\Longleftrightarrow\, \cf(U)= [\delta\to\varrho_\IS](p)$. \\
  Then $\delta^\calO$ is an effectively admissible representation for $\calO(\WX)$.
 \item
  Define $\delta^\calAm$ by 
  $\delta^\calAm(p)=A \,:\Longleftrightarrow\, \cf(X\setminus A)= [\delta\to\varrho_\IS](p)$. \\
  Then $\delta^\calAm$ is an effectively admissible representation for $\calAm(\WX)$.  
 \item
  Define $\delta^\calAp$ by 
  $\delta^\calAp(p)=A \,:\Longleftrightarrow\, \eta^\Diamond_\WX(A)=[[\delta\to\varrho_\IS]\to\varrho_\IS](p)$.\\
  Then $\delta^\calAp$ is an effectively admissible representation for $\calAp(\WX)$.
 \item 
  Let $\WX$ be $T_1$.
  Define $\delta^\Kup$ by
  $\delta^\Kup(p)=K \,:\Longleftrightarrow\,\eta^\Box_\WX(K)=[[\delta\to\varrho_\IS]\to\varrho_\IS](p)$. \\
  Then $\delta^\Kup$ is an effectively admissible representation for $\Kup(\WX)$.
 \item 
  Define $\delta^\KViet$ by
  $\delta^\KViet(p)=K \,:\Longleftrightarrow\, \delta^\Kup(\pi_1(p))=K \,\,\&\,\,\delta^\calAp(\pi_2(p))=\Cls(K)$, 
  where $\Cls(K)$ denotes the topological closure of $K$.
  Then $\delta^\KViet$ is an effectively admissible representation for $\KViet(\WX)$. 
\end{enumerate}
\end{proposition}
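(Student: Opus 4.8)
\emph{Proof strategy.} The plan is to reduce every item to item~(1). Recall from Subsection~\ref{sub:effadm} that a representation $\delta$ with $T_0$ final topology is effectively admissible exactly when $\delta\equivcp\delta^\IS$, and that the reduction $\delta\leqcp\delta^\IS$ in fact holds for \emph{every} representation: the evaluation $(r,q)\mapsto\etaFww(r)(q)$ is computable, so the smn-theorem for $\etaFww$ produces a computable $s$ with $\etaFww(s(q))(r)=\etaFww(r)(q)$, whence $[[\delta\to\varrho_\IS]\to\varrho_\IS](s(q))=\eta_\WX(\delta(q))$ for all $q\in\dom(\delta)$ (this merely says that the point-evaluation functional $h\mapsto h(\delta(q))$ is uniformly computable, cf.\ Proposition~\ref{p:X:embeds:into:OOX}(3)). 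So in each item it suffices to establish the opposite reduction $(\delta')^\IS\leqcp\delta'$ for the representation $\delta'$ in question, and for items~(2)--(7) I will obtain this from item~(1) together with Proposition~\ref{p:effadm:constructionI}.

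For item~(1): by Proposition~\ref{p:funcrep:adm}(1) and passage to the sequential coreflection (Proposition~\ref{p:adm:quotient}(3)), $[\delta\to\gamma]$ is admissible---hence a quotient representation---for $\WY^\WX$, so its final topology is that of $\WY^\WX$, which is $T_0$ since $\WY$ is (Example~\ref{ex:construction:qcb}). The crux is $[\delta\to\gamma]^\IS\leqcp[\delta\to\gamma]$, and by currying the utm-computable evaluation map this amounts to showing that $\mathit{eval}\colon(\WY^\WX,[\delta\to\gamma]^\IS)\times(\WX,\delta)\to(\WY,\gamma)$ is computable. So let $p$ be a $[\delta\to\gamma]^\IS$-name of $f$ and $q$ a $\delta$-name of $x$. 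For any open $V\subseteq\WY$, presented by a $[\gamma\to\varrho_\IS]$-name of $\cf(V)$, the set $\{g\in\WY^\WX\mid g(x)\in V\}$ is open, with characteristic function $\cf(V)\circ\mathit{eval}(\cdot,x)$; since $\mathit{eval}$ is computable with respect to $[\delta\to\gamma]$ and composition preserves computability, the smn-theorem yields, uniformly from $q$ and the name of $\cf(V)$, a $[[\delta\to\gamma]\to\varrho_\IS]$-name of this characteristic function. Applying $\etaFww(p)$ to it semidecides ``$f(x)\in V$'', and running this uniformly over all $V$ produces a $\gamma^\IS$-name of $f(x)$, which we convert into a $\gamma$-name of $f(x)$ by a fixed computable translator $\gamma^\IS\leqcp\gamma$ supplied by effective admissibility of $\gamma$. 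All of this is computable in $(p,q)$; currying out $q$ gives $[\delta\to\gamma]^\IS\leqcp[\delta\to\gamma]$, and together with the free reduction above, $[\delta\to\gamma]$ is effectively admissible.

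For the remaining items: in~(2), $q\circ\delta$ is a quotient representation of $\WZ$, so its final topology is that of $\WZ$ and $(q\circ\delta)^\IS$ is effectively admissible for $\KQ(\WZ)$ by Proposition~\ref{p:delta^IS:properties}(3) and Theorem~\ref{th:deltaIS:admissible}(2). In~(3), $\cf\colon\calO(\WX)\to\IS^\WX$ is a homeomorphism (Proposition~\ref{p:calO}(2)) and $\delta^\calO=\cf^{-1}\circ[\delta\to\varrho_\IS]$; since $\varrho_\IS$ is effectively admissible for $\IS$, item~(1) makes $[\delta\to\varrho_\IS]$ effectively admissible for $\IS^\WX$, and Proposition~\ref{p:effadm:constructionI}(6) applied to the sequential embedding $\cf$ yields the claim. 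Item~(4) is identical, via the complementation homeomorphism $\calAm(\WX)\cong\calO(\WX)$. For~(5), $[\delta\to\varrho_\IS]$ is a quotient representation of $\calO(\WX)$, so item~(1) makes $[[\delta\to\varrho_\IS]\to\varrho_\IS]$ effectively admissible for $\calO(\calO(\WX))$; as $\eta^\Diamond_\WX$ sequentially embeds $\calAp(\WX)$ into $\calO(\calO(\WX))$ (Proposition~\ref{p:calAp:QCB}(2)) and $\delta^\calAp$ is the induced corestriction, Proposition~\ref{p:effadm:constructionI}(6) applies; item~(6) is the same with the sequential embedding $\eta^\Box_\WX$ of Proposition~\ref{p:upvietoris:qcb}(3), $T_1$-ness of $\WX$ making all compact sets saturated so that $\delta^\Kup$ is total. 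Finally, for~(7), $K\mapsto(K,\Cls(K))$ is a sequential embedding of $\KViet(\WX)$ into the $\QCB$-product $\Kup(\WX)\times\calAp(\WX)$ (the Vietoris topology is the meet of the upper Vietoris topology---that of $\Kup(\WX)$---and the lower Vietoris topology, which matches the lower Fell topology on closures because $K\cap U\neq\emptyset\iff\Cls(K)\cap U\neq\emptyset$), and $\delta^\KViet$ is the corestriction of $\delta^\Kup\boxtimes\delta^\calAp$ along it, so items~(6),(5) and Proposition~\ref{p:effadm:constructionI}(1),(6) finish the proof.

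The hard part will be item~(1): verifying that $\mathit{eval}$ is computable when the function argument carries the \emph{double-dual} representation $[\delta\to\gamma]^\IS$ rather than $[\delta\to\gamma]$, i.e.\ threading a $[\delta\to\gamma]^\IS$-name through the smn-theorem while feeding it only names of characteristic functions of open subsets of $\WY^\WX$ that are themselves built uniformly from the point-name $q$. It is precisely at the last step---upgrading the resulting $\gamma^\IS$-name of $f(x)$ to a genuine $\gamma$-name---that \emph{effective} (rather than merely topological) admissibility of $\gamma$ enters. Items~(2)--(7) are then bookkeeping, the only non-formal inputs being the homeomorphisms and sequential embeddings borrowed from Section~\ref{sec:QCB} and the elementary topological identification in~(7).
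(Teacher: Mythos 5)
Your proposal is correct, and it follows the same decomposition that the paper's proof (which consists only of citations to \cite{Sch:phd}) implicitly relies on: establish the exponential case~(1) directly via the evaluation/currying argument and the free reduction $\delta'\leqcp(\delta')^\IS$, then obtain (3)--(7) by transporting effective admissibility along the homeomorphisms $\cf$, complementation, and the sequential embeddings $\eta^\Diamond_\WX$, $\eta^\Box_\WX$, $K\mapsto(K,\Cls(K))$ using Proposition~\ref{p:effadm:constructionI}(1),(6), and (2) from Theorem~\ref{th:deltaIS:admissible}. The only cosmetic caveat is that in item~(2) you invoke Proposition~\ref{p:delta^IS:properties}(3), which is stated under a $T_0$ hypothesis on the final topology, whereas $\tau_{q\circ\delta}$ need not be $T_0$; the underlying fact that $\theta^\IS$ is always effectively admissible for the Kolmogorov quotient is exactly what the paper's cited Lemma~4.3.6(2) supplies, so this is a citation mismatch rather than a gap.
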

\noindent
Proofs of this subsection can be found in \cite{Sch:phd}.


\subsection{Effective Qcb-Spaces}
\label{sub:EffQCB}

An \emph{effective qcb-space}\index{effective qcb-space} is a represented space $\XX=(X,\delta)$ such that $\tau_\delta$ has the $T_0$-property and $\delta$ is an effectively admissible representation for the associated qcb-space $(X,\tau_\delta)$.
The category $\EffQCB$ has effective qcb-spaces as its objects and total computable functions as morphisms.
So $\EffQCB$ is a full subcategory of $\RepEff$.

A similar category was first investigated in \cite{Sch:phd,Sch02b} under the acronym $\mathsf{EffSeq}$, but it comprises also effectively multirepresented qcb-spaces by dropping the $T_0$-condition.
A.~Pauly has investigated effective qcb-spaces under the name \emph{computably admissible represented space} (cf.\ \cite{Pau16}).
However, we prefer the former name, because there exist further classes of represented spaces enjoying other reasonable effectivity properties.
We will discuss this in Subsections~\ref{sub:QN} and~\ref{sub:limitspaces}.

\pagebreak[3]
\begin{example}[Basic topological spaces as effective qcb-spaces]
\label{ex:EffQCB}  \rm \quad\\ \noindent
 The following are effective qcb-spaces.
\begin{enumerate}
 \item 
  $(\Nomega,\mathit{id}_{\Nomega})$.
 \item
  $(\IR, \varrho_\IR)$, where $\varrho_\IR$ is the Cauchy representation\index{Cauchy representation} from Subsection~\ref{sub:CauchyRepReal}.  
 \item
  $(\IR, \sdrep)$, where $\sdrep$ is the signed-digit representation\index{signed-digit representation} from Subsection~\ref{sub:SignedDigitRep}.
 \item
  $(\IN, \varrho_\IN)$ with $\varrho_\IN(p)=p(0)$.
 \item
  $(\IS, \varrho_\IS)$ with $\varrho_\IS(p)=\bot :\Longleftrightarrow p=\w{0}^\omega$
  and $\varrho_\IS(p)=\top :\Longleftrightarrow p \neq \w{0}^\omega$.
\end{enumerate}
\end{example}

Computable topological spaces in the sense of T.~Grubba and K.~Weihrauch are effective qcb-spaces.

\pagebreak[3]
\begin{example}[Computable topological spaces]
\label{ex:CompTopSpaces} \rm \quad\\
\noindent
  A \emph{computable topological space}\index{computable topological space}
  (cf.\ \cite{WG09}) is a pair $(X,\beta)$,
  where $\beta$ is a partial numbering of a base of a $T_0$-topology such that there is a computably enumerable set $J \subseteq \IN^3$ satisfying 
  $\beta_a \cap \beta_b=\Tbigcup\{ \beta_k \,|\, (a,b,k) \in J\} $
  and the domain of $\beta$ is computably enumerable.
  The standard representation $\delta_\beta$ from Example~\ref{ex:deltabeta} defined by
  \[  
    \delta_\beta(p)=x \;:\Longleftrightarrow\;
     \En(p) = \{ i \in \dom(\beta) \,|\, x \in \beta_i \} 
  \]    
  is effectively admissible. 
  So $(X,\delta_\beta)$ is an effective qcb-space.
\end{example}

Computable metric spaces can be viewed as computable topological spaces in the sense of Example~\ref{ex:CompTopSpaces} and thus as effective qcb-spaces.

\pagebreak[3]
\begin{example}[Computable metric space]\label{ex:cms} \rm \quad\\
\noindent
 A \emph{computable metric space (cms)}\index{computable metric space} is a metric space $(X,d)$ 
 together with a total numbering $\alpha$ of a dense subset $A$ such that
 the function $(i,j) \mapsto d(\alpha_i,\alpha_j)$ is $(\varrho_\IN\boxtimes\varrho_\IN;\varrho_\IR)$-computable.
 The latter is equivalent to saying that $(d(\alpha_{\fst(k)},\alpha_{\snd(k)}))_{k}$ is a computable sequence of real numbers, where $\fst,\snd\colon \IN \to \IN$ denote the computable projections of a canonical computable pairing function for $\IN$.
 
 A canonical numbering $\beta$ of a base of the topology generated by the metric $d$ is constructed by
 \[
   \beta_k:=\{ x \in X \,|\, d(x,\alpha_{\fst(k)})<2^{-snd(k)} \} \,,
 \]
 The computability condition on $\alpha$ guarantees that $(X,\beta)$ is a computable topological space
 in the sense of Example~\ref{ex:CompTopSpaces}.
 The \emph{Cauchy representation}\index{Cauchy representation} $\varrho_\alpha\colon \Nomega \parto X$ is defined by
 \[
    \varrho_\alpha(p)=x \;:\Longleftrightarrow\; \forall k \in \IN.\, d(x,\alpha_{p(k)}) \leq 2^{-k}
  \]
 for all $p \in \IN$ and $x \in X$. 
 The Cauchy representation is computably equivalent to the standard representation $\delta_\beta$ from Example~\ref{ex:CompTopSpaces}.
 Hence $(X,\varrho_\alpha)$ is an effective qcb-space.
 
 The real numbers $\IR$ together with the numbering $\nu_\IQ$ from Subsection~\ref{sub:CauchyRepReal} is a prime example of a computable metric space. 
 Another prominent example is the infinite-dimensional separable Hilbert space $\ell_2$ (cf.\ \cite{BD07a}).
 For more examples see \cite{BHW08,Wei00}. 
\end{example}

Propositions~\ref{p:effadm:constructionI} and \ref{p:effadm:constructionII} tell us how to construct effectively admissible qcb-spaces from given ones.

\pagebreak[3]
\begin{proposition}[Construction of effective qcb-spaces]
\label{p:construction:effqcb} \quad\\ \noindent
 Let $\XX=(X,\delta)$ and $\YY=(Y,\gamma)$ be effective qcb-spaces.
 The following are effective qcb-spaces, where $\WX$ and $\WY$ denote the qcb$_0$-spaces associated to $\XX$ and $\YY$, respectively.
\begin{enumerate} 
 \item 
  $\XX \times \YY:= (\WX \times \WY, \delta \boxtimes \gamma)$.
 \item
  $\XX \uplus \YY:= (\{1\} \times \WX \cup \{2\} \times \WY, \delta \boxplus \gamma)$.
 \item
  $\YY^\XX:= (\WY^\WX, [\delta \to \gamma])$.
 \item
  $\XX \wedge \YY:= (\WX \wedge \WY, \delta \wedge \gamma)$.
 \item
  $\calO(\XX):= (\calO(\WX),\delta^\calO)$.
 \item
  $\calAm(\XX):= (\calAm(\WX),\delta^\calAm)$.
 \item
  $\calAp(\XX):= (\calAp(\WX),\delta^\calAp)$.
 \item
  $\Kup(\XX):= (\Kup(\WX),\delta^\Kup)$.
 \item
  $\KViet(\XX):= (\KViet(\WX),\delta^\KViet)$.
 \item
  $\XX|^Z:=(Z,\delta|^Z)$ for $Z \subseteq \XX$, where $\delta|^Z$ is the subspace representation.
\end{enumerate}
\end{proposition}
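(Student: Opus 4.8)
The plan is to verify, for each of the ten items, the two defining conditions of an effective qcb-space from Subsection~\ref{sub:EffQCB}: that the final topology of the exhibited representation is $T_0$, and that this representation is effectively admissible for the associated qcb$_0$-space. Both will follow by combining the closure results for qcb$_0$-spaces (Theorem~\ref{th:qcb:ccc} and Propositions~\ref{p:calO}, \ref{p:calAp:QCB}, \ref{p:upvietoris:qcb}) with the construction results for effectively admissible representations, Propositions~\ref{p:effadm:constructionI} and~\ref{p:effadm:constructionII}. So the proof is essentially a matter of matching each item to the corresponding earlier statement.

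First I would record a preliminary fact needed for items~(5)--(9): for an effective qcb-space $\XX=(X,\delta)$, the representation $\delta$ is a quotient representation for $\WX$. Indeed, $\delta$ is effectively, hence topologically, equivalent to $\delta^\IS$ and $\tau_\delta$ is $T_0$, so $\delta$ is admissible for $\WX=(X,\tau_\delta)$ by Theorem~\ref{th:charac:deltaIS:admissible}; since every qcb-space is sequential, Proposition~\ref{p:adm:quotient}(\ref{en:admquotient}) then yields that $\delta$ is a quotient representation for $\WX$ --- precisely the hypothesis under which Proposition~\ref{p:effadm:constructionII} applies. With that in hand I would go through the list. For~(1)--(4) and~(10): Theorem~\ref{th:qcb:ccc}, together with the remarks on preservation of the $T_0$-property in Example~\ref{ex:construction:qcb}, shows that the $\QCB$-product $\WX\times\WY$, the $\QCB$-coproduct $\WX\uplus\WY$, the $\QCB$-exponential $\WY^\WX$, the $\QCB$-meet $\WX\wedge\WY$ and any subsequential subspace of $\WX$ are qcb$_0$-spaces, and Proposition~\ref{p:effadm:constructionI} states precisely that $\delta\boxtimes\gamma$, $\delta\boxplus\gamma$, $[\delta\to\gamma]$, $\delta\wedge\gamma$ and $\delta|^Z$ are effectively admissible representations for these spaces. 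For~(5)--(9): Propositions~\ref{p:calO}, \ref{p:calAp:QCB} and~\ref{p:upvietoris:qcb} say that $\calO(\WX)$, $\calAm(\WX)$, $\calAp(\WX)$, $\Kup(\WX)$ and $\KViet(\WX)$ are qcb$_0$-spaces, and Proposition~\ref{p:effadm:constructionII}, applied to the quotient representation $\delta$ from the preliminary fact, gives that $\delta^\calO$, $\delta^\calAm$, $\delta^\calAp$, $\delta^\Kup$ and $\delta^\KViet$ are effectively admissible for them. In every case both conditions hold, so the listed pair is an effective qcb-space.

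The only point where anything substantive happens is the identification of the final topology of each construction with the intended \emph{sequentialised} qcb-topology --- the sequential coreflection of the Tychonoff product, of the compact-open topology, of the Vietoris topology, and so on, rather than the naive topology --- which is what supplies the $T_0$-property and makes the cited propositions applicable. But this identification is already built into the assertion that the representations of Propositions~\ref{p:effadm:constructionI} and~\ref{p:effadm:constructionII} are effectively admissible \emph{for} the named spaces, since admissibility together with sequentiality of the target forces the final topology to coincide with its topology (Proposition~\ref{p:adm:quotient}(\ref{en:admquotient})). One minor wrinkle is item~(8): Proposition~\ref{p:effadm:constructionII}(6) is stated for $T_1$ spaces, so for a general qcb$_0$-space $\WX$ one instead obtains $\delta^\Kup$ as the corestriction, along the sequential embedding $\eta^\Box_\WX$ of Proposition~\ref{p:upvietoris:qcb}, of an effectively admissible representation of $\calO(\calO(\WX))$, invoking Proposition~\ref{p:effadm:constructionI}(6); this is routine.
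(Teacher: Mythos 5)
Your proposal is correct and follows essentially the same route as the paper, whose entire proof consists of citing Propositions~\ref{p:effadm:constructionI} and~\ref{p:effadm:constructionII}; you merely make explicit the bookkeeping (the $T_0$-property of each construction, and the fact that an effectively admissible representation of a qcb$_0$-space is a quotient representation, which licenses the use of Proposition~\ref{p:effadm:constructionII}). Your observation about the $T_1$ hypothesis in Proposition~\ref{p:effadm:constructionII}(6) and the workaround for item~(8) via the embedding $\eta^\Box_\WX$ is a legitimate extra care point that the paper's terse citation glosses over.
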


\noindent
In Proposition~\ref{p:construction:effqcb} we have deliberately used qcb-spaces rather than sets as first component: by Propositions~\ref{p:effadm:constructionI} and~\ref{p:effadm:constructionII}
they are the qcb-spaces asscociated to the represented spaces in (1) to (9).
The qcb-space asscociated to $\XX|^Z$ in (10) carries the subsequential topology on $Z$.
The category $\EffQCB$ has nice closure properties.

\pagebreak[3]
\begin{theorem}
\label{th:effqcb:ccc}
 The category $\EffQCB$ of effective qcb-spaces\index{effective qcb-space}
 and total computable functions
 is cartesian closed\index{cartesian closed}, finitely complete and finitely cocomplete.
\end{theorem}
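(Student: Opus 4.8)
The plan is to exploit that $\EffQCB$ is a \emph{full} subcategory of $\RepEff$: to realise a universal construction in $\EffQCB$ it then suffices to exhibit the corresponding construction inside $\RepEff$ (recall from Subsection~\ref{sub:RepNomega:ClosureProps} that $\RepEff$ is cartesian closed and has all finite limits) and to check that it stays inside the class of effective qcb-spaces, which is exactly what Propositions~\ref{p:effadm:constructionI}, \ref{p:effadm:constructionII} and~\ref{p:construction:effqcb} provide. For finite limits: the one-point represented space is an effective qcb-space and is terminal in $\RepEff$, hence terminal in $\EffQCB$. For $\XX=(X,\delta)$, $\YY=(Y,\gamma)$ in $\EffQCB$, the $\RepEff$-product $(X\times Y,\delta\boxtimes\gamma)$ is again an effective qcb-space by Proposition~\ref{p:construction:effqcb}(1), with associated qcb-space the $\QCB$-product $\WX\times\WY$; by fullness it is the product in $\EffQCB$. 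Likewise the $\RepEff$-equaliser of $f,g\colon\XX\to\YY$ is the corestriction $(E,\delta|^E)$ with $E=\{x\mid f(x)=g(x)\}$, and since $E$ carrying the subsequential topology is a subsequential subspace of $\WX$, Proposition~\ref{p:effadm:constructionI}(5) shows $\delta|^E$ is effectively admissible. Terminal object, binary products and equalisers give all finite limits, so $\EffQCB$ is finitely complete.

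Cartesian closure and coproducts are handled the same way. The $\RepEff$-exponential $\YY^\XX$ has carrier $\setC(\XX,\YY)$, which by Theorem~\ref{th:maintheo} (both representations being admissible) coincides with the carrier $\setC(\WX,\WY)$ of the $\QCB$-exponential, and by Proposition~\ref{p:construction:effqcb}(3) the function-space representation $[\delta\to\gamma]$ is effectively admissible for $\WY^\WX$; hence $\YY^\XX\in\EffQCB$. Since moreover $\ZZ\times\XX\in\EffQCB$ for every effective qcb-space $\ZZ$, the $\RepEff$-adjunction $\mathrm{Hom}(\ZZ\times\XX,\YY)\cong\mathrm{Hom}(\ZZ,\YY^\XX)$ restricts along the full inclusion, so $\EffQCB$ is cartesian closed. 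Dually, the empty represented space is initial and an effective qcb-space, and the $\RepEff$-coproduct $(\{1\}\times X\cup\{2\}\times Y,\delta\boxplus\gamma)$ is an effective qcb-space by Proposition~\ref{p:construction:effqcb}(2); by fullness these are the initial object and the binary coproducts of $\EffQCB$.

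It remains to produce coequalisers, which (unlike products, exponentials, coproducts and equalisers) I cannot simply inherit from $\RepEff$, since $\RepEff$ is not known to be cocomplete --- this is the main obstacle. Given computable $f,g\colon\XX\to\YY$ with $\YY=(Y,\gamma)$ an effective qcb-space, I would form the transitive hull $\approx$ on $Y$ as in Example~\ref{ex:construction:qcb}(13), let $q\colon Y\twoheadrightarrow Y/{\approx}$ be the canonical surjection, and note that $q$ is a topological quotient map onto the qcb-space $\WW$ (as $\gamma$, being admissible with $\WY$ sequential, is a quotient representation, by Proposition~\ref{p:adm:quotient}); composing with the Kolmogorov quotient map, the $\QCBZ$-coequaliser of $f,g$ is $\KQ(\WW)$ together with $[\,\cdot\,]_\equiv\circ q$. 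I would then equip $\KQ(\WW)$ with the representation $\nu^\IS$, where $\nu:=[\,\cdot\,]_\equiv\circ q\circ\gamma$ is the evident quotient representation of the $T_0$-space $\KQ(\WW)$; by Proposition~\ref{p:delta^IS:properties}(3) $\nu^\IS$ is effectively admissible, so $\mathbf{C}:=(\KQ(\WW),\nu^\IS)$ is an effective qcb-space whose associated qcb-space is the $\QCBZ$-coequaliser. The coequaliser morphism $\YY\to\mathbf{C}$ is $[\,\cdot\,]_\equiv\circ q$, computable because $q$ is realised by the identity and $\nu\leqcp\nu^\IS$ (curry the computable evaluation map underlying $[\nu\to\varrho_\IS]$). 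For the universal property, a computable $h\colon\YY\to\ZZ$ with $hf=hg$ is constant on $\approx$-classes and, $\ZZ$ being $T_0$, factors as a continuous $\bar h$ through $\KQ(\WW)$; a realizer of $h$ also realizes $\bar h$ with respect to $\nu$ and the representation $\zeta$ of $\ZZ$, so Proposition~\ref{p:delta^IS:properties}(1) makes $\bar h$ a $(\nu^\IS,\zeta^\IS)$-computable map, and $\zeta^\IS\equivcp\zeta$ by effective admissibility of $\zeta$; uniqueness of $\bar h$ is clear from surjectivity of $q$. Initial object, binary coproducts and coequalisers then give all finite colimits. The delicate point throughout the colimit part is precisely that the mediating maps must be shown \emph{computable}, not merely (sequentially) continuous: this is exactly what Proposition~\ref{p:delta^IS:properties} buys us, and one must stay with total functions, since the corresponding preservation statement fails for partial maps.
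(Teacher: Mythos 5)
Your proposal is correct and follows essentially the same route as the paper, whose proof simply cites Propositions~\ref{p:effadm:constructionI} and~\ref{p:effadm:constructionII}(2): limits, exponentials and coproducts are inherited from $\RepEff$ via the full inclusion because the constructions of Proposition~\ref{p:effadm:constructionI} (equivalently Proposition~\ref{p:construction:effqcb}) stay inside $\EffQCB$, while coequalisers are built from the quotient construction $(q\circ\gamma)^\IS$ of Proposition~\ref{p:effadm:constructionII}(2), which is exactly your $\nu^\IS$. Your write-up just makes explicit the verification of the universal properties (in particular the computability of the mediating map via Proposition~\ref{p:delta^IS:properties}) that the paper leaves implicit.
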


\subsection{Basic Computable Functions on Effective Qcb-Spaces}
\label{sub:ComputableOperations}

We present some basic computable functions on effective qcb-spaces.
More can be found in A.~Pauly's paper \cite{Pau16}.

\pagebreak[3]
\begin{proposition}[\cite{Pau16}]  
\label{p:effqcb:compfunctions}
 Let $\XX,\YY,\ZZ$ be effective qcb-spaces.
 The following functions are computable:
\begin{enumerate}
 \item
  $\mathit{pr}_1\colon \XX \times \YY \to \XX$,\; $(x,y) \mapsto x$, \;
  $\mathit{pr}_2\colon \XX \times \YY \to \YY$,\; $(x,y) \mapsto y$.
 \item
  $\mathit{eval}\colon \YY^\XX \times \XX \to \YY$, $(f,x) \mapsto f(x)$.
 \item
  $\mathit{curry}\colon \YY^{\ZZ \times \XX} \to (\YY^\XX)^\ZZ$,\; $(\mathit{curry}(f)(z))(x):=f(z,x)$.
 \item
  $\mathit{uncurry}\colon (\YY^\XX)^\ZZ \to \YY^{\ZZ \times \XX}$,\; $\mathit{uncurry}(h)(z,x):=h(z)(x)$.
 \item
  $\mathit{join}\colon \XX^\ZZ \times \YY^\ZZ \to (\XX \times \YY)^\ZZ$,\; $\mathit{join}(f,g)(z):=(f(z),g(z))$.    
 \item 
  ${}^\complement \colon \calO(\XX) \to \calAm(\XX)$,\; $U \mapsto \XX \setminus U$.
 \item 
  ${}^\complement \colon \calAm(\XX) \to \calO(\XX)$,\; $A \mapsto \XX \setminus A$.    
 \item
  $\cap\colon \calO(\XX) \times \calO(\XX) \to \calO(\XX)$,\; $(U,V) \mapsto U \cap V$.
 \item
  $\cup\colon \calO(\XX) \times \calO(\XX) \to \calO(\XX)$,\; $(U,V) \mapsto U \cup V$.
 \item
  $\bigcup\colon \calO(\XX)^\IN \to \calO(\XX)$,\; $(U_n)_n \mapsto \bigcup_{n \in \IN} U_n$.
 \item
  $\cup\colon \Kup(\XX) \times \Kup(\XX) \to \Kup(\XX)$,\; $(K,L) \mapsto K \cup L$.
 \item
  $\cup\colon \calAp(\XX) \times \calAp(\XX) \to \calAp(\XX)$,\; $(A,B) \mapsto A \cup B$.  
 \item
  $\cap\colon \Kup(\XX) \times \calAm(\XX) \to \Kup(\XX)$,\; $(K,A) \mapsto K \cap A$.  
 \item 
  $\times\colon \calO(\XX) \times \calO(\YY) \to \calO(\XX \times \YY)$,\; $(U,V) \mapsto U \times V$.
 \item 
  $\times\colon \calAm(\XX) \times \calAm(\YY) \to \calAm(\XX \times \YY)$,\; $(A,B) \mapsto A \times B$. 
 \item
  $\iota\colon \XX \to \Kup(\XX)$, \;$x \mapsto {\uparrow}\{x\}$.
 \item
  $\iota\colon \XX \to \calAp(\XX)$, \;$x \mapsto \Cls\{x\}$.  
 \item
  $\mathit{preimage}\colon \YY^\XX \times \calO(\YY) \to \calO(\XX)$, \; $(f,V) \mapsto f^{-1}[V]$.
 \item
  $\mathit{image}\colon \YY^\XX \times \Kup(\XX) \to \Kup(\YY)$, \; $(f,K) \mapsto {\uparrow}f[K]$.    
 \item
  $\forall\colon \calO(\XX \times \YY) \times \Kup(\XX) \to \calO(\YY)$,\;
  $(W,K) \mapsto \{ y \in \YY \,|\, \forall x \in K. (x,y) \in W\}$.
 \item
  $\exists\colon \calO(\XX \times \YY) \times \calAp(\XX) \to \calO(\YY)$,\;
  $(W,A) \mapsto \{ y \in \YY \,|\, \exists x \in A. (x,y) \in W\}$. 
\end{enumerate}
\end{proposition}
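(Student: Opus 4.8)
The plan is to prove the proposition by exhibiting, for each listed function, an explicit computable realizer with respect to the effectively admissible representations involved. The central organising observation is that all the power-space representations $\delta^\calO,\delta^\calAm,\delta^\calAp,\delta^\Kup,\delta^\KViet$ from Proposition~\ref{p:effadm:constructionII} are built from the function-space representation $[\,\cdot\to\varrho_\IS\,]$ by (possibly iterated) exponentiation, and that $[\delta\to\gamma]$ enjoys the computable utm- and smn-theorems through $\etaFww$ (Proposition~\ref{p:etaFww} and Subsection~\ref{sub:RepNomega:ClosureProps}). So everything reduces to composing a small stock of primitive computable maps.

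First I would collect those primitives. The projections~(1) are realized by the computable projections $\pi_1,\pi_2$ of the pairing on $\Nomega$, since $\delta\boxtimes\gamma$ is defined from $\langle\cdot,\cdot\rangle$. Items~(2)--(4) are exactly the content of the utm- and computable smn-theorems transported along the definition of $[\delta\to\gamma]$, and (5)~$\mathit{join}$ is realized by the program that, given names $p,q$ of $f,g$, curries $r\mapsto\langle\etaFww(p)(r),\etaFww(q)(r)\rangle$. I would also record that composition $\circ\colon\ZZ^\YY\times\YY^\XX\to\ZZ^\XX$ is computable (once more by smn, using that $(p,q,r)\mapsto\etaFww(p)(\etaFww(q)(r))$ is computable), and that the Boolean operations on the Sierpi\'nski space are computable: $\wedge,\vee\colon\IS\times\IS\to\IS$ and the countable disjunction $\bigvee\colon\IS^\IN\to\IS$ admit trivial realizers on $\varrho_\IS$ (search for a nonzero entry). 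Finally, via the definition $\delta^\calO(p)=U\Leftrightarrow\cf(U)=[\delta\to\varrho_\IS](p)$, which exhibits $\calO(\XX)$ as a copy of the $\QCB$-exponential $\IS^\XX$, the membership test $\mathit{in}\colon\XX\times\calO(\XX)\to\IS$, $(x,U)\mapsto\cf(U)(x)$ is just $\mathit{eval}$; and since $\delta^\Kup$ encodes a compact saturated $K$ by $\eta^\Box_\WX(K)\in\calO(\calO(\WX))$ and $\delta^\calAp$ encodes a closed $A$ by $\eta^\Diamond_\WX(A)\in\calO(\calO(\WX))$, the inclusion test $(K,U)\mapsto[K\subseteq U]$ on $\Kup(\XX)\times\calO(\XX)$ and the intersection test $(A,U)\mapsto[A\cap U\neq\emptyset]$ on $\calAp(\XX)\times\calO(\XX)$ are again instances of $\mathit{eval}$.

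With these in hand, each of (6)--(22) unwinds to a composition. Complementation (6),(7) is realized by the identity on names, since $\delta^\calAm(p)=A\Leftrightarrow\cf(X\setminus A)=[\delta\to\varrho_\IS](p)$ has the same name structure as $\delta^\calO$; (8)--(10) are the pointwise Sierpi\'nski operations $\wedge,\vee,\bigvee$ transported through $\calO(\WX)\cong\IS^\WX$; (14) uses $\cf(U\times V)(x,y)=\cf(U)(x)\wedge\cf(V)(y)$; (15),(16) reduce to (14),(6),(7),(9) via $(X\times Y)\setminus(A\times B)=\bigl((X\setminus A)\times Y\bigr)\cup\bigl(X\times(Y\setminus B)\bigr)$; (11),(12) use $[K\cup L\subseteq U]=[K\subseteq U]\wedge[L\subseteq U]$ and $[(A\cup B)\cap U\neq\emptyset]=[A\cap U\neq\emptyset]\vee[B\cap U\neq\emptyset]$ together with the inclusion and intersection tests; (13) uses $[K\cap A\subseteq U]=[K\subseteq U\cup(X\setminus A)]$ together with (7) and (9); (17),(18) are realized by translating the $\delta$-name of $x$ into the computably equivalent $\delta^\IS$-name, which is precisely a name of $\eta_\WX(x)=\eta^\Box_\WX({\uparrow}\{x\})=\eta^\Diamond_\WX(\Cls\{x\})$; (19) uses $\cf(f^{-1}[V])=\cf(V)\circ f$, i.e.\ composition; (20) uses $[\,{\uparrow}f[K]\subseteq V]=[K\subseteq f^{-1}[V]]$, i.e.\ (19) followed by the inclusion test and a currying over $V$; and (21),(22) use that $W\mapsto\bigl(y\mapsto W_y\bigr)$, with $W_y:=\{x\mid(x,y)\in W\}$, is a computable map $\calO(\XX\times\YY)\to\calO(\XX)^\YY$ (a curried preimage), followed by the inclusion test against $K$ (respectively the intersection test against $A$) and a final currying to land in $\calO(\YY)$.

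I expect the main difficulty to be bookkeeping rather than a single hard step: for every power-space item one must check that the evident realizer is genuinely computable, total on the appropriate subset of $\Nomega$, and returns a name again of the correct nested-exponential shape, which amounts to chasing the definitions of $\delta^\calO,\delta^\calAm,\delta^\calAp,\delta^\Kup,\delta^\KViet$ through Proposition~\ref{p:effadm:constructionII}. The only items with genuine mathematical content are (20)--(22) (and, to a lesser extent, (13)): there one must confirm that $(W,y)\mapsto W_y$ really is computable into $\calO(\XX)$ --- the effective reflection of the cartesian closedness used to construct $\calO$ --- and that the resulting sets $\{y\mid K\subseteq W_y\}$ and $\{y\mid A\cap W_y\neq\emptyset\}$ are open in $\WY$, so that the stated targets $\calO(\YY)$ are correct; the latter is the Scott-openness of $\Box K$ and $\Diamond A$ in $\calO(\WX)$.
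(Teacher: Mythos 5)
Your proposal is correct, but it takes a genuinely different route from the paper, which proves this proposition purely by citation to Pauly's Propositions 1, 3, 6, 11, 21, 40 and 42 in \cite{Pau16} and gives no argument of its own. What you do instead is reconstruct the result from first principles: a small stock of primitives (the utm/smn machinery for $[\delta\to\gamma]$, the Sierpi\'nski operations $\wedge,\vee,\bigvee$, and the three evaluation-type tests $\mathit{in}$, $[K\subseteq U]$, $[A\cap U\neq\emptyset]$ read off from the definitions of $\delta^\calO$, $\delta^\Kup$, $\delta^\calAp$), followed by the Boolean identities $\eta^\Box_\WX(K\cup L)(U)=\eta^\Box_\WX(K)(U)\wedge\eta^\Box_\WX(L)(U)$, $[K\cap A\subseteq U]=[K\subseteq U\cup A^\complement]$, $[{\uparrow}f[K]\subseteq V]=[K\subseteq f^{-1}[V]]$, and so on. This is exactly in the spirit of the machinery the survey uses for its \emph{other} powerspace results (the attachment explicitly lists the computability of $\wedge,\vee,\bigvee$ on $\IS$ as the basis of those proofs, and uses your identity for item (11) to prove Proposition~\ref{p:upvietoris:qcb}(4)), so your argument is a legitimate and self-contained alternative; what it buys is transparency about where effective admissibility is actually invoked --- only in the two $\iota$ maps, via $\delta\leqcp\delta^\IS$ --- and an explicit identification of the one nontrivial point, namely that $\Box K$ and $\Diamond A$ are open in $\calO(\WX)$ so that $\forall(W,K)$ and $\exists(W,A)$ land in $\calO(\YY)$. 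Two small bookkeeping remarks: your item numbering drifts by one from the $\iota$ maps onward (the statement has 21 items, not 22), and in item (13) one should note that $\Kup(\XX)$ is only formed when the associated space is $T_1$ (Proposition~\ref{p:effadm:constructionII}), so that $K\cap A$ is automatically saturated and your formula for $\eta^\Box_\WX(K\cap A)$ names the right set. Neither affects the correctness of the argument.
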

\noindent
We emphasise that $\forall(W,K)$ and $\exists(W,A)$ are indeed open subsets of the qcb$_0$-space associated to $\YY$.
Related results can be found in \cite{BP03,Esc04,Esc08,Wei00}.


\subsection{Effective Hausdorff Spaces}
\label{sub:EffectiveHausdorff}

Topological spaces $\WX$ in which every converging sequence has a unique limit are called \emph{sequentially Hausdorff}.
This condition is equivalent to sequential continuity of the inequality test ${\neq}\colon \WX \times \WX \to \IS$ on $\WX$ which is defined by ${\neq}(x,y)=\top :\Longleftrightarrow x\neq y$.
Every Hausdorff space is sequentially Hausdorff, sequentially Hausdorff spaces are $T_1$.
This gives rise to the notion of 
an \emph{effective Hausdorff space}\index{effective Hausdorff space}:
this is an effective qcb-space such that its inequality test ${\neq}$ is computable.
A similar notion has been investigated by A.~Pauly in \cite{Pau16}: 
he calls a represented space $\XX$ \emph{computably $T_2$}, if the inequality test is computable; its representation, however, is not required to be effectively admissible. 
Any computable metric space forms an effective Hausdorff space.  

\pagebreak[3]
\begin{proposition}[\cite{Pau16}] 
\label{p:computable:effseqHaus}
 The following functions are computable for an effective Hausdorff space $\XX$:
\begin{enumerate}
 \item
  $\cap\colon \Kup(\XX) \times \Kup(\XX) \to \Kup(\XX)$, \; $(K,L) \mapsto K \cap L$.
 \item
  $\mathit{singleton}\colon \XX \to \calAm(\XX)$, \;$x \mapsto \{x\}$.
 \item
  $\mathit{convert}\colon \Kup(\XX) \to \calAm(\XX)$, \;$K \mapsto K$.
 \item
  $\Graph\colon \XX^\ZZ \to \calAm(\ZZ\times\XX)$,\;
  $f \mapsto \{ (z,f(z)) \,|\, z \in \ZZ\}$ is computable for any effective qcb-space $\ZZ$. 
\end{enumerate}
\end{proposition}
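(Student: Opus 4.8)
The plan is to deduce all four items from the computability of the inequality test ${\neq}\colon \XX\times\XX\to\IS$ by currying, together with the operations already furnished by Proposition~\ref{p:effqcb:compfunctions}, notably the computable complementation map $\calO(\XX)\to\calAm(\XX)$ (Proposition~\ref{p:effqcb:compfunctions}(6)), evaluation (item~(2)), and the compact--closed intersection $\cap\colon \Kup(\XX)\times\calAm(\XX)\to\Kup(\XX)$ (item~(13)). Write $\XX=(X,\delta)$ and let $\WX$ be the associated qcb$_0$-space. By the definitions of $\delta^{\calO}$, $\delta^{\calAm}$ and $\delta^{\Kup}$ in Propositions~\ref{p:effadm:constructionII} and~\ref{p:construction:effqcb}, a name of $A$ in $\calAm(\XX)$ is exactly a name of the continuous characteristic function $\cf(X\setminus A)$, i.e.\ a name of $X\setminus A$ in $\calO(\XX)\cong\IS^\XX$; and a name of $K$ in $\Kup(\XX)$ is a name of $\eta^\Box_\WX(K)=\{U\text{ open in }\WX\mid K\subseteq U\}$ viewed in $\calO(\calO(\XX))\cong\IS^{\calO(\XX)}$. (The representation $\delta^{\Kup}$ requires $\WX$ to be $T_1$, which holds here.)

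First I would check well-definedness. Computability of ${\neq}$ makes $\WX$ sequentially Hausdorff, hence $T_1$, and every compact subset of $\WX$ is then closed: a sequence in a compact --- equivalently, sequentially compact --- set $K$ has a subsequence converging inside $K$, and by uniqueness of sequential limits this limit agrees with any limit of the whole sequence, so $K$ is sequentially closed, hence closed as $\WX$ is sequential. Therefore $K\cap L$ is a closed subset of the compact set $K$, hence compact, so $(1)$ is meaningful; singletons are closed, so $(2)$ is; $(3)$ is the identity inclusion of underlying sets; and for continuous $f\colon\WZ\to\WX$ the complement of $\Graph(f)$ in the $\QCB$-product $\WZ\times\WX$ is the preimage of the open point $\{\top\}$ under the sequentially continuous map $(z,x)\mapsto{\neq}(f(z),x)$, hence sequentially open, hence open (the product being sequential), so $\Graph(f)$ is closed and $(4)$ is meaningful.

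For $(2)$, $\cf(X\setminus\{x\})(y)={\neq}(x,y)$, so $\mathit{singleton}$ is the composite of the map $x\mapsto(y\mapsto{\neq}(x,y))$ --- computable from $\XX$ to $\IS^\XX\cong\calO(\XX)$ by currying the inequality test --- with complementation $\calO(\XX)\to\calAm(\XX)$. For $(3)$, feeding $U$ to $\eta^\Box_\WX(K)$ via $\mathit{eval}$ shows $(K,U)\mapsto[K\subseteq U]$ is computable $\Kup(\XX)\times\calO(\XX)\to\IS$; precomposing the second argument with $x\mapsto X\setminus\{x\}$ (computable $\XX\to\calO(\XX)$ as in $(2)$) and using $x\notin K\iff K\subseteq X\setminus\{x\}$ yields a computable map $(K,x)\mapsto[x\notin K]$, and currying in $x$ gives $K\mapsto\cf(X\setminus K)\in\IS^\XX\cong\calO(\XX)$; complementation turns this into the $\calAm(\XX)$-name of $K$, i.e.\ $\mathit{convert}$. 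For $(1)$, compose $(K,L)\mapsto(K,\mathit{convert}(L))$ with $\cap\colon\Kup(\XX)\times\calAm(\XX)\to\Kup(\XX)$ of Proposition~\ref{p:effqcb:compfunctions}(13); since $L$ is closed, $\mathit{convert}(L)=L$ and the result is $K\cap L$. For $(4)$, evaluation shows $(f,z,x)\mapsto{\neq}(f(z),x)$ is computable $\XX^\ZZ\times\ZZ\times\XX\to\IS$; currying over $(z,x)$ gives $f\mapsto((z,x)\mapsto{\neq}(f(z),x))\in\IS^{\ZZ\times\XX}\cong\calO(\ZZ\times\XX)$, which is $\cf$ of the open complement of $\Graph(f)$, and complementation then produces the $\calAm(\ZZ\times\XX)$-name of $\Graph(f)$.

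The step I expect to be the crux is $(3)$: it is the only item whose proof has to convert set-data given in the upper-Vietoris (compact) representation into the upper-Fell (closed) representation, and the bridge --- semideciding $x\notin K$ via the inclusion $K\subseteq X\setminus\{x\}$ --- relies essentially on the $T_1$-separation bought by the computable inequality test, together with the identification of the representations of $\Kup(\XX)$, $\calO(\XX)$ and $\calAm(\XX)$ with $\IS$-valued function spaces. Once $(3)$ is in hand, $(1)$ is immediate and $(2)$ and $(4)$ are routine currying arguments.
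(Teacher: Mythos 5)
Your argument is correct. Note that the paper itself does not prove this proposition; it simply cites \cite[Proposition 14]{Pau16}, so there is no in-text proof to compare against, but your derivation is exactly the kind of synthetic argument Pauly gives: everything is reduced to currying the computable inequality test ${\neq}\colon\XX\times\XX\to\IS$ and to the primitives already listed in Proposition~\ref{p:effqcb:compfunctions}. The well-definedness discussion is the part most easily glossed over and you handle it properly: computability of ${\neq}$ gives sequential Hausdorffness, hence $T_1$ (so saturation is trivial and $\delta^{\Kup}$ applies), and combined with the coincidence of compactness and sequential compactness for subsets of qcb-spaces (Subsection~\ref{subsub:compact}) it yields that compact sets are sequentially closed, hence closed, which is what makes (1)--(3) meaningful. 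Your identification of $\calAm(\XX)$-names with $[\delta\to\varrho_\IS]$-names of $\cf(X\setminus A)$ and of $\Kup(\XX)$-names with names of $\eta^\Box_\WX(K)\in\IS^{\calO(\XX)}$ is exactly how $\delta^{\calAm}$ and $\delta^{\Kup}$ are defined in Proposition~\ref{p:effadm:constructionII}, so the evaluation and currying steps are licensed by Proposition~\ref{p:effqcb:compfunctions}(2),(3). The reduction of $\mathit{convert}$ to semideciding $x\notin K$ via $K\subseteq X\setminus\{x\}$, and of (1) to $\mathit{convert}$ followed by $\cap\colon\Kup(\XX)\times\calAm(\XX)\to\Kup(\XX)$, is the intended route; I see no gap.
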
  

\noindent
We emphasise that the above functions map indeed into the indicated sets, whenever $\XX$ is an effective Hausdorff space. 
Conversely, if $\XX$ is an effective qcb-space such that one of the functions listed in Proposition~\ref{p:computable:effseqHaus} is well-defined and computable, then the inequality test on $\XX$ is computable, hence $\XX$ is an effective Hausdorff space. This was shown by A.~Pauly in \cite{Pau16}.

The reader should be warned that effective Hausdorff spaces exist which are not Hausdorff in the topological sense, for example the one-point compactification of the Baire space.
Nevertheless, the arguably more precise notion ``effective sequential Hausdorff spaces'' seems too clumsy.
For second-countable spaces, however, sequential Hausdorffness coincide with Hausdorffness.
K.~Weihrauch has investigated and compared several notions of computable Hausdorff separation for second-countable computable topological spaces in \cite{Wei10}.


\subsection{Quasi-Normal Qcb-Spaces}
\label{sub:QN}

We present a subcategory of admissibly represented Hausdorff spaces that is particularly appropriate for investigating computability in Functional Analysis.

Normality\footnote{A $T_1$-space is called \emph{normal}, if any two disjoint closed sets $A,B$ can be separated by two disjoint open sets $U,V$ in the sense that $A \subseteq U$ and $B \subseteq V$ holds.}
has the disadvantage of not being preserved by exponentiation in $\QCB$.
For example, the $\QCB$-exponential $\IR^{(\IR^\IR)}$ is not normal, although $\IR^\IR$ and $\IR$ are even metrisable.
So we introduce a substitute for normality in the realm of qcb-spaces called quasi-normality.
A \emph{quasi-normal}\index{quasi-normal space} space is defined to be the sequential coreflection (see Subsection~\ref{sub:Sequentialisation}) of some normal Hausdorff space.
We write $\QN$ for the full subcategory of $\QCB$ consisting of all quasi-normal qcb-spaces.
In contrast to normal qcb-spaces, $\QN$ has excellent closure properties.

\pagebreak[3]
\begin{theorem}[Closure properties of quasi-normal spaces]
\label{th:QN:closure:properties}\quad\\ \noindent
 The category $\QN$ of quasi-normal qcb-spaces is an exponential ideal
 of $\QCB$ and thus cartesian closed\index{cartesian closed}.
 Moreover, $\QN$ has all countable limits and all countable colimits.
 Countable limits and function spaces are inherited from $\QCB$.
\end{theorem}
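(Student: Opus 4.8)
The plan is to prove the four assertions in order: (i) $\QN$ is an exponential ideal of $\QCB$; (ii) $\QN$ is therefore cartesian closed, with products and function spaces computed as in $\QCB$; (iii) $\QN$ has all countable limits, inherited from $\QCB$; (iv) $\QN$ has all countable colimits. I write a quasi-normal space as $\seq(N)$ with $N$ normal Hausdorff, and use throughout that $\seq$ leaves the convergent sequences unchanged, that $\QCB$ is cartesian closed with all countable limits and colimits (Theorem~\ref{th:qcb:ccc}), the function-space and embedding constructions of Propositions~\ref{p:funcrep:adm} and~\ref{p:seqembedding:adm}, and that every qcb$_0$-space has an admissible representation, which is a quotient map lifting convergent sequences (Proposition~\ref{p:adm:quotient}, Theorem~\ref{th:qcb:admissible}); note also that $\QN\subseteq\QCBZ$, since $\seq$ of a $T_1$ space is $T_1$. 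The pervasive obstacle is that the naive witnesses fail: a subspace of a normal space, a product of two normal spaces, and the compact--open function space into a normal space can all fail to be normal. So the argument is organised around a structural description of $\QN$: one shows that $\QN$ is the smallest full subcategory of $\QCB$ containing $\IR$ and closed under countable products, sequential subspaces and $\QCB$-exponentials, and -- crucially -- that every space so obtained does carry a coarser normal Hausdorff topology with the same convergent sequences, a witness assembled inductively from the metric on $\IR$, from uniform-convergence topologies on function spaces over compacta, and from disjoint-union and countable-product constructions on such. Producing and propagating this normal Hausdorff witness is the main difficulty; once the structural description is in hand, (i)--(iv) are essentially formal.

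For (i): let $\WY\in\QN$ and $\WX\in\QCB$; replacing $\WX$ by its Kolmogorov quotient, which changes neither $\setC(\WX,\WY)$ nor $\WY^\WX$ since $\WY$ is $T_0$, we may assume $\WX\in\QCBZ$. Fix an admissible representation $\delta$ of $\WX$; since $\delta\colon\dom(\delta)\twoheadrightarrow\WX$ is a quotient map lifting convergent sequences and $\dom(\delta)\subseteq\Nomega$ is countably based, precomposition $f\mapsto f\circ\delta$ is a sequential embedding of $\WY^\WX$ into $\WY^{\dom(\delta)}$ (continuity is functoriality of $(-)^{(-)}$; the converse half of the embedding condition is exactly convergent-sequence lifting along $\delta$). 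By closure of $\QN$ under sequential subspaces it then suffices to treat $\WY^{W}$ for $W$ countably based. There the compact--open topology on $\setC(W,\WY)$ is already sequential -- $W$ is hereditarily Lindel\"of, so this topology has a countable subbase of sets $\langle K,V\rangle$ -- so $\WY^W$ is $C_{\mathrm{co}}(W,\WY)$ as a space, and writing a countable cofinal family of compacta covering the relevant source (a $k_\omega$-presentation, obtained from a quotient map $\Nomega\twoheadrightarrow W$) realises it as a sequential subspace of a countable limit of spaces $C_{\mathrm{co}}(K_n,\WY)$. Each of these is a function space from a compactum into $\WY=\seq(N)$, hence carries the coarser uniform-convergence-type normal Hausdorff topology built from $N$, and this property survives countable limits, sequential subspaces and the passage to $\seq$; this supplies the witness placing $\WY^\WX$ in $\QN$.

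For (ii): the one-point space is normal Hausdorff, so $\QN$ has a terminal object, and binary products of $\QN$-spaces lie in $\QN$ by the same countable-limit argument as in (i) (the naive topological product need not be normal, but the sequentialised product embeds sequentially into a countable product of the normal Hausdorff witnesses). A full subcategory of the cartesian closed $\QCB$ that contains the terminal object, is closed under finite products, and is an exponential ideal is itself cartesian closed with inherited products and exponentials; this gives (ii). For (iii): $\QCB$ has all countable limits and each is built from countable products and equalisers, and an equaliser of morphisms between qcb-spaces is a sequential subspace of the source; so (iii) reduces to closure of $\QN$ under countable products and under sequential subspaces, both of which are immediate from the structural description, the stated generating class being closed under each by construction.

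For (iv): countable coproducts are inherited from $\QCB$ -- a countable disjoint union of normal Hausdorff spaces is normal Hausdorff and $\seq$ commutes with coproducts, so the $\QCB$-coproduct of $\QN$-spaces lies in $\QN$ and is hence also their coproduct there. For coequalisers one takes the $\QCB$-coequaliser $q\colon\WY\twoheadrightarrow Q$ of $f,g\colon\WX\to\WY$ -- a quotient of a $\QN$-space, generally not itself in $\QN$ -- and reflects $Q$ onto $\QN$, i.e.\ passes to the coarsest quasi-normal quotient of $\WY$ through which $q$ factors; the existence of this reflection comes again from the structural results (intersect the congruences on $\WY$ whose quotient is quasi-normal and use closure under the relevant sequential subspaces and products). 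Countable coproducts together with coequalisers yield all countable colimits. The single place I expect genuine resistance is the existence and stability of the normal Hausdorff witness underlying the structural description in the first paragraph; everything downstream of it is bookkeeping of the kind sketched above.
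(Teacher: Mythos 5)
Your outline correctly locates where the substance of the theorem lies, but it leaves exactly that substance unproved: the ``structural description'' of your first paragraph --- that every space obtained from $\IR$ by countable products, sequential subspaces and $\QCB$-exponentials carries a coarser normal Hausdorff topology with the same convergent sequences --- is asserted rather than established, and it \emph{is} the theorem; everything downstream is, as you say yourself, bookkeeping. Worse, several of the mechanisms you propose for producing and propagating the normal witness are broken. (a) A countable (even finite) product of normal Hausdorff spaces need not be normal (the Sorgenfrey plane), so taking the product of the witnesses does not show that the property ``survives countable limits''. (b) Your claim that for countably based $W$ the compact-open topology on $\setC(W,\WY)$ is already sequential is contradicted by Subsection~\ref{sub:SequentialSpaces} of this paper, where $\setC(\Nomega,\IN)$ is given as a non-sequential compact-open function space with $\Nomega$ countably based and $\IN$ metrisable; hereditary Lindel\"ofness of $W$ gives you neither countably many compacta $K$ nor countably many opens $V$ of $\WY$. (c) A countably based space such as $\dom(\delta)\subseteq\Nomega$ is in general not hemicompact ($\Nomega$ itself is not), so there is no countable cofinal family of compacta and the promised ``countable limit of spaces $C(K_n,\WY)$'' does not exist. (d) Even over a compactum $K$, the uniform-convergence topology on $\setC(K,N)$ for $N$ merely normal Hausdorff (hence Tychonoff, so a compatible uniformity exists) need not be normal: it contains $N^2$ as a closed subspace. (e) Subspaces of normal spaces need not be normal, so closure of $\QN$ under sequential subspaces is itself a nontrivial claim, not something ``immediate from the structural description''. (f) The reflection used for coequalisers (``intersect the congruences whose quotient is quasi-normal'') is not justified: neither closure of that class under intersection nor quasi-normality of the resulting quotient is argued.

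For comparison, the paper does not prove the theorem here at all; it defers to \cite[Theorem 6]{Sch09b}. The engine there is the characterisation already stated in Subsection~\ref{sub:QN} --- a qcb-space $\WX$ is quasi-normal if, and only if, $\eta_{\WX,\IR}\colon\WX\to\IR^{(\IR^{\WX})}$ is a sequential embedding --- together with a Tietze--Urysohn extension theorem, which is precisely what makes sequential subspaces (and hence equalisers and the exponential-ideal property) tractable, and a separate argument that $\IR^{\WZ}$ is quasi-normal for an arbitrary qcb-space $\WZ$. Your sketch is in the same spirit, but the load-bearing lemma is missing and the materials you propose to build it from fail at the points listed above; as it stands the proposal is a plan, not a proof.
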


\noindent
Hence the $\QCB$-exponential $\IR^{(\IR^\IR)}$ lives inside $\QN$.
Relevant examples for non-metrisable $\QN$-spaces are the space of test functions $\mathcal{D}$ and its dual $\mathcal{D}'$ which is known as the space of distributions (= generalized functions).
Computability on distributions has been investigated by K.~Weihrauch and N.~Zhong in \cite{ZW03}.

The famous Tietze-Urysohn Extension Theorem states that any real-valued function defined on a closed subspace of a normal topological space can be extended to a continuous function defined on the whole space.
Quasi-normal qcb-spaces admit continuous extendability of continuous functions defined on subspaces that are functionally closed. 
\emph{Functionally closed} subsets are exactly the preimages of $0$ under continuous real-valued functions. 
We state a uniform version.

\pagebreak[3]
\begin{theorem}[Uniform Extension Theorem for $\QN$-spaces]
\label{th:QN:uniform:Extension} \quad\\ \noindent
 Let $\WX$ be a functionally closed subspace of a quasi-normal qcb-space $\WY$.
 Then there is a continuous functional $E\colon \IR^\WX \to \IR^\WY$
 satisfying $E(f)(x)=f(x)$ for all continuous functions
 $f\colon \WX \to \IR$ and all $x \in \WX$.
\end{theorem}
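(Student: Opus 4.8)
The plan is to reduce the uniform statement to the non-uniform Tietze--Urysohn extension theorem --- \emph{every continuous $g\colon \WX \to \IR$ on a functionally closed subspace $\WX$ of a quasi-normal qcb-space $\WY$ extends to a continuous function on $\WY$} --- by a currying argument that uses cartesian closedness. Fix a continuous $h\colon \WY \to \IR$ with $X = h^{-1}(0)$ witnessing functional closedness, and put $\WY' := \IR^\WX \times \WY$. Since $\IR$ is separable metrisable it is a quasi-normal qcb-space, so by Theorem~\ref{th:QN:closure:properties} --- $\QN$ is an exponential ideal of $\QCB$ and inherits countable limits from $\QCB$ --- both $\IR^\WX$ and $\WY'$ again lie in $\QN$. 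The subset $\IR^\WX \times \WX$ is functionally closed in $\WY'$, being the zero-set of the continuous map $h \circ \mathit{pr}_2\colon \WY' \to \IR$; moreover, since the inclusion $\WX \hookrightarrow \WY$ is a sequential embedding, so is $\mathrm{id}_{\IR^\WX} \times (\WX \hookrightarrow \WY)$, so this subsequential subspace of $\WY'$ is exactly the $\QCB$-product $\IR^\WX \times \WX$.

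Granting the non-uniform theorem, I would apply it to the evaluation map $\mathit{eval}\colon \IR^\WX \times \WX \to \IR$, $(f,x) \mapsto f(x)$, which is continuous because $\QCB$ is cartesian closed. This yields a continuous extension $\widetilde{\mathit{eval}}\colon \WY' \to \IR$ with $\widetilde{\mathit{eval}}(f,x) = f(x)$ for $f \in \IR^\WX$ and $x \in \WX$. Currying $\widetilde{\mathit{eval}}$ --- transposition preserves continuity in a cartesian closed category --- gives a continuous functional $E\colon \IR^\WX \to \IR^\WY$ with $E(f)(x) = \widetilde{\mathit{eval}}(f,x) = f(x)$ for every continuous $f\colon \WX \to \IR$ and every $x \in \WX$, which is the required map. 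So the whole problem collapses to the single-function extension theorem, applied once to the auxiliary space $\WY'$.

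For the non-uniform theorem itself I would argue in the classical Tietze style. Write $\WY = \seq(\WZ)$ with $\WZ$ normal Hausdorff, reduce to bounded $g\colon \WX \to [-1,1]$ in the standard manner (compose with a homeomorphism $\IR \cong (-1,1)$, extend, and correct on the zero-set $\{\,|F| = 1\,\}$, which is disjoint from $X$ and hence functionally separated from it), and iterate: given $g_n\colon \WX \to [-r_n,r_n]$, the disjoint sets $g_n^{-1}([-r_n,-r_n/3])$ and $g_n^{-1}([r_n/3,r_n])$ are closed in $\WX$, hence --- as $X$ is closed in the sequential space $\WY$ --- closed in $\WY$; separate them by a continuous Urysohn function $\phi_n\colon \WY \to [-r_n/3,r_n/3]$, set $g_{n+1} = g_n - \phi_n|_{\WX}$, and take the uniformly convergent sum $\sum_n \phi_n$. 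The main obstacle is precisely this Urysohn step inside $\WY$: because $\WY = \seq(\WZ)$ has a topology finer than that of $\WZ$, the sets being separated need not be closed in the normal space $\WZ$, so Urysohn's lemma for $\WZ$ does not apply verbatim. Handling this --- one natural route is to choose the normal presentation $\WZ$ so that $\WX$ sits as a closed subspace and the telescoping sets stay $\WZ$-closed, exploiting that they are sub- and superlevel sets of continuous functions --- is the technical heart of the proof; once it is in place, the convergence of $\sum_n \phi_n$ and the identity $(\sum_n \phi_n)|_{\WX} = g$ follow from the usual estimates.
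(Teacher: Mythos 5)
Your reduction in the first two paragraphs is essentially the paper's own proof: the paper derives the theorem from the single-function extension theorem of \cite{Sch09b} (Theorem 13 there) together with cartesian closedness of $\QN$ (Theorem~\ref{th:QN:closure:properties}), which is precisely your move of extending $\mathit{eval}\colon \IR^\WX\times\WX\to\IR$ over $\IR^\WX\times\WY$ and currying the result. The side verifications you supply are the right ones: $\IR^\WX$ and $\IR^\WX\times\WY$ lie in $\QN$ because $\QN$ is an exponential ideal of $\QCB$ inheriting products, $\IR^\WX\times\WX$ is functionally closed as the zero-set of $h\circ\mathit{pr}_2$, and it carries the $\QCB$-product topology because the inclusion $\WX\hookrightarrow\WY$ is a sequential embedding and binary $\QCB$-products have componentwise convergence. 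Your third paragraph goes beyond the paper, which simply cites the non-uniform extension theorem rather than proving it; there your sketch is incomplete exactly where you flag it: the Urysohn separation step must be performed in $\WY=\seq(\WZ)$, whose closed sets need not be closed in the normal presentation $\WZ$, and you do not resolve this. Since that result is an external input in the paper's argument, this gap does not affect the correctness of your derivation of the stated uniform theorem from it.
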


A qcb-space $\WX$ is quasi-normal if, and only, if $\WX$ embeds sequentially into $\IR^{(\IR^\WX)}$ via the embedding $\eta_{\WX,\IR} \colon x \mapsto (h \mapsto h(x))$. 
Hence $\IR$ plays a similar role for quasi-normal spaces as the Sierpi{\'n}ski space does for sequential spaces.
Generalising the ideas of Subsection~\ref{sub:EffQCB}, we call a represented space $\XX=(X,\delta)$ 
\emph{effectively quasi-normal}\index{effectively quasi-normal space}, 
if $\tau_\delta$ is a $T_0$-topology and $\delta$ is computably equivalent to the representation $\delta^\IR$ defined by
\[
  \delta^\IR(p)=x \;:\Longleftrightarrow\; \eta_{(X,\tau_\delta),\IR}(x)=[[\delta\to\varrho_\IR]\to\varrho_\IR](p) \,.
\]
The operator $\delta \mapsto \delta^\IR$ behaves like the operator $\delta \mapsto \delta^\IS$ from Subsection~\ref{sub:qcb:admissible}.
\pagebreak[3]
\begin{proposition}
\label{p:delta^IR:properties}
 Let $\delta$ and $\gamma$ be quotient representations for $\QN$-spaces $\WX$ and $\WY$, respectively.
\begin{enumerate}
 \item
  Any $(\delta,\gamma)$-computable total function is $(\delta^\IR,\gamma^\IR)$-computable.
 \item 
  Any $(\delta,\gamma)$-continuous total function is $(\delta^\IR,\gamma^\IR)$-continuous.
 \item
  The spaces $(\WX,\delta^\IR)$ and $(\WY,\gamma^\IR)$ are effectively quasi-normal.
 \item 
  Any effective quasi-normal space is an effective qcb-space.  
 \item 
  Any computable metric space\index{computable metric space} equipped with its Cauchy representation 
  gives rise to an effective quasi-normal space.
 \end{enumerate}
\end{proposition}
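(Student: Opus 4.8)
The plan is to develop the operator $\delta\mapsto\delta^\IR$ in strict analogy with the operator $\delta\mapsto\delta^\IS$ of Subsection~\ref{sub:qcb:admissible}, using throughout that $\varrho_\IR$ is admissible for $\IR$ (Example~\ref{ex:sdrep:adm}) and effectively admissible (Proposition~\ref{p:delta^IS:properties}). The workhorse for (1) and (2) is a ``double dualisation'' construction. Given a total $(\delta,\gamma)$-computable $f\colon\WX\to\WY$ with computable realizer $g$, the precomposition map $P_f\colon h\mapsto h\circ f$ from $\IR^\WY$ to $\IR^\WX$ is $([\gamma\to\varrho_\IR],[\delta\to\varrho_\IR])$-computable: a realizer sends a $[\gamma\to\varrho_\IR]$-name $q$ of $h$ to a name of $\etaFww(q)\circ g$, which exists by the smn-Theorem of Proposition~\ref{p:etaFww} and closure of Baire-space computability under composition. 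Applying this step once more, $P_{P_f}\colon F\mapsto F\circ P_f$ is $([[\delta\to\varrho_\IR]\to\varrho_\IR],[[\gamma\to\varrho_\IR]\to\varrho_\IR])$-computable. A direct calculation gives the naturality identity $\eta_{\WY,\IR}(f(x))=P_{P_f}(\eta_{\WX,\IR}(x))$ for all $x\in\WX$, so running a $\delta^\IR$-name of $x$ through a realizer of $P_{P_f}$ yields a $\gamma^\IR$-name of $f(x)$; this proves (1). Item (2) is the verbatim variant with the continuous smn-Theorem and closure of continuity under composition.

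For (3) I would first record the auxiliary fact that $\delta\leqcp\delta^\IR$ for any $\delta$, because $\eta_{(X,\tau_\delta),\IR}$ is the currying of the argument-swapped evaluation map $\setC(\WX,\IR)\times\WX\to\IR$ and is hence $(\delta,[[\delta\to\varrho_\IR]\to\varrho_\IR])$-computable. Since $\WX$ is quasi-normal, $\eta_{\WX,\IR}$ is a sequential embedding of $\WX$ into $\IR^{(\IR^\WX)}$, and by Propositions~\ref{p:funcrep:adm} and~\ref{p:adm:quotient} the representation $[[\delta\to\varrho_\IR]\to\varrho_\IR]$ is an admissible quotient representation of $\IR^{(\IR^\WX)}$; Proposition~\ref{p:seqembedding:adm}(2) then shows $\delta^\IR$ is admissible for $\WX$, so $\tau_{\delta^\IR}=\tau_\delta$ by Proposition~\ref{p:adm:quotient}, which is Hausdorff and thus $T_0$. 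It remains to establish $\delta^\IR\equivcp(\delta^\IR)^\IR$; here $\delta^\IR\leqcp(\delta^\IR)^\IR$ is the auxiliary fact applied to $\delta^\IR$. For the converse I would prove the key lemma $[\delta\to\varrho_\IR]\equivcp[\delta^\IR\to\varrho_\IR]$: one inequality is contravariance applied to $\delta\leqcp\delta^\IR$, and the other holds because whenever $\etaFww(p)$ realizes $\eta_{\WX,\IR}(x)$ with respect to $[\delta\to\varrho_\IR]$ and $q$ is a $[\delta\to\varrho_\IR]$-name of $h$, then $\etaFww(p)(q)$ is a $\varrho_\IR$-name of $h(x)$, so $p\mapsto\etaFww(p)(q)$ is, computably in $q$, a $(\delta^\IR,\varrho_\IR)$-realizer of $h$. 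Applying the function-space constructor to the key lemma yields $[[\delta\to\varrho_\IR]\to\varrho_\IR]\equivcp[[\delta^\IR\to\varrho_\IR]\to\varrho_\IR]$, and since $\delta^\IR$ and $(\delta^\IR)^\IR$ are defined by the same equivalence from this data (the relevant embeddings $\eta_{\WX,\IR}$ coincide because $\tau_{\delta^\IR}=\tau_\delta$), we conclude $\delta^\IR\equivcp(\delta^\IR)^\IR$; the same reasoning applies to $\gamma^\IR$.

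For (4), if $\XX=(X,\delta)$ is effectively quasi-normal then $\delta\equivcp\delta^\IR$ forces $\eta_{(X,\tau_\delta),\IR}$ to be injective with $\tau_\delta$ equal to the topology it pulls back from its image, so $\WX:=(X,\tau_\delta)$ is a quasi-normal qcb$_0$-space. Using that $\varrho_\IR$ is effectively admissible and that $\delta$ and $[\delta\to\varrho_\IR]$ are quotient representations of $\WX$ and of $\IR^\WX$, two applications of Proposition~\ref{p:effadm:constructionII}(1) show that $[[\delta\to\varrho_\IR]\to\varrho_\IR]$ is effectively admissible for $\IR^{(\IR^\WX)}$; as $\delta^\IR$ is, up to the homeomorphism $\eta_{\WX,\IR}$, the corestriction of this representation to the sequential subspace $\eta_{\WX,\IR}[X]$, Proposition~\ref{p:effadm:constructionI}(6) makes $\delta^\IR$ effectively admissible for $\WX$, whence $\delta\equivcp\delta^\IR$ shows $\delta$ itself is effectively admissible; together with the $T_0$-property this exhibits $\XX$ as an effective qcb-space. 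For (5), let $\varrho_\alpha$ be the Cauchy representation of a computable metric space $(X,d,\alpha)$ with $\{\alpha_i : i\in\IN\}$ dense; then $\tau_{\varrho_\alpha}$ is metrisable, hence $T_0$, and $\varrho_\alpha\leqcp(\varrho_\alpha)^\IR$ by the auxiliary fact. For $(\varrho_\alpha)^\IR\leqcp\varrho_\alpha$ I would use that the functions $z\mapsto d(z,\alpha_i)$ are $(\varrho_\alpha,\varrho_\IR)$-computable uniformly in $i$: from a $(\varrho_\alpha)^\IR$-name $p$ of $x$ one obtains $\varrho_\IR$-names of the distances $d(x,\alpha_i)$ by evaluating $\etaFww(p)$ at $[\varrho_\alpha\to\varrho_\IR]$-names of these functions, and then searching for indices $i$ with $d(x,\alpha_i)<2^{-k}$ (semidecidable from such names, and always eventually found by density) produces a $\varrho_\alpha$-name of $x$. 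Hence $(X,\varrho_\alpha)$ is effectively quasi-normal.

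The step I expect to be most delicate is the key lemma in (3), in particular the direction $[\delta\to\varrho_\IR]\leqcp[\delta^\IR\to\varrho_\IR]$ together with the bookkeeping that pins down $\tau_{\delta^\IR}=\tau_\delta$, so that the two occurrences of the embedding $\eta_{\WX,\IR}$ in the definitions of $\delta^\IR$ and $(\delta^\IR)^\IR$ genuinely agree. Everything else should be a routine transcription of the $\IS$-case, modulo the cited closure properties of (effectively) admissible representations from Subsections~\ref{sub:construction:adm} and~\ref{sub:effadm}.
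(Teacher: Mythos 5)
Your proposal is correct and follows essentially the same route as the paper, which for items (1)--(3) simply cites the general double-dualisation lemmas of \cite{Sch:phd} (Propositions 4.3.5 and Lemma 4.3.6) that treat $\delta\mapsto\delta^\IR$ in strict analogy with $\delta\mapsto\delta^\IS$, and for (4) and (5) invokes the effective admissibility of $\varrho_\IR$ and Example~\ref{ex:cms}. Your write-up supplies the details the paper leaves to the references --- in particular the explicit reduction $(\varrho_\alpha)^\IR\leqcp\varrho_\alpha$ in item (5) --- and these details check out.
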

\noindent
More information can be found in \cite{Sch09b}.


\subsection{Co-Polish Spaces}
\label{sub:coPolish}

An interesting subclass of quasi-normal spaces are \index{co-Polish space}\emph{co-Polish} Hausdorff spaces \cite{dBPSch:ovchoice}.
These are those Hausdorff qcb-spaces which have an admissible representation with a locally compact domain.
M.~de Brecht has proven that co-Polish Hausdorff spaces are exactly those Hausdorff qcb-spaces $\WX$ for which $\calO(\WX)$, the lattice of open subsets equipped with the Scott topology (cf.\ Subsection~\ref{subsub:open}), is quasi-Polish \cite{dBre13}. 
Moreover, whenever $\WX$ is quasi-normal, the $\QCB$-exponential $\IR^\WX$ is Polish if, and only if, $\WX$ is co-Polish.
Both facts motivate the name ``co-Polish''.

Any locally compact separable metrisable space is co-Polish (cf.\ \cite{Sch04}).
D.~Kunkle showed that any Silva space is co-Polish (cf.\ \cite{KS05}).
Silva spaces (cf.\ \cite{Sil55}) form an important subclass of locally convex vector spaces.


\section{Relationship to Other Relevant Categories}
\label{sec:relationship:QCB}

There are several cartesian closed\index{cartesian closed} categories relevant to Computable Analysis into which $\QCBZ$\index{qcb-space} embeds as subcategory that inherits binary products and function spaces.
As examples we discuss represented spaces, equilogical spaces, limit spaces, filter spaces, 
sequential spaces, compactly generated spaces, and core compactly generated spaces.


\subsection{Represented Spaces}
\label{sub:RepN:QCBZ}

Assuming a mild axiom of choice, $\QCBZ$ embeds into $\RepN$ as a subccc (meaning that $\QCBZ$ inherits the cartesian closed structure).
The inclusion functor maps a qcb$_0$-space $\WX$ to $(\WX,\delta_\WX)$, where $\delta_\WX$ is a chosen admissible representation for $\WX$.
This functor has a left adjoint (cf.\ \cite{MS02}): 
it maps a \index{represented space}represented space $(Y,\gamma) \in \RepN$ to the Kolmogorov quotient of the qcb-space $(Y,\tau_\gamma)$, where $\tau_\gamma$ denotes the final topology.
Similarly, $\EffQCB$ is a full reflective subcategory of $\RepEff$;
the reflection functor maps $(Y,\gamma)$ to the effective qcb-space\index{effective qcb-space} $(\KQ(Y,\tau_\gamma),\gamma^\IS)$.

\subsection{Equilogical Spaces}
\label{sub:equilogical}

The category $\Top$ of topological spaces is well-known not to be cartesian closed.
For example, there exists no exponential to the metrisable space $\IR^\IR$ and $\IR$.
Note that the $\QCB$-exponential $\IR^{(\IR^\IR)}$ does not form an exponential in the category $\Top$, because the evaluation function is not topologically continuous (albeit sequentially continuous).
However, there exist cartesian closed supercategories of $\Top$.
We mention M.~Hyland's category of filter spaces (cf.\ \cite{Heck98,Hyl79}) 
and D.~Scott's category $\Equ$ of equilogical spaces (cf.\ \cite{BBS04}) as important examples.

An \emph{equilogical space} is a topological $T_0$-space together with an equivalence relation on that space. 
It is called \emph{countably-based}, if the topological space has a countable base.
An \emph{equivariant} map between two equilogical spaces is a function between the induced equivalence classes of the equilogical spaces that is realized by a continuous function between the underlying topological spaces.
The category $\Equ$ of equilogical spaces and equivariant maps as well as its full subcategory 
$\oEqu$ of countably-based equilogical spaces are locally cartesian closed \cite{Bau00,BBS04}.

Equilogical spaces form a generalisation of represented spaces.
Any represented space $(X,\delta)$ can be seen as a countably-based equilogical space: 
the underlying topological space is just the domain of the representation and the equivalence relation is given by $p\equiv q :\Longleftrightarrow \delta(p)=\delta(q)$.
The ensuing inclusion functor $\RepN \hookrightarrow \oEqu$ has the disadvantage of not preserving function spaces (cf.\ \cite{Bau00,Bau01}).
However, composition with the functor $\QCBZ \hookrightarrow \RepN$ from Subsection~\ref{sub:RepN:QCBZ} yields an inclusion functor $\QCBZ \hookrightarrow \oEqu$ that does preserve binary products and function spaces. 
So $\QCBZ$ lives inside $\oEqu$ as a subccc.
M.~Menni and A.~Simpson have exhibited $\QCBZ$ as the largest common subcategory of $\oEqu$ and $\Top$ that is cartesian closed and contains all $T_0$-spaces with a countable base (cf.\ \cite{MS02}).


\subsection{Limit Spaces}
\label{sub:limitspaces}

A reasonable notion of admissible representation\index{admissible representation} exists for other classes of spaces besides the classes of the topological spaces discussed before.
We concentrate on limit spaces as the most useful example.
However, the largest class of spaces which can be equipped with an admissible multirepresentation is the class of weak limit spaces with a countable limit base (cf.\ \cite{Sch08a}).
The corresponding category $\oWLim$ is equivalent to the category $\oPFil$ of proper filter spaces with a countable basis studied by M.~Hyland in \cite{Hyl79}. 

A \emph{limit space}\index{limit space} (\cite{Kur66}) equips a set $X$ with a \emph{convergence relation} $\To{}$ between sequences $(x_n)_n$ and points $x$ of $X$.
If $(x_n)_n \To{} x$, then one says that $(x_n)_n$ converges to $x$ in the space $(X,\To{})$ and that $x$ is a limit of the sequence $(x_n)_n$.
The convergence relation of a limit space is required to satisfy the following properties:
\begin{enumerate}
 \item
  any constant sequence $(x)_n$ converges to $x$;
 \item
  any subsequence of a sequence converging to $x$ converges to $x$;
 \item
  a sequence converges to $x$, whenever any of its subsequences has a subsequence converging to $x$.
\end{enumerate} 
Clearly, the convergence relation of a topological space satisfies these axioms.

A partial function between limit spaces is called \emph{continuous}, if it preserves convergent sequences (cf.\ Subsection~\ref{sub:SequentialSpaces}).
The category $\Lim$ of limit spaces and total continuous functions is locally cartesian closed, in contrast to the category $\Seq$ of sequential spaces, which is merely cartesian closed.
Like $\Seq$, it is countably complete and countably cocomplete. 
The category $\Seq$ is a full reflective subcategory of $\Lim$  (cf.\ \cite{Hyl79}).
The inclusion functor maps a sequential space $\WY$ to the limit space that has the same convergence relation.
The reflection functor sends a limit space $\WX$ to the sequential space that carries the topology of \emph{sequentially open} subsets of $\WX$; 
these are exactly the complements of subsets closed under forming limits of converging sequences (cf.\ Subsection~\ref{sub:SequentialSpaces}).
If the resulting sequential space induces the same convergence relation as $\WX$, then $\WX$ is called \emph{topological}.

\pagebreak[3]
\begin{example}[A limit space that is not topological]
\label{ex:nontopological} \rm \quad\\ \noindent
 We equip the set $Z:=\{(1,0)\} \cup (\{2\}\times\IN) \cup (\{3\}\times\IN )$ with the following convergence relation $\To{}$:
 \begin{align*}
   (z_n)_n \to (1,0)
  &\;:\Longleftrightarrow\;
     \forall m. \exists \tilde{n}. \forall n \geq \tilde{n}.\,  z_n \in \{(1,0), (2,j) \,|\, j \geq m \}
  \\
   (z_n)_n \to (2,a)
  &\;:\Longleftrightarrow\;
    \forall m. \exists \tilde{n}. \forall n \geq \tilde{n}.\,   z_n \in \{ (2,a), (3,j) \,|\, j \geq m \}
  \\
   (z_n)_n \to (3,a)
  &\;:\Longleftrightarrow\;
    \exists \bar{n}. \forall n \geq \tilde{n}.\, z_n= (3,a)
 \end{align*} 
 The resulting space $\WZ$ is indeed a limit space. 
 Any sequentially open subset $V$ of $\WZ$ with $(1,0) \in V$ contains the sequence $(3,n)_n$ eventually.
 Therefore the reflection functor maps $\WZ$ to a sequential space in which $(3,n)_n$ converges to $(1,0)$.
 Hence $\WZ$ is not topological. 
\end{example}

A multirepresentation\index{multirepresentation} $\delta$ for a limit space $\WX$ is called \emph{continuous},
if for any sequence $(p_n,x_n)_n$ in $\Graph(\delta)$ such that $(p_n)_n$ converges to $p_0$ in $\Nomega$ the sequence $(x_n)_n$ converges to $x_0$ in $\WX$.
It is called \emph{admissible}\index{admissible representation} for $\WX$, 
if it is continuous and every continuous multirepresentation $\phi$ of $\WX$ is continuously translatable into $\delta$, meaning that the identity function is $(\phi,\delta)$-continuous.
If two elements $x \neq x'$ have a common name under a continuous multirepresentation, then both elements converge to each other as constant sequences.
An admissible multirepresentation for $\WX$ is single-valued (i.e. an ordinary representation) if, and only if, no such elements $x \neq x'$ exist in $\WX$.
This notion of admissibility extends topological admissibility from Definition~\ref{def:admissible:Top} in the sense that a representation is topologically admissible for a qcb$_0$-space $\WY$ if, and only if, it is admissible for $\WY$ viewed as a limit space.

\pagebreak[3]
\begin{example}[$\Lim$-admissible representation] \rm \quad\\ \noindent
 An admissible representation $\zeta$ for the limit space $\WZ$ in Example~\ref{ex:nontopological} is given by
 $\zeta\big( \w{0}^\omega \big):=(1,0),\,
  \zeta\big( \w{0}\w{0}^a\w{1}^\omega \big):=(2,a)=:\zeta\big( (a+1)\w{0}^\omega  \big),\,
  \zeta\big( (a+1) \w{0}^b\w{1}^\omega \big) := (3,b)
 $
 for all $a,b \in \IN$. 
\end{example}

Like $\Lim$, the full subcategory $\oLim$ of limit spaces with an admissible multirepresentation is locally cartesian closed, countably complete and countably cocomplete.
The category $\QCB$ is a full reflective subcategory of $\oLim$; the corresponding reflection functor is just the restriction of the aforementioned functor from $\Lim$ to $\Seq$.
On the other hand, $\oLim$ is a full reflective subcategory of M.~Hyland's category $\oPFil$ of countably-based proper filter spaces.

We now discuss the $\Lambda$-space which plays the role of the Sierpi{\'n}ski space in the realm of limit spaces.

\pagebreak[3]
\begin{example}[The $\Lambda$-space] \rm \quad\\ \noindent
 The \emph{$\Lambda$-space} $\IL$ has as its underlying set $\{\bot,\top,{\uparrow}\}$.
 Its convergence relation is defined by: $(b_n)_n \To{\IL} b_\infty$ iff $b_\infty \neq \top$ or
 $b_n \neq \bot$ for almost all $n$.
 Hence the restriction of $\IL$ to $\{\bot,\top\}$ is the limit space version of the Sierpi{\'n}ski space $\IS$.
 An admissible multirepresentation $\varrho_\IL$ for $\IL$ is defined by
 \[
  \varrho_\IL[p]:=\left\{
  \begin{array}{cl}
   \{ \bot,{\uparrow} \} & \text{if $p=\w{0}^\omega$}
   \\
   \{ \top,{\uparrow} \} & \text{otherwise}
  \end{array} \right.
 \]
 for all $p \in \Nomega$.
 The $\Lambda$-space has the property that any limit space embeds into $\IL^{\IL^\WX}$ via the map $\eta_{\WX,\IL}\colon x \mapsto (h \mapsto h(x))$.
 Remember that any sequential space $\WY$ embeds sequentially into $\IS^{\IS^\WY}$ via an analogous map (cf.\ Proposition~\ref{p:X:embeds:into:OOX}).  
\end{example}

The $\Lambda$-space can be employed to introduce the notion of an \emph{effectively $\Lim$-admissible (multi-)representation} for a limit space.
The definition uses ideas similar to the ones in Subsection~\ref{sub:effadm}.
If $\delta$ lifts convergent sequences of $\WX$, then $\delta^\IL$ defined by
\[
 \delta^\IL[p] \ni x \;:\Longleftrightarrow\; \eta_{\WX,\IL}(x) \in [[\delta\to\varrho_\IL]\to \varrho_\IL][p]
\]
is an admissible multirepresentation for $\WX$.
A multirepresented space $\XX=(X,\delta)$ is called an \emph{effective limit space}\index{effective limit space},
if $\delta$ is computably equivalent to $\delta^\IL$ (cf.\ \cite{Sch:phd,Sch02b}).

\pagebreak[3]
\begin{example}[Effective limit space] \rm \quad\\ \noindent
 For effective limit spaces $\XX,\YY$, the space $\PP(\XX,\YY)$ of partial continuous functions from $\XX$ to $\YY$ from Example~\ref{ex:spacePP} is an effective limit space.
 But in general it is not an effective qcb-space (even if one allowed multirepresentations in the notion of an effective qcb-space).
\end{example}

A further advantage of effective limit spaces over effective qcb-spaces is that the operator $\delta \mapsto \delta^\IL$ preserves computability even of partial functions.

\pagebreak[3]
\begin{proposition}
\label{p:delta^IL:properties}
 Let $\delta$ and $\gamma$ be multirepresentations for limit spaces $\WX,\WY$ lifting converging sequences.
\begin{enumerate}
 \item
  Any $(\delta,\gamma)$-computable partial function is $(\delta^\IL,\gamma^\IL)$-computable.
 \item 
  Any $(\delta,\gamma)$-continuous partial function is $(\delta^\IL,\gamma^\IL)$-continuous.
 \item
  The multirepresentations $\delta^\IL$ and $\gamma^\IL$ are effectively $\Lim$-admissible.
 \item 
  Any effective qcb-space constitutes an effective limit space.  
\end{enumerate}
\end{proposition}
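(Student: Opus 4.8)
The plan is to mirror the treatment of the operator $\delta\mapsto\delta^\IS$ from Subsections~\ref{sub:qcb:admissible} and~\ref{sub:effadm}, replacing the Sierpi{\'n}ski space throughout by the $\Lambda$-space $\IL$. Three ingredients carry the argument: the universal embedding $\eta_{\WX,\IL}\colon\WX\to\IL^{\IL^\WX}$, $x\mapsto(h\mapsto h(x))$; computability of evaluation together with the computable utm- and smn-theorems for $\etaFww$ (Proposition~\ref{p:etaFww}) and their continuous counterparts; and the remark that a $\delta^\IL$-name of $x$ is, by definition, nothing but a device which, fed a $[\delta\to\varrho_\IL]$-name of an arbitrary $h\in\IL^\WX$, returns a $\varrho_\IL$-name of $h(x)$. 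The one genuinely new feature relative to the $\IS$-case is the third point ${\uparrow}\in\IL$: it is a limit of \emph{every} sequence in $\IL$, and \emph{every} $p\in\Nomega$ is a $\varrho_\IL$-name of ${\uparrow}$. This is what lets the construction swallow partiality.

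For (1) and (2), let $f\colon\WX\parto\WY$ be $(\delta,\gamma)$-computable with realizer $g$. To realize $f$ as a map $(\WX,\delta^\IL)\parto(\WY,\gamma^\IL)$ I would proceed as follows: given a $\delta^\IL$-name $p$ of some $x\in\dom(f)$, I must output a realizer of $\eta_{\WY,\IL}(f(x))\colon\IL^\WY\to\IL$ with respect to $([\gamma\to\varrho_\IL],\varrho_\IL)$. So let $n$ be a $[\gamma\to\varrho_\IL]$-name of some $k\in\IL^\WY$. First compute a $[\delta\to\varrho_\IL]$-name of the map $k^{(f)}\colon\WX\to\IL$ that equals $k\circ f$ on $\dom(f)$ and takes the value ${\uparrow}$ elsewhere: a realizer of $k^{(f)}$ with respect to $(\delta,\varrho_\IL)$ is $r\mapsto e(n,g(r))$, where $e$ realizes evaluation $\IL^\WY\times\WY\to\IL$, its output on $\delta$-names of points outside $\dom(f)$ being immaterial since any string names ${\uparrow}$. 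Then $\etaFww(p)$ applied to this name yields a $\varrho_\IL$-name of $k^{(f)}(x)=k(f(x))=\eta_{\WY,\IL}(f(x))(k)$. Currying through the computable smn-theorem turns the whole procedure into a realizer of $f$; part (2) is word for word the same with ``continuous'' in place of ``computable'' and the continuous smn-theorem. I expect the main obstacle to be the verification that $k^{(f)}$ really is a (sequentially) continuous total map $\WX\to\IL$ admitting such a name: continuity of the extension-by-${\uparrow}$ uses the limit-space axioms --- the image sequence eventually avoids $\bot$ exactly when it must, because every subsequence of a convergent sequence converges --- and it is precisely here that $\IL$ does what $\IS$ cannot, which is why $(-)^\IL$, unlike $(-)^\IS$, preserves computability of partial functions.

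For (3), I would establish two facts valid for an arbitrary multirepresentation $\epsilon$: that $\epsilon\leqcp\epsilon^\IL$ and that $(\epsilon^\IL)^\IL\leqcp\epsilon^\IL$. The first holds because, given $r\in\dom(\epsilon)$ naming $y$, currying the computable map $(r,m)\mapsto\etaFww(m)(r)$ in $m$ produces an $\epsilon^\IL$-name of $\eta_{\WX,\IL}(y)$, which lies in $\IL^{\IL^\WX}$. The second holds because a $[\epsilon\to\varrho_\IL]$-name $m$ of some $h\in\IL^\WX$ is uniformly convertible into a $[\epsilon^\IL\to\varrho_\IL]$-name of the same $h$, namely the $m'$ with $\etaFww(m')(s):=\etaFww(s)(m)$ --- an $\epsilon^\IL$-name $s$ of $y$ evaluates $h$ at $y$ --- so that pre-composing this conversion with a $(\epsilon^\IL)^\IL$-name of $x$, applied through $\etaFww$, gives an $\epsilon^\IL$-name of $x$. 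Applying the first fact to $\delta^\IL$ and $\gamma^\IL$ and the second to $\delta$ and $\gamma$ yields $\delta^\IL\equivcp(\delta^\IL)^\IL$ and $\gamma^\IL\equivcp(\gamma^\IL)^\IL$; since $\delta$ and $\gamma$ lift converging sequences, $\delta^\IL$ and $\gamma^\IL$ are also $\Lim$-admissible by the fact recalled just before the proposition, hence effectively $\Lim$-admissible.

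For (4), let $(X,\delta)$ be an effective qcb-space, so $\delta\equivcp\delta^\IS$, and view $(X,\tau_\delta)$ as a limit space via its topological convergence. By (3) one has $\delta\leqcp\delta^\IL$, and it remains to show $\delta^\IL\leqcp\delta^\IS$. The identity on $\Nomega$ translates $\varrho_\IS$ into $\varrho_\IL$ --- the $\varrho_\IL$-names of $\bot$, respectively of $\top$, are exactly the $\varrho_\IS$-names of $\bot$, respectively of $\top$ --- so the inclusion $\IS^\WX\hookrightarrow\IL^\WX$ is realized by the identity. Consequently a $\delta^\IL$-name of $x$, which encodes $\eta_{\WX,\IL}(x)$ on the whole of $\IL^\WX$, in particular encodes $\eta_\WX(x)$ on $\IS^\WX$, where the two maps agree, and the $\varrho_\IL$-names it returns there are already $\varrho_\IS$-names; currying turns this into a $\delta^\IS$-name of $x$. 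Hence $\delta\leqcp\delta^\IL\leqcp\delta^\IS\equivcp\delta$, so $\delta\equivcp\delta^\IL$ and $(X,\delta)$ is an effective limit space.
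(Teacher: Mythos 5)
Your proposal is correct and follows essentially the route the paper takes (the paper itself only cites \cite{Sch:phd,Sch02b} here, but the construction is the exact $\IL$-analogue of the $\delta^\IS$-arguments referenced for Propositions~\ref{p:delta^IS:properties} and~\ref{p:delta^IR:properties}): reduce everything to the universal functional $\eta_{\WX,\IL}$, use the utm/smn machinery of Proposition~\ref{p:etaFww}, and let the third point ${\uparrow}$ absorb partiality. Parts (3) and (4) are fine as written; in particular the observation that the identity on $\Nomega$ translates $\varrho_\IS$-names into $\varrho_\IL$-names, so that a $\delta^\IL$-name already answers all queries $U\in\IS^\WX$ with answers that are simultaneously $\varrho_\IS$-names, is exactly the right way to get $\delta^\IL\leqcp\delta^\IS$.

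One step in (1)/(2) needs repair, and it sits precisely at the novel point of the statement. You claim that $r\mapsto e(n,g(r))$ realizes the totalization $k^{(f)}$ ``since any string names ${\uparrow}$''. But any \emph{infinite} string names ${\uparrow}$; on a $\delta$-name $r$ of a point outside $\dom(f)$ the realizer $g$ may diverge, i.e.\ produce only a finite output, and then $e(n,g(r))$ is undefined, so the composite does not realize the \emph{total} function $k^{(f)}$ and hence does not yield a legitimate $[\delta\to\varrho_\IL]$-name. The fix is the standard padding trick, and it works only because of the specific shape of $\varrho_\IL$: simulate $e(n,g(r))$ and emit a $\w{0}$ whenever no new output digit is available, then copy the genuine digits as they arrive. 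Since $\varrho_\IL[q]$ depends only on whether $q=\w{0}^\omega$, prepending zeros does not change which of $\{\bot,{\uparrow}\}$ or $\{\top,{\uparrow}\}$ is named, and a forever-stalled simulation yields $\w{0}^\omega$ or an eventually constant tail, both of which name ${\uparrow}=k^{(f)}(x')$ as required. With this emendation your realizer of $k^{(f)}$ is total on $\dom(\delta)$, $k^{(f)}$ is indeed an element of the carrier of $[\delta\to\varrho_\IL]$, and the rest of your argument goes through; this is also exactly the place where the construction fails for $\varrho_\IS$, confirming your closing remark.
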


The category $\EffLim$ of effective limit spaces and total computable functions is locally cartesian closed and has finite limits and finite colimits.
It contains $\EffQCB$ as subccc.

More information can be found in \cite{Hyl79,LN15,MS02,Sch:phd,Sch02b}.


\subsection{Cartesian Closed Subcategories of Topological Spaces}
\label{sub:cccTop}

Although the category $\Top$ of topological spaces\index{topological space} is not cartesian closed,
it contains various subcategories which are cartesian closed.
Several of them were investigated by M.~Escard\'{o}, J.~Lawson and A.~Simpson in~\cite{ELS04}.
Besides the category $\Seq$ of \index{sequential space}sequential spaces, we mention the categories $\kTop$ of compactly generated spaces and $\Ccg$ of core compactly generated spaces.

A topological space $\WZ$ is called \emph{compactly generated} (or a \emph{Kelley space}), if any subset $U$ is open in $\WZ$, whenever $p^{-1}[U]$ is open in $\WK$ for every compact Hausdorff space $\WK$ and every continuous function $p\colon \WK \to \WZ$.
For example, any directed-complete poset endowed with the Scott topology is compactly generated.
Compactly generated Hausdorff spaces are known as \emph{k-spaces}
and play an important role in algebraic topology.

Even larger is the category $\Ccg$ of core compactly generated spaces.
\emph{Core compactly generated spaces} arise as all topological quotients of core compact (= exponentiable) topological spaces.
None of the inclusion functors of these categories into $\Top$ preserves finite products. 
The category $\QCB$ of qcb-spaces lives inside $\Seq$, $\kTop$ and $\Ccg$ as a subccc. 
In fact, $\QCB$, $\Seq$, $\kTop$ and $\Ccg$ form an increasing chain of cartesian closed categories such that 
any smaller category inherits binary products and functions spaces from any larger one.
Details can be found in~\cite{ELS04}.

\subsubsection*{Acknowledgement}
I thank Vasco Brattka, Matthew de Brecht, Peter Hertling and Thomas Strei\-cher for valuable discussions.




\section{Attachment}

We present proofs or references to proofs for most of the preceeding results.
This attachment will not appear in the handbook version of this survey.

\subsubsection*{Proofs for Section 2}

\begin{proofof}{Proposition~\ref{p:composition:computable}}
 See \cite[Theorems 2.1.12 and 2.1.13]{Wei00} or \cite{Wei87}.
\end{proofof}

\begin{proofof}{Example~\ref{ex:comp:realfunctions}}
 See \cite[Section 4.3]{Wei00} or use N.~M\"uller's $\mathsf{iRRAM}$ \cite{Mue01}.
\end{proofof}

\begin{proofof}{Proposition~\ref{p:translatable}}
 See \cite[Theorem 3.1.8]{Wei00} or \cite{Wei87}.
\end{proofof}

\subsubsection*{Proofs for Section 3}

\begin{proofof}{Proposition~\ref{p:contrel2topcontinuous}}
 This follows from \cite[Lemma 2.2.3 and Theorem 2.2.10(b)]{Sch:phd}.
 For a direct proof, let $g$ be a continuous $(\delta_\XX,\delta_\YY)$--realiser for $f$ 
 and let $V \in \tau_{\delta_\YY}$.
 Since $\delta_\XX^{-1}[f^{-1}[V]]$ is open in $\dom(\delta_\XX)$ 
 by being equal to the preimage $g^{-1}[\delta_\YY^{-1}[V]]$,
 $f^{-1}[V]$ is open w.r.t.\ the final topology of $\delta_\XX$.
 So $f$ is topologically continuous.
\end{proofof}

\begin{proofof}{Proposition~\ref{p:comprel2topcontinuous}}
 See Propositions~\ref{p:ComputableImpliesContinuous}
 and~\ref{p:contrel2topcontinuous}.
\end{proofof}

\begin{proofof}{Example~\ref{ex:finaltopology:sequential}}
 Sequential spaces are closed under forming topological quotients,
 cf.\ \cite[Exercise 2.4.G]{Eng89}.
\end{proofof}

\begin{proofof}{Proposition~\ref{p:adm:leqtmaximal}}
 See \cite[Proposition 2.3.4]{Sch:phd}. 
\end{proofof}

\begin{proofof}{Theorem~\ref{th:adm:relcont=cont}}
 See \cite[Theorem 4]{Sch02} or \cite[Theorem 2.3.18]{Sch:phd}.
\end{proofof}

\begin{proofof}{Theorem~\ref{th:maintheo}}
 See \cite[Propositions 2.4.34 and 2.3.2]{Sch:phd}.
\end{proofof}

\begin{proofof}{Theorem~\ref{th:maintheo:multivariate}}
 See \cite[Propositions 2.4.32 and 2.3.2]{Sch:phd}. 
\end{proofof}

\begin{proofof}{Proposition~\ref{p:adm:quotient}}
\begin{enumerate}
  \item See \cite[Proposition 2.3.2]{Sch:phd}.
  \item By \cite[Exercise 2.4.G]{Eng89} and \cite[Proposition 2.3.2]{Sch:phd}.
  \item By (2) and \cite[Lemma 2.2.7(5)]{Sch:phd} or \cite[Lemma 6]{Sch02}.
\end{enumerate}
\end{proofof}

\begin{proofof}{Example~\ref{ex:sdrep:adm}}
 The Cauchy representation for $\IR$ is admissible by \cite[Lemma 4.1.6 and 4.1.19]{Wei00}. 
 The signed-digit representation is admissible by \cite[Example 2.3.8]{Sch:phd}.
 The decimal representation is not admissible by \cite[Theorem 4.1.13(6)]{Wei00}.
 This fact was first observed by C.~Kreitz and K.~Weihrauch in \cite[Section~4]{KW85}.
\end{proofof}

\begin{proofof}{Proposition~\ref{p:prodrep:adm}}
 See \cite[Propositions 4.1.8 and 4.5.10]{Sch:phd}.
 For binary products see also \cite[Section 4.3]{Sch02}.
\end{proofof}

\begin{proofof}{Proposition~\ref{p:funcrep:adm}}
\begin{enumerate}
  \item See \cite[Proposition 4.2.5(4)]{Sch:phd}.
  \item See \cite[Proposition 4.2.5(3)]{Sch:phd}.
\end{enumerate}
\end{proofof}

\begin{proofof}{Proposition~\ref{p:seqembedding:adm}}
\begin{enumerate}
  \item This follows from (2). 
  \item See \cite[Proposition 4.1.6]{Sch:phd}.
\end{enumerate}
\end{proofof}

\begin{proofof}{Example~\ref{ex:adm2psb}}
 See \cite[Lemma 11]{Sch02}.
\end{proofof}

\begin{proofof}{Proposition~\ref{p:adm:pseudobase}}
\begin{enumerate}
  \item See \cite[Proposition 13]{Sch02} or \cite[3.1.15]{Sch:phd}.
  \item By (1) and Proposition~\ref{p:adm:quotient}(2).
\end{enumerate}
\end{proofof}

\subsubsection*{Proofs for Section 4}

\begin{proofof}{Theorem~\ref{th:qcb:ccc}}
 See \cite[Theorems 3.2.4, 4.2.6 and Sections 4.1.6, 4.5.5]{Sch:phd},
 where $\QCB$ is denoted by $\mathsf{AdmSeq}$,
 or \cite[Corollary 7.3 and Remark 7.4]{ELS04}
 or:
\begin{enumerate}
  \item Cartesian closedness: see Example~\ref{ex:construction:qcb}(1),(3).
  \item Countable completeness: see Example~\ref{ex:construction:qcb}(6).
  \item Countable co-completeness: see Example~\ref{ex:construction:qcb}(13).
\end{enumerate}
\end{proofof}

\begin{proofof}{Example~\ref{ex:construction:qcb}}
 See \cite[Sections 4.1, 4.2, 4.5]{Sch:phd} or \cite[Section 7]{ELS04}.
\end{proofof}

\noindent
Many of the following proofs are based on the fact that the following multivariate functions 
on the Sierpi{\'n}ski space~$\IS$ (see Example~\ref{ex:Sierpinski}(1))
\begin{enumerate}
  \item 
   $\wedge\colon \IS \times \IS \to \IS$ 
   defined by $\wedge(a,b)=\top :\Longleftrightarrow a=b=\top$,
  \item 
   $\vee\colon\IS \times \IS \to \IS$ 
   defined by $\vee(a,b)=\top :\Longleftrightarrow (a=\top \text{ or } b= \top)$,   
  \item
  $\bigvee\colon \IS^\IN \to \IS$
  defined by 
  $\bigvee((b_n)_n)=\top :\Longleftrightarrow \exists i\in \IN.\, b_i=\top$.
\end{enumerate}
are topologically continuous and computable with respect to the standard admissible representation $\varrho_\IS$ for $\IS$.

\medskip

\begin{proofof}{Proposition~\ref{p:calO}}
 For (1)--(4) see \cite[Proposition 4.1]{dBSS16}.
 Items (5) and (6) follow from the facts that 
 $\wedge\colon \IS \times \IS \to \IS$ and $\bigvee\colon \IS^\IN \to \IS$
 are continuous (even computable w.r.t.\ $\varrho_\IS$)
 and that $\QCB$ is cartesian closed.
\end{proofof}

\begin{proofof}{Proposition~\ref{p:X:embeds:into:OOX}}
 See \cite[Lemma 4.4.8(2) and Proposition 4.3.1]{Sch:phd}.
\end{proofof}

\begin{proofof}{Propositions~\ref{p:calV:QCB} and~\ref{p:calAp:QCB}}
\begin{enumerate}
 \item See \cite[Proposition 4.4.5]{Sch:phd}.
 \item See \cite[Lemma 4.4.4(1)]{Sch:phd}.
 \item 
  Because $\vee\colon \IS \times \IS \to \IS$ is continuous and $\QCB$ is cartesian closed.
 \item 
  Define $A_n:=\{2^{-n},1\}$, $B_n:=\{-2^{-n},1\}$, $A_\infty=B_\infty:=\{0,1\}$.
  Then $(A_n)_n$ converges to $A_\infty$, $(B_n)_n$ converges to $B_\infty$
  in $\calV(\WX)$ and in $\calAp(\WX)$,
  but $(A_n \cap B_n)_n$ does not converge to $A_\infty \cap B_\infty$
  in  $\calV(\WX)$ nor in $\calAp(\WX)$.
 \item
  Because for every subset $M$ and every open set $U$ 
  we have $M \cap U \neq \emptyset \Longleftrightarrow \Cls(M) \cap U \neq \emptyset$.
\end{enumerate}
\end{proofof}

\begin{proofof}{Proposition~\ref{p:upvietoris:qcb}}
\begin{enumerate}
  \item 
   This follows from (3), Proposition~\ref{p:calV:QCB}(1) and Example~\ref{ex:construction:qcb}.
  \item See \cite[Lemma 4.4.8(2)]{Sch:phd}.
  \item See \cite[Proposition 4.4.9(2)]{Sch:phd}.
  \item 
   Because $\eta^\Box_\WX(K_1 \cup K_2)(U)=\eta^\Box_\WX(K_1)(U) \wedge \eta^\Box_\WX(K_2)(U)$,
   $\wedge\colon \IS \times \IS \to \IS$ is continuous, 
   $\QCB$ is cartesian closed
   and binary union on $\calV(\WX)$ is sequentially continuous.
\end{enumerate}
\end{proofof}

\begin{proofof}{Theorem~\ref{th:deltaIS:admissible}}
 See \cite[Proposition 4.3.2(3) and Lemma 2.4.17]{Sch:phd}.  
\end{proofof}

\begin{proofof}{Theorem~\ref{th:charac:deltaIS:admissible}}
 See Theorem~\ref{th:deltaIS:admissible} and Proposition~\ref{p:translatable}(5).
\end{proofof}

\begin{proofof}{Theorem~\ref{th:qcb:admissible}}
\begin{enumerate}
  \item See \cite[Theorem 3.2.4(2)]{Sch:phd}.
  \item See (1) and Proposition~\ref{p:adm:quotient}(2). 
  \item See (1) and Proposition~\ref{p:adm:quotient}(3). 
\end{enumerate}
\end{proofof}

\begin{proofof}{Proposition~\ref{p:delta^IS:properties}}
\begin{enumerate}
  \item See \cite[Proposition 4.3.5(3)]{Sch:phd}.
  \item See \cite[Proposition 4.3.5(2)]{Sch:phd}.  
  \item See \cite[Lemma 4.3.6(2)]{Sch:phd}.
  \item 
   This follows from \cite[Lemma 4.1.6, Theorem 7.2.5]{Wei00},         
   \cite[Proposition 4.3.16]{Sch:phd} and Proposition~\ref{p:translatable}(1).
\end{enumerate}
\end{proofof}

\begin{proofof}{Proposition~\ref{p:effadm:constructionI}}
 See \cite[Proposition 4.3.14 and Lemma 4.5.12]{Sch:phd}.
\end{proofof}

\begin{proofof}{Proposition~\ref{p:effadm:constructionII}}
\begin{enumerate}
  \item See \cite[Propositions 4.3.13 and 4.2.5(4)]{Sch:phd}.
  \item See \cite[Propositions 4.5.1(3) and 4.3.6(2)]{Sch:phd}.
  \item See \cite[Propositions 4.4.1 and 4.3.6(2)]{Sch:phd}. 
  \item See \cite[Propositions 4.4.1 and 4.3.6(2)]{Sch:phd}.   
  \item See \cite[Propositions 4.4.5(2) and 4.3.6(2)]{Sch:phd}.     
  \item See \cite[Propositions 4.4.9(1),(3) and 4.3.6(2)]{Sch:phd}. 
  \item See \cite[Propositions 4.4.9(2) and 4.3.6(2)]{Sch:phd}.   
\end{enumerate}
\end{proofof}

\begin{proofof}{Example~\ref{ex:EffQCB}}
  This can be deduced from \cite[Proposition 4.3.16]{Sch:phd}.
  For (2) and (3) see also Proposition~\ref{p:delta^IS:properties}(4).
  For (5) see \cite[Lemma 4.3.6(3)]{Sch:phd}.
\end{proofof}

\begin{proofof}{Proposition~\ref{p:construction:effqcb}}
 See Propositions~\ref{p:effadm:constructionI} and~\ref{p:effadm:constructionII}.
\end{proofof}

\begin{proofof}{Theorem~\ref{th:effqcb:ccc}}
 See Propositions~\ref{p:effadm:constructionI} 
 and~\ref{p:effadm:constructionII}(2).
\end{proofof}

\begin{proofof}{Proposition~\ref{p:effqcb:compfunctions}}
  See \cite[Propositions 1, 3, 6, 11, 21, 40, 42]{Pau16}.
\end{proofof}

\begin{proofof}{Proposition~\ref{p:computable:effseqHaus}}
  See \cite[Proposition 14]{Pau16}.
\end{proofof}

\begin{proofof}{Theorem~\ref{th:QN:closure:properties}}
  See \cite[Theorem 6]{Sch09b}.
\end{proofof}

\begin{proofof}{Theorem~\ref{th:QN:uniform:Extension}}
  This follows from \cite[Theorem 13]{Sch09b} 
  and the fact that $\QN$ is cartesian closed 
  by Theorem~\ref{th:QN:closure:properties}.
\end{proofof}

\begin{proofof}{Proposition~\ref{p:delta^IR:properties}}
\begin{enumerate}
  \item See \cite[Proposition 4.3.5(3)]{Sch:phd}.
  \item See \cite[Proposition 4.3.5(2)]{Sch:phd}.  
  \item See \cite[Lemma 4.3.6(2)]{Sch:phd}.
  \item 
   This follows from the fact that the Cauchy representation $\varrho_\IR$
   for the reals is effectively admissible by Proposition~\ref{p:delta^IS:properties}(4).
  \item 
   This follows from Example~\ref{ex:cms}.
\end{enumerate}
\end{proofof}

\end{document}